\documentclass[12pt,draftcls,onecolumn,web,confmode]{ieeecolor}
\usepackage{graphicx}

\usepackage{booktabs} 

\usepackage{hyperref}
\usepackage{url}
\usepackage[english]{babel}
\usepackage{algorithm}
\usepackage[noend]{algpseudocode}
\usepackage{amssymb,amsmath}
\usepackage{subcaption}
\usepackage{graphicx}
\usepackage{comment}
\usepackage{thmtools, thm-restate}
\usepackage{multirow}
\usepackage{paralist}

\usepackage{chngcntr}

\usepackage{xcolor} 
\definecolor{ocre}{RGB}{243,102,25} 

\usepackage{avant} 
\usepackage{mathptmx} 

\usepackage{microtype} 
\usepackage[utf8x]{inputenc} 
\usepackage[T1]{fontenc} 



\usepackage{tikz}
\usetikzlibrary{shapes}
\RequirePackage{ifthen}
\RequirePackage{amsmath}
\RequirePackage{amssymb}
\RequirePackage{xspace}
\RequirePackage{graphics}
\usepackage{color}
\RequirePackage{textcomp}
\usepackage{keyval}
\usepackage{xspace}
\usepackage{mathrsfs}

\usepackage{enumitem}

\newcommand{\sayan}[1]{\textcolor{blue}{#1}}
\newcommand{\hussein}[1]{\textcolor{red}{#1}}

\newcommand{\reals}{{\mathbb{R}}} 

\newcommand{\nnreals}{{\mathbb{R}^+}}
\newcommand{\dom}{\relax\ifmmode {\sf dom} \else ${\sf dom}$\fi} 

\def\A{{\mathcal{A}}} 

\newcommand{\dur}{\mathit{dur}} 
\newcommand{\Reach}{\mathtt{Reach}} 
\newcommand{\init}{\mathit{init}} 

\newcommand{\Rtube}{\mathtt{Reach}} 
\newcommand{\ARtube}{\mathtt{Reach}} 
\newcommand{\AReach}{\mathtt{Reach}}  

\newcommand{\ha}{\mathcal{A}}
\newcommand{\hb}{\mathcal{B}}

\newcommand{\Symcache}{\textsc{Symcache}}
\newcommand{\Nosym}{\textsc{Nosym}}
\newcommand{\Symvir}{\textsc{SymVir}}

\newcommand{\SC}{\textsc{SC}}
\newcommand{\NS}{\textsc{NS}}
\newcommand{\SV}{\textsc{SV}}

\newcommand{\node}{\mathit{node}}

\newcommand{\reachset}{\mathtt{Reach}}

\newcommand{\concat}{\bigwedge} 

\newcommand{\disctrans}{D}
\newcommand{\edgeset}{E}
\newcommand{\edgeinstance}{e}
\newcommand{\stateset}{X}
\newcommand{\stateinstance}{x}

\newcommand{\exec}{\texttt{Exec}}

\newcommand{\safetycache}{\mathit{safetycache}}
\newcommand{\symcompute}{\mathtt{symComputeReachtube}} 

\newcommand{\safe}{\mathit{safe}}

\newcommand{\state}{{\bf x}}
\newcommand{\statey}{{\bf y}}
\newcommand{\rtube}{\mathit{rtube}}
\newcommand{\ourtacastool}{\sf{CacheReach}}
\newcommand{\ourtool}{\sf{VirReach}}

\newcommand{\modeinitset}{K}
\newcommand{\modereachset}{R}
\newcommand{\initset}{X_{\mathit{init}}}
\newcommand{\initmode}{p_{\mathit{init}}}
\newcommand{\initsetf}{X_{\mathit{init,1}}}
\newcommand{\initmodef}{p_{\mathit{init,1}}}
\newcommand{\initsets}{X_{\mathit{init,2}}}
\newcommand{\initmodes}{p_{\mathit{init,2}}}
\newcommand{\initsetv}{X_{\mathit{init,v}}}
\newcommand{\initmodev}{p_{\mathit{init,v}}}
\newcommand{\initsetwp}{X_{\mathit{init,W}}}

\newcommand{\wpha}{W} 
\newcommand{\roha}{R} 
\newcommand{\origtrans}{\mathit{ot}}
\newcommand{\coortrans}{\mathit{ct}}
\newcommand{\fsr}{\mathcal{R}_{\mathit{rv}}}

\newcommand{\guard}{\mathit{guard}}
\newcommand{\guardsingle}{fset}
\newcommand{\resetsingle}{sset}
\newcommand{\reset}{\mathit{reset}}
\newcommand{\ltube}{\mathit{len}}

\newcommand{\ftime}{\mathit{ftime}}
\newcommand{\etime}{\mathit{etime}}
\newcommand{\fstate}{\mathit{fstate}}
\newcommand{\lstate}{\mathit{lstate}}
\newcommand{\resrelguard}{\mathit{relguard}}
\renewcommand{\path}{\mathit{pseq}}
\newcommand{\pathset}{\mathtt{Paths}}
\newcommand{\pathinstance}{\mathit{path}}
\newcommand{\timebound}{\mathit{tbound}}
\newcommand{\TR}{\mathit{TR}}
\newcommand{\expth}{\mathit{th}}
\newcommand{\expst}{\mathit{st}}
\newcommand{\expnd}{\mathit{nd}}
\newcommand{\exprd}{\mathit{rd}}

\setlength{\belowcaptionskip}{-10pt}


\linespread{1}

\usepackage{setspace}
\usepackage{listings}
\usepackage{hyperref}
\hypersetup{
	pdftitle={Modeling and Verification for CPS},
	pdfauthor={Sayan Mitra},
	colorlinks=true,
	citecolor={blue},
	linkcolor = {blue},
	pagecolor={blue},
	backref={true},
	bookmarks=true,
	bookmarksopen=false,
	bookmarksnumbered=true
}

\newcounter{theorems}
\newtheorem{theorem}[theorems]{Theorem}
\newtheorem{corollary}[theorems]{Corollary}
\newtheorem{lemma}[theorems]{Lemma}
\newcounter{remarks}

\newtheorem{remark}[remarks]{Remark}
\newcounter{examples}
\newtheorem{example}[examples]{Example}
\newtheorem{definition}[remarks]{Definition}
\newtheorem{proposition}[remarks]{Proposition}


\usepackage[normalem]{ulem}

\newcommand{\computereachset}{{\sf computeReachset}}

\newcommand{\unboundedverif}{{\sf unboundedVerif}}

\newcommand{\computerealfromvirtual}{{\sf computeRealSetfromVirtualSet}}

\newcommand{\permodedict}{{\mathit{Dictionary}}}
\newcommand{\inmodes}{\mathit{inModes}}

\newcommand{\globalR}{\mathit{globalR}}

\newcommand{\checkFixedPoint}{\mathit{checkFixedPoint}}

\newcommand{\src}{\mathit{src}}
\newcommand{\dest}{\mathit{dest}}

\newcommand{\pin}{P_\mathit{in}}
\newcommand{\pd}{P_\mathit{d}}
\newcommand{\pout}{P_{\mathit{out}}}

\newcommand{\pdin}{p_\mathit{d}}

\usepackage{tikz}
\usetikzlibrary{fit}
\usepackage{twoopt}

\newsavebox\IBoxA \newsavebox\IBoxB \newlength\IHeight
\newcommand\TwoFig[6]{
	\sbox\IBoxA{\includegraphics[width=0.45\textwidth]{#1}}
	\sbox\IBoxB{\includegraphics[width=0.45\textwidth]{#4}}%
	\ifdim\ht\IBoxA>\ht\IBoxB
	\setlength\IHeight{\ht\IBoxB}%
	\else\setlength\IHeight{\ht\IBoxA}\fi
	\begin{figure}[!htb]
		\minipage[t]{0.45\textwidth}\centering
		\includegraphics[height=\IHeight]{#1}
		\caption{#2}\label{#3}
		\endminipage\hfill
		\minipage[t]{0.45\textwidth}\centering
		\includegraphics[height=\IHeight]{#4}
		\caption{#5}\label{#6}
		\endminipage 
	\end{figure}%
}

\usepackage{framed}
\definecolor{light-gray}{gray}{0.95}
\colorlet{shadecolor}{light-gray}
\colorlet{framecolor}{black}
{\endMakeFramed}

\newenvironment{tBox*}{%
	\MakeFramed {\advance\hsize-\width \FrameRestore}}%
{\endMakeFramed}

\newcommandtwoopt\Textbox[5][2.5cm][2cm]{%
	\begin{tikzpicture}[remember picture,overlay]
	\coordinate (aux) at ([xshift=#1]#4);
	\node[inner ysep=3pt,yshift=0.6ex,draw=gray,thick,
	fit=(#3) (aux),baseline] 
	(box) {};
	\node[text width=#2,anchor=north east,
	font=\sffamily\footnotesize,align=right] 
	at (box.north east) {#5};
	\end{tikzpicture}%
}
\newcommand{\ball}{\mathit{B}}

\newcommand{\rv}{\mathit{rv  }}

\usepackage{generic}
\usepackage{cite}
\usepackage{amsmath,amssymb,amsfonts}
\usepackage{graphicx}
\usepackage{textcomp}
\def\BibTeX{{\rm B\kern-.05em{\sc i\kern-.025em b}\kern-.08em
    T\kern-.1667em\lower.7ex\hbox{E}\kern-.125emX}}
\markboth{\journalname, VOL. XX, NO. XX, XXXX 2017}
{}
\begin{document}
\title{Symmetry Abstractions for Hybrid Systems and their Applications}
\author{Hussein Sibai and
	Sayan Mitra \\ 
	University of Illinois, Urbana IL 61801, USA \\
	\{sibai2,mitras\}@illinois.edu
}
\maketitle

\begin{abstract}
A symmetry of a dynamical system is a map that transforms one trajectory to another trajectory. 
We introduce a new type of abstraction for hybrid automata based on symmetries. The abstraction combines different modes
in a concrete automaton $\ha$,  whose trajectories are related by symmetries, into a single mode in the abstract automaton $\hb$. 
The abstraction sets the guard and reset of an abstract edge to be the union of the symmetry-transformed guards and resets of the concrete edges.
We establish the soundness of the abstraction using a forward simulation relation (FSR) and present several examples.
Our abstraction results in simpler automata, that are more amenable for formal analysis and design.
We illustrate an application of this abstraction in making reachability analysis faster and enabling unbounded time safety verification.
We show how 
a fixed point of the reachable set computation of $\hb$ can be used to answer reachability queries for $\ha$, even if the latter visits an infinite and unbounded sequences of modes. 
We present our implementation of the abstraction construction, the fixed point check, and the map that transforms abstract reachable sets to concrete ones in a software tool. Finally, we show the advantage of our method over existing ones, and the different aspects of our abstraction, in a sequence of experiments including scenarios with linear and nonlinear agents following waypoints.
\end{abstract}

\begin{IEEEkeywords}
hybrid systems, abstractions, symmetry, formal methods, reachability analysis.
\end{IEEEkeywords}

\section{Introduction}
\label{sec:intro}

Hybrid system models bring together continuous and discrete behaviors~\cite{teel2012,vdSS2000,AlurBook}, and have proven to be useful in the design and analysis of a wide variety of systems, ranging from automotive, 
medical, to manufacturing
and robotics.
Exact algorithmic solutions for many of the synthesis and analysis problems for hybrid systems are known to be computationally intractable~\cite{HKPV98}. 
%
%
Therefore, one aims to develop approximate solutions, and the  main approach is to work with an {\em abstraction\/} of the hybrid system model. 
Roughly, an abstraction of a hybrid automaton $\ha$ is a simpler automaton $\hb$ that subsumes all behaviors of $\ha$. 
For example, $\hb$ may have fewer variables than $\ha$, or fewer modes, or it may have linear or rectangular dynamics approximating $\ha$'s nonlinear dynamics. 
 $\ha$ is called the {\em concrete}, and $\hb$ the {\em abstract} or {\em virtual} automaton.
 Ideally, $\hb$ is simpler and  yet a useful over-approximation of $\ha$, and the formal analysis and synthesis of $\hb$ is tractable. 


Several important verification and synthesis techniques for hybrid systems have relied on  abstractions. The early decidability results for verification of rectangular hybrid systems were based on creating discrete and finite state abstractions~\cite{alurdiscrete,alur95algorithmic,HKPV98}. 
Decidability results based on abstractions for more general classes were presented in~\cite{VPVD:HSCC2008,lafferriere2000minimal}.
In a sequence of papers~\cite{GIRARD2012947,TabuadaPL02,tabuadaabstraction},  metric-based abstractions were developed for more general hybrid models and shown to be useful for both verification and synthesis. 
%
Techniques have been developed for automatically making abstractions more precise based on data and counterexamples~\cite{FehnkerCJK05,alur2006counterexample,HARE:RoohiP017,HARE:PrabhakarDM015}.
Finally, several practical approaches have been proposed for  computing abstractions based on linearization~\cite{sriram_abstraction}, state space partitioning~\cite{alur2006counterexample,HsuMMS18}, and hybridization~\cite{thaoOderHybridization}.

An important characteristic of  dynamical systems that has {\em not\/} been explored for constructing abstractions in the literature is {\em symmetry\/}.
Symmetry in a dynamical system $\dot{x} = f(x,p)$, with parameter $p$,  is a map $\gamma$ that transforms solutions  (or {\em trajectories}) of the system to other trajectories. For example, consider a trajectory $\xi_0 = \xi(x_0, p, \cdot)$ of a vehicle, starting from $x_0$ and following waypoint $p$. When $\xi_0$ is {\em shifted\/} by $a$, the result $\gamma_a(\xi_0)$ is just the trajectory of the vehicle starting from $\gamma_a(x_0)$ following $p'$, $\xi_0' = \xi(\gamma_a(x_0), p',\cdot)$. Here $p'$ is $p$ shifted by $a$. That is, the symmetry $\gamma_a$ relates different trajectories of the system. 
This property has been used for studying stability of feedback systems~\cite{symmetryandstabilityfeedback}, designing observers \cite{bonnabel2008symmetry} and  controllers~\cite{controlledsymmetries_passivewalking}, analyzing neural networks \cite{Gerard2006NeuronalNA}, and deriving conservation laws~\cite{ConservationLawsFromNoethers} using Noether's theorem~\cite{Noether1918}. 


Since both of $\xi_0$ and $\xi_0'$ are solutions of the system, and $\gamma_a$ can compute $\xi_0'$ from $\xi_0$, then, in a sense, $f$ has some redundancy. A simpler version of $f$, would only have $\xi_0$ as a solution, and allows us to derive $\xi_0'$ using $\gamma_a$. 
In this paper, we create such simpler versions by defining symmetry-based abstractions for hybrid automata.


Given a hybrid automaton $\ha$ having a set of discrete states (or {\em modes}) $P$ and a family of symmetry maps $\Phi$, our abstraction partitions $P$ to create the abstract automaton $\ha_v$. Each {\em equivalent class} of the partition is represented by a single mode in $\ha_v$. 
Any trajectory $\xi$ of any mode $p \in P$, can be transformed using $\Phi$ to get a trajectory $\xi_v$ of the representative abstract mode $p_v$, and vice versa. 
%
Accordingly, all concrete edges between any two equivalent mode classes would be represented with a single abstract edge. 
A set of concrete edges represented with the same abstract edge forms an equivalent edge class. 
The edges of $\ha$ are annotated with {\em guards} and {\em resets}. These dictate when the discrete transitions over the edges can be taken and how the state would be updated, respectively. The abstraction transforms the guards and resets of all concrete edges using $\Phi$. Then, it unions
all of  the transformed guards and resets of an edge equivalent class to get the guard and reset of the corresponding abstract edge. This means that an execution of $\ha_v$ would transition over an edge $e_v$ if any of the transformed guards of the edges of $\ha$ that $e_v$ represents is satisfied. Moreover, the execution would split into several executions after a reset. Each of these executions start from a transformed version of the state defined by the reset of an edge of $\ha$ that is represented by $e_v$. We establish the soundness of this abstraction using a FSR (Theorem~\ref{thm:fsr_concrete_virtual}).

With several examples related to vehicles, we show that symmetry abstractions are natural. The abstraction can be useful for solving several problems related to tractability of synthesis and verification. In this paper, we focus on a particular application that is {\em reachability analysis}.
Our abstraction accelerates reachability analysis and enables unbounded time safety verification because 
 $\ha_v$ has fewer modes than $\ha$. 
%
For safety verification, the reachset of $\ha$ ($\Reach_{\ha}$) rather than that of $\ha_v$ is what matters. We show that $\Reach_\ha$ can be retrieved from $\Reach_{\ha_v}$ using $\Phi$ (Section~\ref{sec:real_from_virtual_reachset}). In fact, we show in our experiments in Section~\ref{sec:experiments}, that computing $\Reach_{\ha_v}$ and then transforming it with $\Phi$, is computationally less expensive than computing $\Reach_{\ha}$ directly. Since $\Reach_{\ha_v}$ is expected to be smaller than $\Reach_\ha$, its computation would reach a fixed point earlier than that of $\Reach_{\ha}$.
Moreover, $\Reach_{\ha_v}$ might be a bounded set when $\Reach_{\ha}$ is not. This property enables unbounded safety verification. Using our method, the safety verification problem of $\ha$ changes from computing $\Reach_{\ha}$ and checking if it intersects with an unsafe set $U$, to checking if there exists a map in $\Phi$ that transforms $\Reach_{\ha_v}$ to intersect $U$ (see Algorithm~\ref{code:unbounded}). 
The search over $\Phi$ for a map that transforms a bounded reachset $\Reach_{\ha_v}$ to intersect $U$ would be easier than computing an unbounded reachset of a nonlinear automaton $\Reach_{\ha}$, where the latter might not even be feasible.
\subsection{Summary of contributions}
\begin{itemize}
\item We introduce a new type of abstraction for hybrid systems based on symmetries (Definition~\ref{def:hybridautomata_virtual}) and explain its construction with examples (Examples~\ref{sec:single_linear_example},\ref{sec:transformation_example},\ref{sec:virtual_example},\ref{sec:single_robot_example_edges},\ref{sec:transformation_example_edges},\ref{sec:virtual_example_roads}).
\item We show that it is an abstraction using a FSR, and therefore, enjoys all the properties of standard abstractions (Theorem~\ref{thm:fsr_concrete_virtual}).
\item We show the practical advantage of this abstraction in accelerating bounded verification. We also show that it enables unbounded-time safety verification of $\ha$, using a data-structure called $\permodedict$, that stores the per-mode reachsets of $\ha_v$ (Algorithm~\ref{code:unbounded} and Theorem~\ref{thm:unboundedcodecorrectness}). 
\item We present an implementation of our abstraction construction, a fixed-point check on the reachset computation, and the construction of $\permodedict$.
\item We evaluate our implementation with experiments 
on a sequence of waypoint-following examples with scenarios having linear and nonlinear continuous dynamics, following different paths, using translation only or the combination of translation and rotation symmetries.
\end{itemize}

\subsection{Reachability and symmetry: brief literature review}

Reachability analysis is an essential tool in formal verification of hybrid systems.
Significant strides have been made in reachability analysis of continuous time dynamical and hybrid systems in the past decade. Linear dynamical models with thousands, and even millions, of dimensions have been verified~\cite{Spaceex,bak2017hylaa,verifbillion}. Nonlinear models of realistic systems ranging from engine control systems~\cite{NimaHSCC,Powertrain:JinDKUB14,FanQM18,DuggiralaFM015} to biomedical processes have been analyzed~\cite{KushnerBCFMS19,Sankaranarayanan17-Model-based,GrosuCAV}. Software tools to solve the reachability problem have been developed~\cite{CAS13,Althoff2015a,bak2017hylaa,C2E2paper,FanKJM16:Emsoft,FanM15:ATVA,CAS13,Chen2015ReachabilityAO}. These developments rely on advances in data-structures and dynamical systems theory results that exploit  characteristics like sparsity and stability. 
Exploiting symmetries using the abstraction presented here will add a new methodology to this toolbox.



Symmetry was used to  accelerate the safety verification of dynamical systems achieving promising results, in some cases by orders of magnitude speedups~\cite{Sibai:ATVA2019,Sibai:TACAS2020,maidens2018exploiting}.
In \cite{maidens2018exploiting}, Maidens et al. used the Cartan moving frame method for symmetric nonlinear discrete-time dynamical systems to move from absolute representation of states to relative ones. That resulted in orders of magnitude speed up on the backward reachable set computation problem for checking if two Dubin vehicles would collide. In our paper, we consider hybrid systems, with discrete and continuous dynamics, and reduce the number of modes of the automaton, instead of considering pure-discrete dynamics and reducing the dimension of the state space as in \cite{maidens2018exploiting}.

In \cite{Sibai:ATVA2019}, we utilized symmetry of an autonomous dynamical system to cache and retrieve its reachsets using symmetry transformations. 
When given a continuous family of symmetry transformations, we were able to reduce the dimensionality of the reachsets that had to be computed, achieving orders of magnitude speedup in verification time. In contrast, this paper, we focus on hybrid systems instead of continuous dynamical systems. 
In \cite{Sibai:TACAS2020}, we extended the result of \cite{Sibai:ATVA2019} to the parameterized dynamical systems and multi-agent settings. That allowed us to tackle hybrid automata as well. We viewed the different modes of an automaton as parameterized dynamical systems. We constructed a representative mode $p^*$ of all the modes, using symmetry, and then used $p^*$ as a proxy to share reachsets across different modes using a cache.
 However, in \cite{Sibai:TACAS2020}, we did not develop the general abstraction-based view of symmetry (Definition~\ref{def:hybridautomata_virtual} and Theorem~\ref{thm:fsr_concrete_virtual}). As a result, we did not have a fixed point analysis (Theorem~\ref{thm:fixed_point_condition} and Corollary~\ref{cor:vir_dict_to_real_reachset}), and we were not able to verify unbounded-time safety properties.

\section{Model and problem statement}
\label{sec:model}

\paragraph*{Notations}
We denote by $\mathbb{N}$, $\mathbb{R}$, and $\mathbb{R}^{\geq 0}$ the sets of natural numbers, real numbers and non-negative reals, respectively. Given a finite set $S$, its cardinality is denoted by $|S|$. The length of a finite sequence $\mathit{seq}$ is denoted by $\mathit{seq.len}$ and its elements between $i$ and $j$, inclusive, by $\mathit{seq}[i:j]$. Given $N \in \mathbb{N}$, we denote by $[N]$ the set $\{0, \dots, N-1\}$. 
Given two vectors $v \in \mathbb{R}^n$ and $u \in \reals^m$, we define $[v,u]$ to be the vector of length $n+m$ that results from appending $u$ to $v$.
Given $\epsilon \in (\reals^{\geq 0})^n$, we denote by $\ball(v,\epsilon)$ the $n$-dimensional hyper-rectangle centered at $v$ with $i^{\mathit{th}}$ dimension sides having length $\epsilon[i]$, for all $i \in [n]$. 
Given a hyper-rectangle $H \subseteq \mathbb{R}^n$ and a set of indices $L \subseteq [n]$, we denote the restriction of $H$ to the indices in $L$ by $H[L]$. 
We denote $\mathit{diag}(v)$ to be the diagonal matrix with diagonal $v$.
%
%
Given a function $\gamma: \mathbb{R}^n \rightarrow \mathbb{R}^n$ and a set $S \subseteq \mathbb{R}^n$, as usual, we lift the function to subsets of $\reals^n$, and define 
$\gamma(S) = \{ \gamma(x)\ |\ x \in S\}$. 
We also lift it to vectors of $\reals^n$ and define $\gamma(v) = [\gamma(v_1), \gamma(v_2), \dots, \gamma(v_k)]$, for any $v \in \mathbb{R}^{n \times k}$.
We define $\arctan_2(y,x)$ to be the phase of the complex number $x + j y$.

\subsection{Hybrid dynamics}
\label{sec:agent:model}

In this paper, we will use a standard hybrid automaton modeling framework for defining cyber-physical systems~\cite{ACHH93,TIOAmon,Mitra07PhD}. 

\begin{definition}
\label{def:hybridautomata}
A {\em hybrid automaton\/} is a  tuple $\ha = \langle \stateset, P, \initset, \initmode, \edgeset, \guard, \reset, f \rangle$, where
\begin{enumerate}[label=(\alph*)]
\item $\stateset \subseteq \reals^n$ is the continuous state space and $P$ is a (possibly infinite) set of discrete states. Continuous states are simply called {\em states} and the discrete ones are called {\em modes} or {\em parameters}.
\item $\initset \subseteq \stateset$ is a set of possible initial states and $\initmode \in P$ is the initial mode,
\item $\edgeset \subseteq P \times P$ is the set of directed edges over modes that define mode transitions, 
\item $\guard: \edgeset \rightarrow 2^{\stateset}$ gives the set of states from which an edge transition is enabled,  \label{item:def_guard}
\item $\reset: \stateset \times \edgeset \rightarrow 2^\stateset$ gives the updated (post) state after a transition is taken, \label{item:def_reset} and
\item $f: \stateset \times P \rightarrow \stateset$ is a {\em dynamic function\/} that defines the continuous evolution. It is Lipschitz continuous in the first argument.
\end{enumerate}
\end{definition}




The dynamic function $f(\cdot,p)$ 
define the continuous state evolution in each mode $p\in P$. 
A function $\xi: \stateset \times P \times \mathbb{R}^{\geq 0} \rightarrow \stateset$ is a {\em trajectory of $\ha$} if $\xi$ is differentiable in its third argument, and given an initial state $\stateinstance_0 \in \stateset$ and a mode $p \in P$, $\xi(\stateinstance_0,p,0) = \stateinstance_0$ and for all $t \in \reals^{\geq 0}$, 
\begin{align}
\label{sys:input}
\frac{d }{d t}\xi(\stateinstance_0,p,t) = f(\xi(\stateinstance_0,p,t), p).
\end{align}
The trajectory $\xi$ is the unique solution of (\ref{sys:input}) starting from $x_0$, since $f$ is Lipschitz continuous.
We say that $\xi(\stateinstance_0,p,t)$ is the state of (\ref{sys:input}) at time $t$  starting from $\stateinstance_0$ in mode $p$. When $x_0$ and $p$ are clear from context, we denote $\xi(x_0,p,t)$ by $\xi(t)$, for simplicity.
For any time-bounded trajectory $\xi$, i.e. defined over a finite interval in the third argument,
 $\dur(\xi)$ is its last time point. The first and last state in such a trajectory are denoted by $\xi.\mathit{fstate}$ and $\xi.\mathit{lstate}$, respectively. 

The edge set $E$, the $\guard$, and the $\reset$ together define the discrete transitions. For simplicity, 
for an edge $e = (p,p') \in E$, we denote its source mode $p$ by $\edgeinstance.\src$ and its destination mode $p'$ by $\edgeinstance.\dest$.
A sequence of modes $p_0, p_{1}, \dots$, where for all $i\geq 0$, $(p_i,p_{i+1}) \in \edgeset$ is called a {\em path}.
Moreover, we abuse notation and denote $\guard((p,p'))$ by $\guard(p,p')$. Then, $\guard(p,p')$ is the set from which a transition from mode $p$ to mode $p'$ is possible. From a state $x \in \guard(p,p')$, the post-state $x'$ after the transition has to be in $\reset(x,(p,p'))$. Such state-mode pairs $((x,p),(x',p'))$ define the transitions of $\A$ and we write  $(\stateinstance,p) \rightarrow (\stateinstance' ,p')$. 
 Note that there are no  urgent transitions here, and guards may be ignored.
 
 


The semantics of a hybrid automaton is defined by executions which are  sequences of trajectories and transitions. 
An {\em execution\/} of $\ha$ is a sequence of pairs of trajectories and modes $\sigma = (\xi_0,p_0), (\xi_1,p_1), \dots $, where 
\begin{inparaenum}[(a)]
	\item each  $\xi_i$ is a trajectory of $\ha$, 
	\item each $(\xi_i.\lstate,p_i)\rightarrow (\xi_{i+1}.\fstate,p_{i+1})$ is a transition as defined above. 
\end{inparaenum}
A {\em finite} and {\em time-bounded} execution has a finite number of discrete transitions and all of its trajectories are time-bounded.
The duration of a finite and time-bounded execution $\sigma = (\xi_0,p_0), (\xi_1, p_1) \dots (\xi_k, p_k)$ is $\dur(\sigma) = \sum_{i}\dur(\xi_i)$ and its last state is $\sigma.\lstate = \xi_{k}.\lstate$. 

Finally, 
fix $J \in \mathbb{N}$;
$\exec_\ha(J)$  is the set of all executions of $\ha$ with at most $J$ transitions; When the transitions are unbounded, it is denoted by 
$\exec_\ha$.
We define the set of {\em reachable states\/} as:
\begin{align}
\reachset_\ha = \{\stateinstance \in \stateset \ |\ \exists\ \sigma \in \exec_\ha, \sigma.\lstate= x\}.
\end{align}
$\reachset_\ha(J)$ is the reachset restricted to executions with at most $J$ transitions.

\begin{example}[Robot following waypoints]
	\label{sec:single_linear_example}
	Consider a robot following a sequence of waypoints $\{w_i \in \reals^2 \ |\ i\in [4]\}$ on the plane connected with directed roads $\{r_i \in \reals^4\ |\ i \in [5]\}$ forming an axis-aligned rectangle centered at the origin (see Figure~\ref{fig:problem_description}).
	The robot starts from an arbitrary point in some initial set $\stateset_\init \subset \mathbb{R}^3$. We fix rectangles with dimensions $\epsilon_0 \in (\reals^{\geq 0})^2$ or $\epsilon_1 \in (\reals^{\geq 0})^2$.
	 We say that it reached the first waypoint $w_0$ following road $r_0$ if it is located in the rectangle $\ball(w_0, \epsilon_0)$. If it was following road $r_1$ instead, we say it reached $w_0$ if it is located in the smaller rectangle $\ball(w_0, \epsilon_1)$.
	 Moreover, for any $i \in \{1,2,3\}$, we say that it reached the waypoint $w_i$ following road $r_i$ if it is located in the rectangle $\ball(w_i, \epsilon_1)$.
	
	To formalize the dynamics of the robot in this scenario, we construct a corresponding hybrid automaton. We use the four waypoints as four modes of the automaton, i.e. $P = \{w_1, w_2, w_3, w_4\}$. Consequently, whether starting from $\initset$ or coming from $w_3$, i.e. following road $r_0$ or road $r_4$, the robot would be in the same mode which corresponds to following $w_0$. 
	Thus, one should not confuse the roads of the path in Figure~\ref{fig:problem_description} with the edges of the automaton that we will construct.
	Now, we fix $\epsilon_0 = [1, 1.4]$ and $\epsilon_1 = [0.6,1]$. The resulting hybrid automaton is shown in Figure~\ref{fig:waypoint_modes_original_state_machine} and would be formally defined as: $W = \langle \stateset, P, \initset, \initmode,$
	$  \edgeset, \guard, \reset, f \rangle$: 
	\begin{enumerate}[label=(\alph*)]
		\item $\stateset \subseteq \reals^3$, representing the position and orientation with respect to the $x[0]$-axis, and $P = \{p_{i} = w_{i}\ |\ i\in [4] \}$,
		\item $\initset = \ball([-4.5,-0.5, -\frac{\pi}{4}], [0.8,0.8, \frac{\pi}{2}])$, $\initmode = p_0 = w_0$,
		\item $\edgeset = \{e_0=(p_{0},p_{1}), e_1 = (p_{1}, p_{2}), e_2 = (p_{2}, p_{3}), e_3 = (p_{3}, p_{0}) \}$, 
		\item 
		\begin{align*}\guard(e_i) =
		 \begin{cases}
		 (\ball(w_{i}, \epsilon_0) \cup \ball(w_{i}, \epsilon_1)) \times \reals, \text{ if } i=0, \\
		 \ball(w_{i}, \epsilon_1) \times \reals, \text{ if } i=\{1,2,3\},
		 \end{cases} 
		\end{align*} 
		\label{eg_item:def_guard}
		\item $\forall \stateinstance \in \stateset, \edgeinstance \in \edgeset, $
		\begin{align}\reset(\stateinstance,\edgeinstance) = \{\stateinstance\},
		\end{align} 
		is the identity map, \label{eg_item:def_reset} and
		\item $\forall \stateinstance \in \stateset, \forall p \in P$,
		\begin{align}
		\label{eq:robot_dynamics}
		f(x, p) = \frac{dx}{dt} = 
		\left[
		\begin{matrix}
		v \cos (\stateinstance[2]) \\
		v \sin (\stateinstance[2])\\
		2v \sin (\alpha) / L
		\end{matrix}
		\right], \text{where}
		\end{align}
		$\alpha = \arctan_2(p[1] - x[1],p[0] - x[0]) - x[2]$ and
		$v$ and $L$ are the fixed speed and length of the robot~\cite{BLAZIC20111001}.
	\end{enumerate}
\end{example}

\begin{figure}[t!]
	\centering
	\begin{subfigure}[t]{0.5\textwidth}
		\centering
		\includegraphics[width=\textwidth]{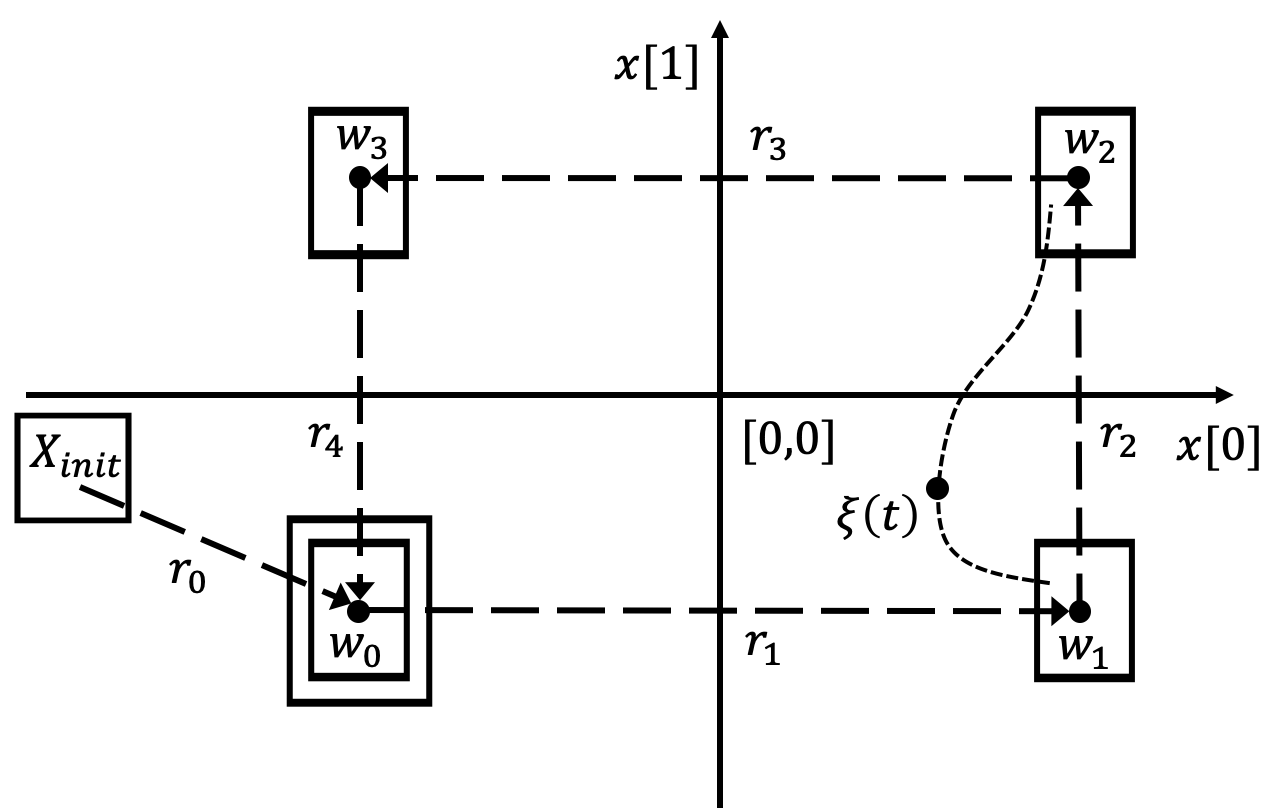} 
		\caption{ \label{fig:problem_description}}
		\vspace{\floatsep}
	\end{subfigure}
	\begin{subfigure}[t]{0.5\textwidth}
		\centering
		\includegraphics[width=\textwidth]{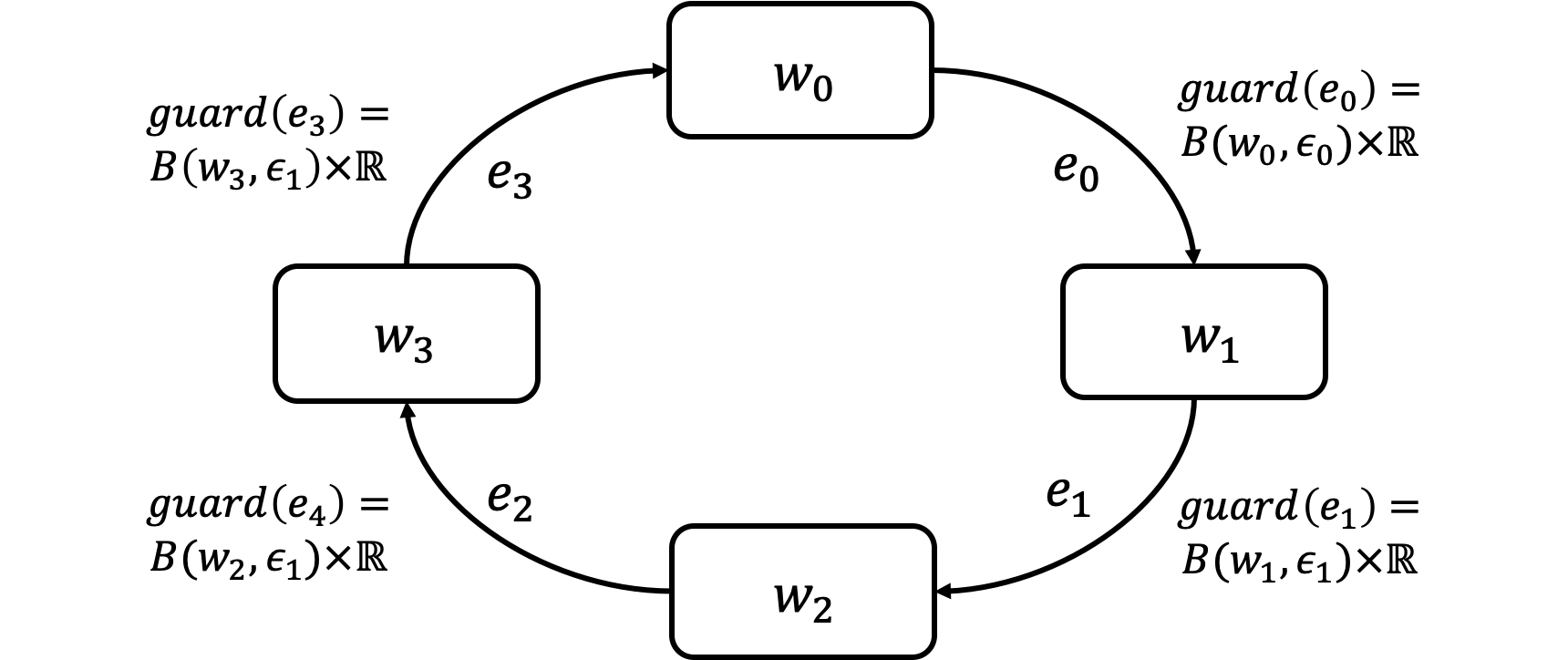}
		\caption{\label{fig:waypoint_modes_original_state_machine}}
		\vspace{\floatsep}
	\end{subfigure}
	\caption{\scriptsize (\ref{fig:problem_description}) A robot, with a state containing its position and orientation, following a sequence of 2D waypoints forming a rectangle starting from its initial set $\initset$. It reaches a waypoint if it reaches the rectangle centered at it. It has to reach the larger rectangle centered at $w_0$ when starting from $\initset$. (\ref{fig:waypoint_modes_original_state_machine}) The state machine representing the discrete transitions of the hybrid automaton $W$ describing the scenario in (\ref{fig:problem_description}). The resets are omitted since they are the identity map for all the edges. \label{fig:waypoint_modes}
	}
\end{figure}

\section{Symmetry and equivariant dynamical systems}
\label{sec:symdef}

 In this section, we present an existing definition of symmetry for dynamical systems with parameters and a sufficient condition for a map to be a symmetry.

 A symmetry map $\gamma$ acts on the state space $\stateset$, i.e. $\gamma: \stateset \rightarrow \stateset$, such that given a solution of the system, it maps it to another valid solution.

\begin{definition}[Definition 2 in \cite{russo2011symmetries}]
	\label{def:symmetry}
Let $\Gamma$ be a group of maps acting on $\stateset$. We say that $\gamma \in \Gamma$ is a symmetry of (\ref{sys:input}) if it is differentiable, invertible, and for any solution $\xi(x_0,p, \cdot)$, $\gamma (\xi(\stateinstance_0,p, \cdot))$ is a solution as well.
\end{definition} 

Fortunately, a map can be checked if it is a symmetry of a system by analyzing whether it commutes with its dynamic function. 
\begin{definition}[\cite{russo2011symmetries}]
	\label{def:equivariance_input}
	The dynamic function $f: \stateset \times P \rightarrow \stateset$ is said to be $\Gamma$-equivariant if for any $\gamma \in \Gamma$, there exists $\rho: P \rightarrow P$ such that, 
	\begin{align}
	\label{eq:equivariance_condition}
	 \forall\ \stateinstance \in \stateset,\forall\ p \in P,\ \frac{\partial \gamma}{\partial \stateinstance} f(\stateinstance, p) = f(\gamma(\stateinstance), \rho(p)).
	\end{align}
\end{definition}

The following theorem shows that it is enough to check the condition in equation (\ref{eq:equivariance_condition}) to prove that a map is a symmetry. 

\begin{theorem}[part of Theorem 10 of \cite{russo2011symmetries}]
	\label{thm:sol_transform_input_nonlinear}
	 If $f$ is $\Gamma$-equivariant, then all maps in $\Gamma$ are symmetries of (\ref{sys:input}). Moreover, for any $\gamma \in \Gamma$, map $\rho: P \rightarrow P$ that satisfies equation (\ref{eq:equivariance_condition}), $x_0 \in \stateset$, and $p \in P$, $\gamma(\xi(x_0, p,\cdot)) = \xi(\gamma(x_0), \rho(p), \cdot)$.  
\end{theorem}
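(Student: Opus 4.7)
The statement has two pieces: that every $\gamma \in \Gamma$ is a symmetry, and the explicit formula $\gamma(\xi(x_0,p,\cdot)) = \xi(\gamma(x_0),\rho(p),\cdot)$. I would prove the formula first, and then the symmetry claim is an immediate consequence because $\xi(\gamma(x_0),\rho(p),\cdot)$ is, by construction, a solution of (\ref{sys:input}).

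\textbf{Main idea.} Fix $\gamma \in \Gamma$, a corresponding $\rho$ satisfying (\ref{eq:equivariance_condition}), an initial state $x_0 \in \stateset$, and a parameter $p \in P$. Let $\xi(x_0,p,\cdot)$ be the unique trajectory guaranteed by Lipschitz continuity of $f$. Define the candidate curve $\eta(t) := \gamma(\xi(x_0,p,t))$. The strategy is to show that $\eta$ solves the initial value problem corresponding to initial state $\gamma(x_0)$ and parameter $\rho(p)$, and then invoke uniqueness to conclude $\eta(t) = \xi(\gamma(x_0),\rho(p),t)$ for all $t \in \reals^{\geq 0}$.

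\textbf{Key steps.} First, the initial condition: $\eta(0) = \gamma(\xi(x_0,p,0)) = \gamma(x_0)$. Second, differentiate $\eta$. Since $\gamma$ is differentiable (Definition~\ref{def:symmetry}) and $\xi(x_0,p,\cdot)$ is differentiable in $t$, the chain rule gives
\begin{align*}
\frac{d\eta}{dt}(t) \;=\; \frac{\partial \gamma}{\partial x}\bigl(\xi(x_0,p,t)\bigr)\, \frac{d}{dt}\xi(x_0,p,t) \;=\; \frac{\partial \gamma}{\partial x}\bigl(\xi(x_0,p,t)\bigr)\, f\bigl(\xi(x_0,p,t),p\bigr).
\end{align*}
Third, apply $\Gamma$-equivariance (\ref{eq:equivariance_condition}) pointwise at $x = \xi(x_0,p,t)$ to rewrite the right-hand side as $f(\gamma(\xi(x_0,p,t)),\rho(p)) = f(\eta(t),\rho(p))$. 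Thus $\eta$ satisfies
\begin{align*}
\frac{d\eta}{dt}(t) \;=\; f(\eta(t),\rho(p)), \qquad \eta(0) \;=\; \gamma(x_0),
\end{align*}
which is exactly the IVP defining $\xi(\gamma(x_0),\rho(p),\cdot)$. Since $f$ is Lipschitz in its first argument, this IVP has a unique solution, so $\eta(t) = \xi(\gamma(x_0),\rho(p),t)$, which is the desired identity.

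\textbf{Finishing the symmetry claim and anticipated subtleties.} From the identity, $\gamma(\xi(x_0,p,\cdot))$ coincides with a trajectory of (\ref{sys:input}) (starting at $\gamma(x_0)$ with parameter $\rho(p)$), so $\gamma$ maps solutions to solutions; combined with the hypotheses that $\Gamma$ is a group of differentiable invertible maps on $\stateset$, this matches Definition~\ref{def:symmetry}, making $\gamma$ a symmetry. I do not expect any deep obstacle here; the only thing to be careful about is that the chain rule application requires $\gamma$ to be differentiable (which is built into Definition~\ref{def:symmetry}) and that uniqueness of the transformed IVP solution is available, which follows from Lipschitz continuity of $f$ in its first argument as given in Definition~\ref{def:hybridautomata}. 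The $\rho$ produced by equivariance may not be unique, but the argument works for any valid $\rho$, matching the statement's quantification.
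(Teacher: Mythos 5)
Your proposal is correct and follows essentially the same route as the paper's argument (which is the standard chain-rule computation: differentiate $\gamma(\xi(x_0,p,t))$, invoke the equivariance condition~(\ref{eq:equivariance_condition}) pointwise, and identify the result as the trajectory from $\gamma(x_0)$ under $\rho(p)$). Your explicit appeal to uniqueness of the transformed initial value problem via Lipschitz continuity is a slightly more careful phrasing of the same step, not a different method.
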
 

Note that if $\gamma$ in Theorem~\ref{thm:sol_transform_input_nonlinear} is a linear function of the state, i.e. $\gamma(x) = A x$, for some $A \in \mathbb{R}^{n \times n}$, the condition in equation~(\ref{eq:equivariance_condition}) for equivariance becomes $\gamma(f(\stateinstance,p)) = f(\gamma(\stateinstance),\rho(p))$.

\begin{example}[Robot origin translation symmetry]
	\label{sec:transformation_example}
	Consider the robot presented in Example~\ref{sec:single_linear_example}, and maps $\gamma_\origtrans$ and $\rho_\origtrans$ that translate the origin of the plane to a new origin $p^* \in \mathbb{R}^2$.
	Let $\gamma_{\origtrans} : \mathbb{R}^3 \rightarrow \mathbb{R}^3$ and $\rho_{\origtrans} : \mathbb{R}^2 \rightarrow \mathbb{R}^2$ be defined as:
	\begin{align}
	\gamma_{\origtrans}(x) &=  [x[0:1] - p^*, x[2]], \\ 
	\rho_{\origtrans}(p) &=  (p - p^*), 
	\end{align}

	Then, $\frac{\partial \gamma_{\origtrans}}{\partial x} f(x, p) = \frac{\partial \gamma_{\origtrans}}{\partial x} \frac{dx}{dt} = I_3 \times f(x,p) = f(x,p)$, where
	$\alpha$ is as in equation~(\ref{eq:robot_dynamics}) and $I_3$ is the $3 \times 3$ identity matrix. Moreover, 
	\begin{align}
	f(\gamma_{\origtrans}(x), \rho_{\origtrans}(p)) =
	\left[
	\begin{matrix}
	v \cos (\stateinstance[2])\\
	v \sin (\stateinstance[2])\\
	2v \sin (\alpha') / L
	\end{matrix}
	\right], \text{where}
	\end{align}
	$\alpha' = \arctan_2(p[1] - p^*[1] - (x[1] - p^*[1]), p[0] - p^*[0] - (x[0] - p^*[0])) - x[2] = \alpha$. 
	Then, for all $\stateinstance \in \stateset$ and $p \in P$, $\frac{\partial \gamma}{\partial x} f(x, p) = f(\gamma_{\origtrans}(\stateinstance), \rho_{\origtrans}(p))$, and $\gamma_{\origtrans}$ is a symmetry of $f$.
	%
	Figure~\ref{fig:waypoint_modes_transformation} shows the new state $\xi(t)-p^*$ and new parameter $w_2 - p^*$ representing mode $p_{2}$ of the robot of Figure~\ref{fig:problem_description} after translating the origin to $p^*$ using $\gamma_{\origtrans}$ and $\rho_{\origtrans}$ of this example.
	
	
\end{example}

\begin{figure}[t!]
	\centering
		\includegraphics[width=0.5\textwidth]{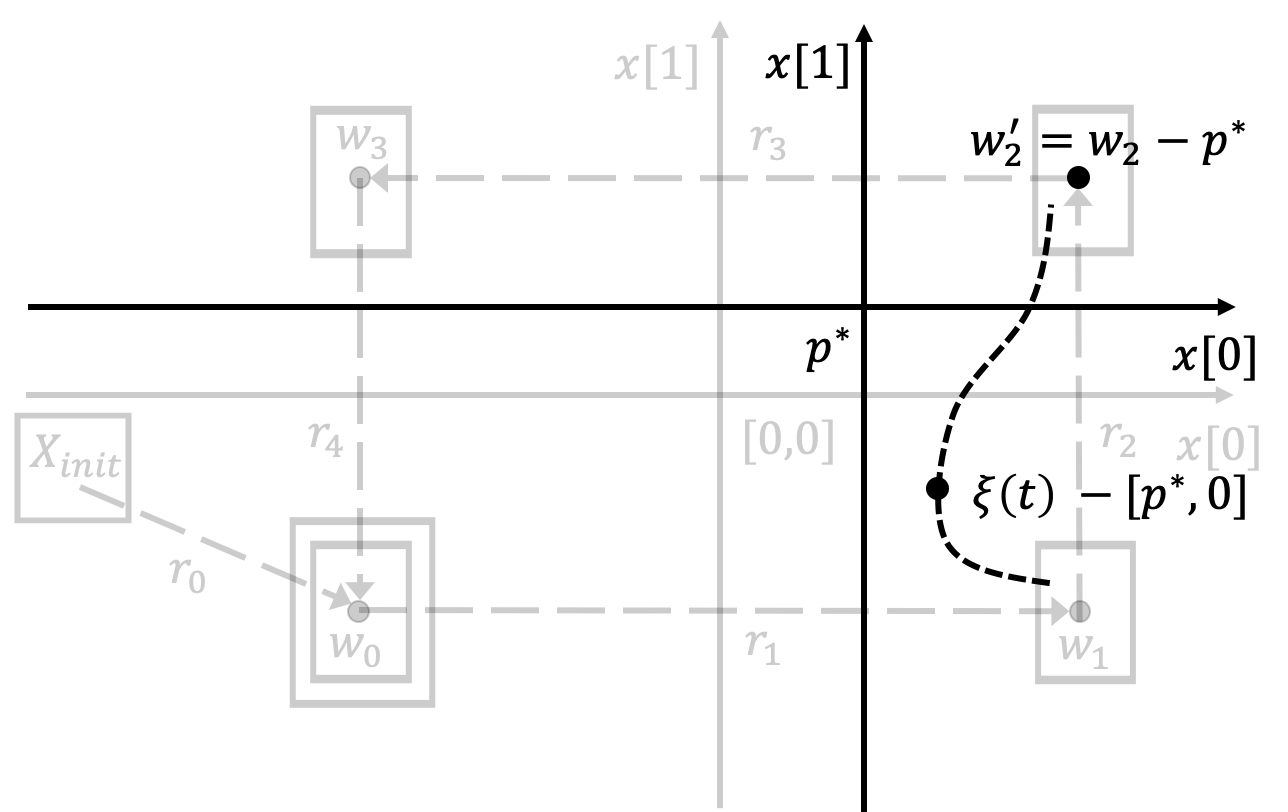} 
		\caption{\scriptsize Changing the origin of the coordinate system to $p^*$ in Figure~\ref{fig:problem_description}, does not affect the intrinsic behavior of the robot, but only translates the states in its trajectories. Such a translation is a valid symmetry of the robot dynamics. \label{fig:waypoint_modes_transformation}}
\end{figure}

\section{Virtual or abstract hybrid automaton}

%
In this section, we present our symmetry-based abstraction
%
 of hybrid automata along with the corresponding FSR.
 Our abstract automata have fewer numbers of modes and edges, than their concrete counterparts. 
 
 Our abstraction is an extension of the concept of virtual system for parameterized dynamical systems that we defined in \cite{Sibai:TACAS2020}, to hybrid systems. 
 Throughout the paper, we use subscript $v$ to denote the variables and functions of the abstract (or virtual) automaton and no subscript for those of the concrete one.

\subsection{Creating the virtual model}
\label{sec:virtualmodel}
In order to create the virtual model, a family of symmetries is needed.
This is formalized below.
\begin{definition}[virtual map]
	\label{def:virtual_map}
	Given a hybrid automaton $\ha$, a {\em virtual map} is a set  
	\begin{align}
	\Phi = \{(\gamma_p, \rho_p) \}_{p \in P},
	\end{align}
	where for every $p \in P$,
	$\gamma_p: \stateset \rightarrow \stateset$, 
	 $\rho_p: \reals^m \rightarrow \reals^m$, and they satisfy equation~(\ref{eq:equivariance_condition}).
\end{definition}

%
%

Given $\ha$ and a virtual map $\Phi$, 
 each $\gamma_{p}$ is called a {\em virtual state map\/} and each $\rho_{p}$ is called a {\em virtual mode map\/}. 
From Theorem~\ref{thm:sol_transform_input_nonlinear}, it follows that 
these maps transform trajectories in mode $p$ to trajectories in  mode
$\rho_p(p)$.
Using a virtual map $\Phi$ of $\ha$, we can define the function $\rv: \mathbb{R}^m \rightarrow \mathbb{R}^m$, where for all $p \in P$:
\begin{align}
 \rv(p) = \rho_{p}(p).
\end{align}

This function will be used in Definition~\ref{def:hybridautomata_virtual} to map concrete modes
to virtual ones. Moreover, its inverse will be used to map
virtual modes to the sets of concrete modes, or the equivalent mode classes, that they represent.

\begin{definition}[virtual model]
	\label{def:hybridautomata_virtual}
	Given a hybrid automaton $\ha$, 
	and a virtual map $\Phi$, 
	 the resulting {\em abstract (virtual) hybrid automaton} is: 
	$$\ha_v = \langle \stateset_v, P_v, \initsetv, \initmodev,  
	\edgeset_v, \guard_v, \reset_v, f_v \rangle \text{, where } $$
	\begin{enumerate}[label=(\alph*)]
	\item $\stateset_v = \stateset$ and 
	$P_v = \rv(P)$
	\label{item:def_virtual_mode_set}
	\item $\initsetv = \gamma_{\initmode}(\initset)$ and $\initmodev = \rv(\initmode)$, \label{item:def_virtual_initial}
	\item $\edgeset_v = \rv(\edgeset) = \{ (\rv(p_1),\rv(p_2))\ |\ \edgeinstance = (p_1,p_2) \in \edgeset \}$
	\label{item:def_virtual_edgeset}
	%
	%
	\item $\forall \edgeinstance_v \in \edgeset_v, $
	\begin{align*}
	\guard_v(e_v) = \bigcup\limits_{\edgeinstance \in \rv^{-1}(\edgeinstance_v)}  \gamma_{\edgeinstance.\src}\big(\guard(\edgeinstance)\big),
	\end{align*}
	
	 \label{item:def_virtual_guard}
	\item $\forall \stateinstance_v \in \stateset_v, \edgeinstance_v \in \edgeset_v, $ 
	\begin{align*}
	\reset_v(\stateinstance_v, e_v) = \bigcup\limits_{\edgeinstance \in \rv^{-1}(\edgeinstance_v)}\ \gamma_{\edgeinstance.\dest}\big(\reset\big(\gamma_{\edgeinstance.\src}^{-1}(x_v), \edgeinstance \big)\big), \text{ and}
	\end{align*}
	%
	\label{item:def_virtual_reset}
	\item 
	$\forall p_v \in P_v$, $\forall \stateinstance \in \stateset, f_v(\stateinstance, p_v) = f(\stateinstance, p_v)$. \label{item:def_virtual_dynamics}
	\end{enumerate}
\end{definition}  


%
The trajectories and executions of the virtual hybrid automaton $\ha_v$ are defined in the same way as in Definition~\ref{def:hybridautomata}.

\begin{example}[Robot virtual system]
	\label{sec:virtual_example}
	Consider the scenario described in Figure~\ref{fig:problem_description} and its corresponding hybrid automaton $W$ defined in Example~\ref{sec:single_linear_example}. To construct its virtual automaton, we need a virtual map $\Phi$ first.
	For every $p \in P$, we define $\gamma_p$ and $\rho_p$ to be the origin translation maps $\gamma_{\origtrans}$ and $\rho_{\origtrans}$ that we presented in Example~\ref{sec:transformation_example}. Recall that in that example, we needed to define $p^*$ that we want to translate the origin to. For this example, for each $p \in P$, we choose $p^*$ to be equal to $p$.
	Figure~\ref{fig:waypoint_modes_virtual_transformation} shows a visualization of $\gamma_{p_2}$ and $\rho_{p_2}$. In that figure, the waypoint $w_2$, which is the concrete mode $p_{2}$, becomes the origin $p_v$ after applying $\rho_{p_{2}}$. Hence, $p_2$ gets represented in the virtual automaton by the virtual mode $p_v$, as shown in Figures~\ref{fig:waypoint_modes_virtual} and \ref{fig:waypoint_modes_virtual_statemachine}. 
	
	The set $\guard(\edgeinstance_2)$ of $\ha$, becomes $\ball(p_v, \epsilon_1) \times \mathbb{R}$, after applying $\gamma_{p_2}$. This rectangle will be part of the guard of the virtual mode $p_v$, as specified in Definition~\ref{def:hybridautomata_virtual} part~\ref{item:def_virtual_guard}. Its projection to the first two dimensions is shown as the rectangle $\ball(p_v,\epsilon_0)$ centered at the origin $p_v$ in Figure~\ref{fig:waypoint_modes_virtual}.
	
	Figure~\ref{fig:waypoint_modes_virtual_transformation} also shows that the rectangle $B(w_1,\epsilon_1)$, which is $\guard(\edgeinstance_1)[0:1]$, becomes $B(w_1-w_2, \epsilon_0)$ a rectangle centered at $w_1 - w_2$, after applying $\gamma_{p_2}$. Recall that the reset of any edge of $W$ is just the identity map. Hence, the rectangle centered at $w_1 - w_2$ represents $\gamma_{p_2}(\reset(\guard(e_1)), e_1)[0:1]$. This will be part of the set of possible reset states per part~\ref{item:def_virtual_reset} of Definition~\ref{def:hybridautomata_virtual}. It is shown as the rectangle in the negative side of the $x[1]$-axis in Figure~\ref{fig:waypoint_modes_virtual}.
	
	The illustration above for $p_2$ would be repeated for every $p \in P$, to construct the virtual system shown in Figure~\ref{fig:waypoint_modes_virtual}.
	
	
	In summary, the resulting virtual automaton would be: $W_{v} = \langle \stateset_{v}, P_{v}, \initsetv, \initmodev, $ 
	$\edgeset_{v}, \guard_{v}, \reset_{v}, f_{v} \rangle$, where
	
		\begin{enumerate}[label=(\alph*)]
		\item $\stateset_v = \stateset = \mathbb{R}^3$ and 
		$P_v = \{ p_v = [0, 0] \}$,
		\label{item:def_eg_virtual_mode_set}
		\item $\initsetv$ is $\gamma_{\initmode}(\initset)$ which is the translation of the center of $\initset$ from $[-4.5,0.5,-\frac{\pi}{4}]$ to $[-2,1, -\frac{\pi}{4}]$, $\initmodev$ is the only mode $p_v$, which is the origin, 
		\item $\edgeset_v = \{e_v = [p_v,p_v]\} $, since all modes are mapped to $p_v$, all the edges map to the same virtual edge $e_v$, \label{item:def_eg_virtual_edgeset}
		%
		%
		\item 
		\begin{align}
		\guard_v(e_v) = \ball(p_v, \epsilon_0) \times \reals.
		\end{align}
		It is the union of all guards of all the edges mapped to rectangles centered at the origin. The guards of $\edgeinstance_1, \edgeinstance_2,$ and $\edgeinstance_3$ would be mapped to $\ball(p_v, \epsilon_1) \times \reals$, while that of $\edgeinstance_0$ would be mapped to $\ball(p_v, \epsilon_0) \times \reals$, 
		%
		 \label{item:def_eg_virtual_guard}
		\item $\forall \stateinstance_v \in \stateset_v$, \begin{align*}
		\reset_v(\stateinstance_v, e_v) = \{\gamma_{p_{1}}(\gamma_{p_0}^{-1}(x_v)), \gamma_{p_{2}}(\gamma_{p_1}^{-1}(x_v)),\\
		\gamma_{p_{3}}(\gamma_{p_2}^{-1}(x_v)),
		\gamma_{p_{0}}(\gamma_{p_3}^{-1}(x_v)) \}, \text{ and}
		\end{align*} 
	\label{item:def_eg_virtual_reset}
		\item 
		$\forall \stateinstance_v \in \stateset_v, f_v(\stateinstance, p_v) = f(\stateinstance, p_v)$, it is the dynamics of equation~(\ref{eq:robot_dynamics}) in Example~(\ref{sec:single_linear_example}) going to the origin. \label{item:def_eg_virtual_dynamics}
	\end{enumerate} 
The reset of the guard is the set of all possible reseted states. It is shown as the rectangles along the axes in Figure~\ref{fig:waypoint_modes_virtual}.

\begin{figure}[t!]
	\centering
	\begin{subfigure}[t]{0.5\textwidth}
		\centering
		\includegraphics[width=\textwidth]{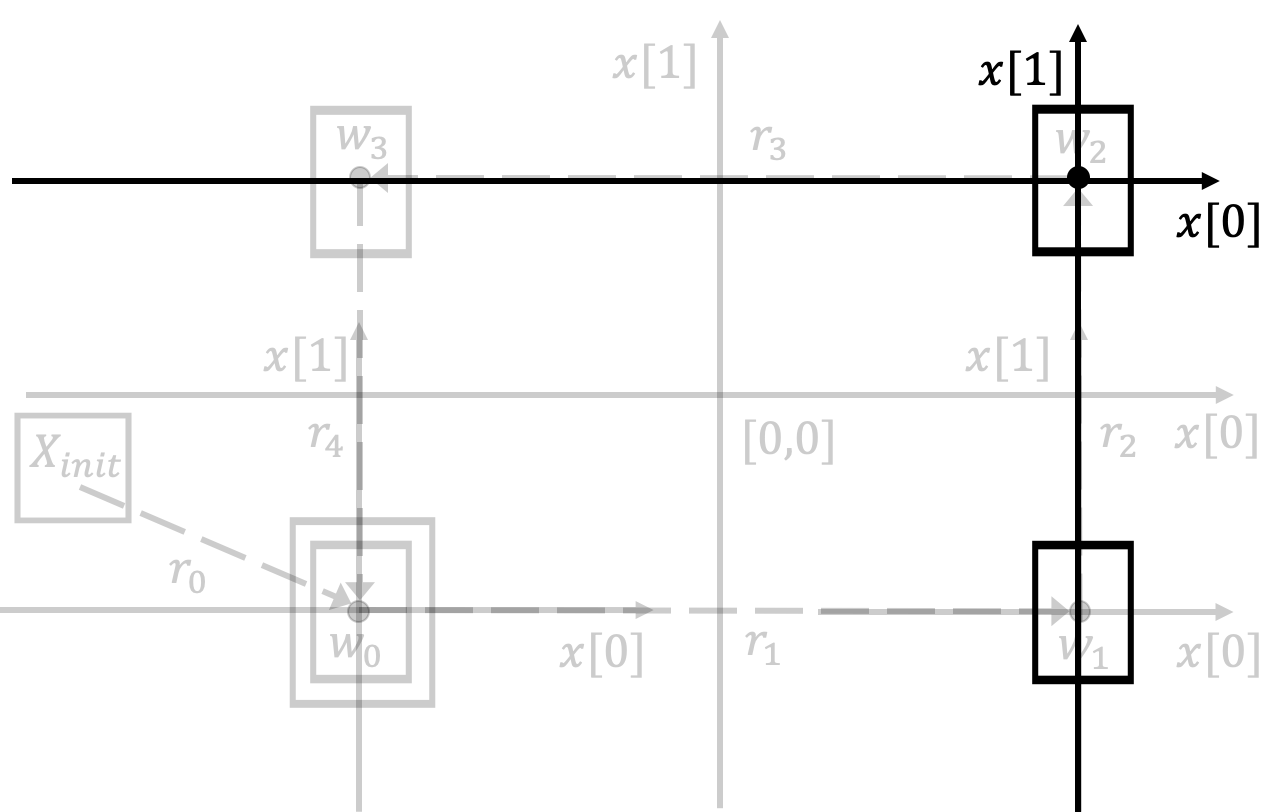} 	
		\caption{ \label{fig:waypoint_modes_virtual_transformation}}
		\vspace{\floatsep}
	\end{subfigure}
	\begin{subfigure}[t]{0.5\textwidth}
		\centering
		\includegraphics[width=\textwidth]{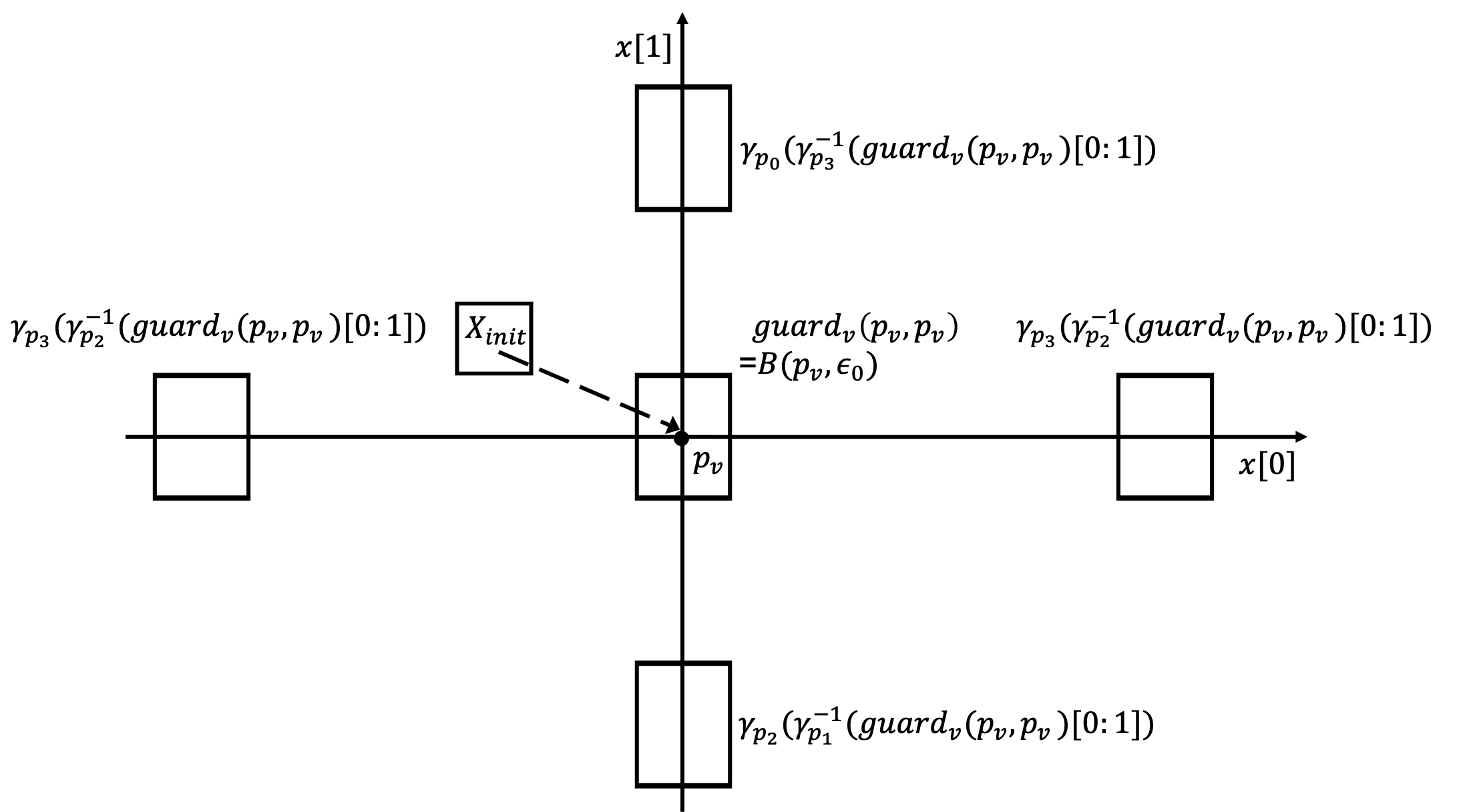}
		\caption{\label{fig:waypoint_modes_virtual}}
		\vspace{\floatsep}
	\end{subfigure}
	\begin{subfigure}[t]{0.5\textwidth}
		\centering
		\includegraphics[width=\textwidth]{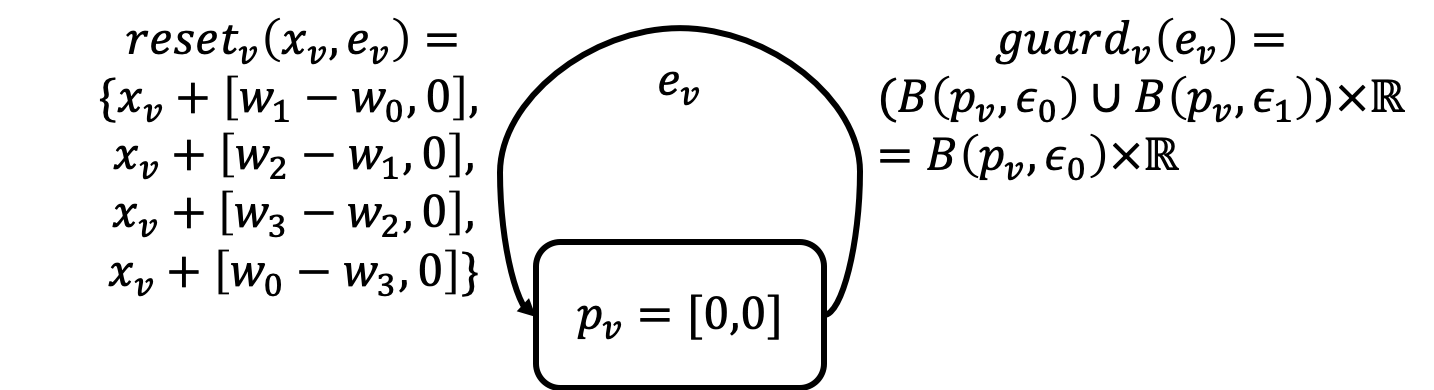}
		\caption{\label{fig:waypoint_modes_virtual_statemachine}}
		\vspace{\floatsep}
	\end{subfigure}
	\caption{\scriptsize (\ref{fig:waypoint_modes_virtual_transformation}) the symmetry transformation that changes the origin of the plane of Figure~\ref{fig:problem_description} to the waypoint $w_2$. (\ref{fig:waypoint_modes_virtual}) shows the virtual initial sets, guards and reseted guards of the virtual automaton of Figure~\ref{fig:problem_description} after choosing the virtual map to be the set of origin translations to the waypoints. 
	(\ref{fig:waypoint_modes_virtual_statemachine}) shows the resulting hybrid automaton $\ha_v$. \label{fig:waypoint_modes_virtual_automaton}}
\end{figure}

\end{example}

\subsection{Forward simulation relation (FSR): from concrete to virtual}
\label{sec:fsr}
In this section, we establish a correspondence from the executions of the concrete system to those of the virtual one through a  FSR~\cite{TIOAmon,Mitra07PhD,GJP:IFAC2006}. A  FSR is a standard approach to describe the similarity of behavior of two different hybrid automata. 

\begin{definition}[FSR \cite{TIOAmon}]
	\label{def:fsr}
	A {\em forward simulation relation} from hybrid automaton $\ha_1$ to another one $\ha_2$,
	 is a relation $\mathcal{R} \subseteq (\stateset_1 \times P_1) \times (\stateset_2 \times P_2)$, such that
	\begin{enumerate}[label={(\alph*)}]
		\item for any initial  $\stateinstance_{0,1} \in \initsetf$, there exists a state $\stateinstance_{0,2} \in \initsets$, such that $(\stateinstance_{0,1}, \initmodef) \mathcal{R}(\stateinstance_{0,2}, \initmodes)$, \label{def:fsr_first_cond}
		\item For any discrete transition $(\stateinstance_1, p_1) \rightarrow (\stateinstance_1',p_1')$ of $\ha_1$ and $(\stateinstance_2,p_2) \in \stateset_2 \times P_2$, 
			where $(\stateinstance_1, p_1) \mathcal{R}(\stateinstance_2,p_2)$, there exists $(\stateinstance_2',p_2') \in \stateset_2 \times P_2$ such that $(x_2,p_2)\rightarrow (x_2',p_2')$ is a discrete transition of $\ha_2$ and 
			$(\stateinstance_1',p_1') \mathcal{R}(\stateinstance_2',p_2')$, and 
			\label{def:fsr_second_cond}
		\item For any solution $\xi_1(\stateinstance_1,p_1,\cdot)$ of $\ha_1$ and pair $(\stateinstance_2,p_2) \in \stateset_2 \times P_2$, such that $(\stateinstance_1,p_1)\mathcal{R}(\stateinstance_2,p_2)$, there exists a solution $ \xi_2(\stateinstance_2,p_2,\cdot)$, where $\dur(\xi_1) = \dur(\xi_2)$ and 
		$(\xi_1.\lstate,p_1) \mathcal{R} (\xi_2.\lstate,p_2).$
		\label{def:fsr_third_cond}
	\end{enumerate}
\end{definition}

Existence of a FSR implies that for any execution of $\A_1$ there is a   corresponding related execution of $\A_2$. The following theorem is an adoption of Corollary 4.23 of \cite{TIOAmon} into our hybrid modeling framework.
\
\begin{theorem}[executions correspondence \cite{TIOAmon}]
	\label{thm:correspondence}
	If there exists a forward simulation relation $\mathcal{R}$ from $\ha_1$ to $\ha_2$, then for every execution $\sigma_1$ of $\ha_1$, there exists a corresponding execution $\sigma_2$ of $\ha_2$ such that 
	\begin{enumerate}[label=(\alph*)]
	\item $\sigma_1.\mathit{len} = \sigma_2.\mathit{len}$,
	\item $\forall\ i \in [\sigma_1.\mathit{len}]$, $\dur(\xi_{1,i}) = \dur(\xi_{2,i})$, and 
	\item $\forall\ i \in [\sigma_1.\mathit{len}]$,
	  $(\xi_{1,i}.\lstate,p_{1,i}) \mathcal{R} (\xi_{2,i}.\lstate, p_{2,i})$. 
	\end{enumerate}
\end{theorem}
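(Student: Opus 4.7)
The proof will proceed by induction on the length of the execution $\sigma_1$, building $\sigma_2$ one trajectory--mode pair at a time while maintaining all three invariants simultaneously. The three conditions of Definition~\ref{def:fsr} are tailored precisely to make this induction go through: condition~\ref{def:fsr_first_cond} seeds the base case, condition~\ref{def:fsr_second_cond} handles the discrete transitions between consecutive pairs, and condition~\ref{def:fsr_third_cond} handles the continuous evolution within each pair.

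For the base case ($\sigma_1.\mathit{len} = 1$), write $\sigma_1 = (\xi_{1,0}, p_{1,0})$ with $p_{1,0} = \initmodef$ and $\xi_{1,0}.\fstate \in \initsetf$. Condition~\ref{def:fsr_first_cond} supplies an $x_{0,2} \in \initsets$ with $(\xi_{1,0}.\fstate, \initmodef)\, \mathcal{R}\, (x_{0,2}, \initmodes)$. Then condition~\ref{def:fsr_third_cond}, applied to the trajectory $\xi_{1,0}$ with matching pair $(x_{0,2}, \initmodes)$, yields a trajectory $\xi_{2,0}$ of $\ha_2$ starting from $x_{0,2}$ in mode $\initmodes$ with $\dur(\xi_{2,0}) = \dur(\xi_{1,0})$ and $(\xi_{1,0}.\lstate, p_{1,0})\, \mathcal{R}\, (\xi_{2,0}.\lstate, \initmodes)$. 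Set $\sigma_2 = (\xi_{2,0}, \initmodes)$; all three conditions hold.

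For the inductive step, assume the claim for executions of length $k$, and let $\sigma_1$ have length $k+1$, so $\sigma_1 = \sigma_1' \frown (\xi_{1,k}, p_{1,k})$ where $\sigma_1'$ has length $k$. The induction hypothesis gives a corresponding $\sigma_2'$ of length $k$ in $\ha_2$ with matching durations and $(\xi_{1,k-1}.\lstate, p_{1,k-1})\, \mathcal{R}\, (\xi_{2,k-1}.\lstate, p_{2,k-1})$. Since $\sigma_1$ is an execution, there is a discrete transition $(\xi_{1,k-1}.\lstate, p_{1,k-1}) \rightarrow (\xi_{1,k}.\fstate, p_{1,k})$; applying condition~\ref{def:fsr_second_cond} to this transition and the related pair $(\xi_{2,k-1}.\lstate, p_{2,k-1})$ produces a new pair $(x_2', p_{2,k})$ that is a valid discrete post-state of $\ha_2$ and satisfies $(\xi_{1,k}.\fstate, p_{1,k})\, \mathcal{R}\, (x_2', p_{2,k})$. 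Feeding this related pair and the trajectory $\xi_{1,k}$ into condition~\ref{def:fsr_third_cond} produces a trajectory $\xi_{2,k}$ of $\ha_2$ with $\xi_{2,k}.\fstate = x_2'$, $\dur(\xi_{2,k}) = \dur(\xi_{1,k})$, and $(\xi_{1,k}.\lstate, p_{1,k})\, \mathcal{R}\, (\xi_{2,k}.\lstate, p_{2,k})$. Appending $(\xi_{2,k}, p_{2,k})$ to $\sigma_2'$ yields the desired $\sigma_2$.

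This is largely bookkeeping rather than a delicate argument, and there is no real obstacle: the three clauses of the FSR have been chosen precisely so that each step of an execution of $\ha_1$ can be matched by exactly one step of $\ha_2$. The only subtlety worth being careful about is that conditions~\ref{def:fsr_second_cond} and~\ref{def:fsr_third_cond} must be applied in the correct order within the inductive step, so that the starting pair for the continuous evolution is already the related post-state of the discrete transition; otherwise the durations would match but the initial states of $\xi_{2,k}$ would not be consistent with a genuine execution of $\ha_2$. Note also that the conclusion is stated only for finite-length executions, so well-foundedness of the induction is immediate and there is no need to invoke any limit argument along the lines of the general execution-correspondence results in \cite{TIOAmon}.
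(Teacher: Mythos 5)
Your induction is correct, and each appeal to Definition~\ref{def:fsr} is used exactly as intended: clause~\ref{def:fsr_first_cond} for the base case, clause~\ref{def:fsr_second_cond} for the discrete step, clause~\ref{def:fsr_third_cond} for the continuous step, applied in that order so that the matched trajectory of $\ha_2$ starts at the related post-transition state. Note, however, that the paper does not prove Theorem~\ref{thm:correspondence} at all: it is imported by citation as an adaptation of Corollary~4.23 of \cite{TIOAmon}, so there is no in-paper argument to compare against; your proof is the standard inductive argument that underlies that cited result, specialized to the execution structure (alternating trajectories and transitions) of Definition~\ref{def:hybridautomata}. Two small points worth making explicit if you were to include such a proof: the paper's definition of execution does not literally state that $\xi_0.\fstate\in\initset$ and $p_0=\initmode$, so you should flag that you are using this (implicit) convention when invoking clause~\ref{def:fsr_first_cond}; and since executions of $\ha$ may be infinite while $\sigma.\mathit{len}$ is only defined for finite sequences, your restriction to finite executions matches the statement, with the infinite case following by the usual step-indexed (limit) extension as in \cite{TIOAmon}.
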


Now we introduce a FSR from the concrete hybrid automaton to the virtual one.

\begin{theorem}[FSR: concrete to virtual]
	\label{thm:fsr_concrete_virtual}
Consider the relation $\fsr \subseteq (\stateset \times P)\times (\stateset_v \times P_v)$ defined as $(\stateinstance, p) \fsr (\stateinstance_v,p_v)$ if and only if: 
\begin{enumerate}[label=(\alph*)]
\item $x_v = \gamma_p(x)$, and
\item $p_v = \rv(p)$.
\end{enumerate}
Then, $\fsr$ is a forward simulation relation from  $\ha$ to  $\ha_v$.
\end{theorem}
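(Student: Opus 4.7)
The plan is to verify each of the three FSR conditions of Definition~\ref{def:fsr} in turn, using the virtual construction of Definition~\ref{def:hybridautomata_virtual} and the symmetry transfer result of Theorem~\ref{thm:sol_transform_input_nonlinear}. I would structure the proof as a short case analysis on each clause, since the claim essentially re-expresses the virtual model construction as an operational relation.

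For condition~\ref{def:fsr_first_cond} (initial states), I would take any $x_0 \in \initset$ and let $x_{0,v} := \gamma_{\initmode}(x_0)$. By part~\ref{item:def_virtual_initial} of Definition~\ref{def:hybridautomata_virtual}, $\initsetv = \gamma_{\initmode}(\initset)$ and $\initmodev = \rv(\initmode)$, so $x_{0,v} \in \initsetv$ and $(x_0, \initmode)\, \fsr\, (x_{0,v}, \initmodev)$ directly from the definition of $\fsr$. This is essentially a one-line verification.

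For condition~\ref{def:fsr_second_cond} (discrete transitions), suppose $(x,p) \rightarrow (x', p')$ in $\ha$ via an edge $e=(p,p') \in \edgeset$, i.e.\ $x \in \guard(e)$ and $x' \in \reset(x,e)$, and suppose $(x,p)\,\fsr\,(x_v, p_v)$ so that $x_v = \gamma_p(x)$ and $p_v = \rv(p)$. The candidate virtual edge is $e_v := \rv(e) = (\rv(p), \rv(p')) \in \edgeset_v$ by part~\ref{item:def_virtual_edgeset}. Set $p_v' := \rv(p')$ and $x_v' := \gamma_{p'}(x')$. Then $x_v = \gamma_p(x) = \gamma_{e.\src}(x) \in \gamma_{e.\src}(\guard(e))$, which lies in the union defining $\guard_v(e_v)$ in part~\ref{item:def_virtual_guard} because $e \in \rv^{-1}(e_v)$. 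Similarly, since $x = \gamma_{e.\src}^{-1}(x_v)$, we have $x_v' = \gamma_{e.\dest}(x') \in \gamma_{e.\dest}\bigl(\reset(\gamma_{e.\src}^{-1}(x_v), e)\bigr) \subseteq \reset_v(x_v, e_v)$ by part~\ref{item:def_virtual_reset}. Hence $(x_v, p_v) \rightarrow (x_v', p_v')$ is a valid virtual transition, and by construction $(x', p')\,\fsr\,(x_v', p_v')$. The key observation here is that the unions in parts~\ref{item:def_virtual_guard}--\ref{item:def_virtual_reset} of Definition~\ref{def:hybridautomata_virtual} were tailored precisely so that every concrete transition lifts in this way.

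For condition~\ref{def:fsr_third_cond} (trajectories), given a trajectory $\xi(x,p,\cdot)$ of $\ha$ and $(x_v,p_v)$ related to $(x,p)$, I would invoke Theorem~\ref{thm:sol_transform_input_nonlinear}, which yields $\gamma_p\bigl(\xi(x,p,\cdot)\bigr) = \xi(\gamma_p(x), \rho_p(p), \cdot) = \xi(x_v, p_v, \cdot)$, a trajectory over the identical time interval. Since $f_v(\cdot, p_v) = f(\cdot, p_v)$ by part~\ref{item:def_virtual_dynamics}, this is a valid trajectory of $\ha_v$ and has the same duration as $\xi$. Evaluating at the final time and applying the definition of $\fsr$ gives $(\xi.\lstate, p)\,\fsr\,(\xi_v.\lstate, p_v)$. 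The only minor subtlety, which I would note briefly, is that $\rho_p(p) = \rv(p) = p_v$ by definition of $\rv$, so the transported trajectory is indeed governed by the correct virtual mode.

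Overall this is primarily a bookkeeping proof: each condition boils down to checking that the relation $\fsr$ is compatible with a specific clause of Definition~\ref{def:hybridautomata_virtual}. The only nontrivial invocation is Theorem~\ref{thm:sol_transform_input_nonlinear} for the trajectory clause; the guard and reset clauses are immediate from the unions in the virtual model. I expect the main obstacle to be notational care in the discrete-transition case, to make unambiguous that the choice $x_v' = \gamma_{p'}(x')$ lies in the reset union (since the union ranges over all $e \in \rv^{-1}(e_v)$, one must pick the specific $e$ corresponding to the concrete transition).
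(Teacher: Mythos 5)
Your proposal is correct and follows essentially the same route as the paper's own proof: verify Definition~\ref{def:fsr} clause by clause, choosing $x_v' = \gamma_{p'}(x')$ and $p_v' = \rv(p')$ for the discrete step and invoking Theorem~\ref{thm:sol_transform_input_nonlinear} together with Definition~\hyperref[item:def_virtual_dynamics]{\ref*{def:hybridautomata_virtual}.\ref*{item:def_virtual_dynamics}} for the continuous step. The only point the paper states explicitly that you leave implicit is that, guards being non-urgent, the transported virtual trajectory can be taken not to transition before $\dur(\xi)$; this does not affect the validity of your argument.
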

\begin{proof}
$\fsr$ satisfies  Definition~\hyperref[def:fsr_first_cond]{\ref*{def:fsr}.\ref*{def:fsr_first_cond}} since:
for any $\stateinstance_0 \in \initset$, $\gamma_{\initmode}(\stateinstance_0) \in \initsetv$, and $\initmodev= \rv(\initmode)$, by Definition~\hyperref[item:def_virtual_initial]{\ref*{def:hybridautomata_virtual}.\ref*{item:def_virtual_initial}}. 

To prove that $\fsr$ satisfies Definition~\hyperref[def:fsr_second_cond]{\ref*{def:fsr}.\ref*{def:fsr_second_cond}}, fix a discrete transition $(\stateinstance, p) \rightarrow (\stateinstance',p')$ of $\ha$ and $(\stateinstance_v, p_v) \in \stateset_v \times P_v$ such that $(\stateinstance, p, \stateinstance_v, p_v) \in \fsr$. We will show that if we choose $\stateinstance_v' = \gamma_{p'}(\stateinstance_v)$ and $p_v' = \rv(p')$, then $(x_v',p_v') \in \stateset_v \times P_v$, $(x',p',x_v',p_v') \in \fsr$, and $(x_v,p_v) \rightarrow (x_v',p_v')$ is a valid discrete transition of $\ha_v$. 

First, $\stateinstance_v' \in \stateset_v$ since $\stateinstance' \in \stateset$, $\gamma_{p'}$ is a map from $\stateset$ to $\stateset$, and by Definition~\hyperref[item:def_virtual_mode_set]{\ref*{def:hybridautomata_virtual}.\ref*{item:def_virtual_mode_set}}, $\stateset =\stateset_v$. Moreover, $p_v' \in P_v$ by Definition~\hyperref[item:def_virtual_mode_set]{\ref*{def:hybridautomata_virtual}.\ref*{item:def_virtual_mode_set}}.
Second, $(x',p',x_v',p_v') \in \fsr$ since $\stateinstance_v' = \gamma_{p'}(\stateinstance')$ and $p_v' = \rv(p')$.
Third, fix $\edgeinstance = (p,p')$. Then, by the definition of discrete transitions of $\ha$, $\stateinstance \in \guard(\edgeinstance)$ and $\stateinstance' \in \reset(x, e)$. 
%
%
Also, from the definition of $\edgeset_v$ in Definition~\hyperref[item:def_virtual_edgeset]{\ref*{def:hybridautomata_virtual}.\ref*{item:def_virtual_edgeset}},
the edge $\edgeinstance_v = (p_v,p_v') \in \edgeset_v$.
Also, by the definition of $\fsr$ and the assumption that $\stateinstance$ and $\stateinstance_v$ are related under $\fsr$, $\stateinstance_v = \gamma_{p}(\stateinstance)$. That means that $\stateinstance_v \in  \gamma_p(\guard(e))$, since $\stateinstance \in \guard(e)$. But, by Definition 
\hyperref[item:def_virtual_guard]{\ref*{def:hybridautomata_virtual}.\ref*{item:def_virtual_guard}},
 $\gamma_p(\guard(e)) \subseteq \guard_v(\edgeinstance_v)$. Then, $\stateinstance_v \in \guard_v(e_v)$.
Moreover, since $\stateinstance' \in  \reset(\stateinstance, \edgeinstance))$ and $\stateinstance = \gamma_{p}^{-1}(\stateinstance_v)$, then $\stateinstance' \in  \reset(\gamma_{p}^{-1}(\stateinstance_v), \edgeinstance)$. Hence, $\stateinstance_v' = \gamma_{p'}(x') \in  \gamma_{p'}(\reset(\gamma_{p}^{-1}(\stateinstance_v), \edgeinstance))$. Using 
Definition~\hyperref[item:def_virtual_reset]{\ref*{def:hybridautomata_virtual}.\ref*{item:def_virtual_reset}},
  we know that $ \gamma_{p'}(\reset(\gamma_{p}^{-1}(\stateinstance_v), \edgeinstance)) \subseteq \reset_v(\stateinstance_v, \edgeinstance_v)$. We have $\stateinstance_v' \in \reset_v(\stateinstance_v, \edgeinstance_v)$. Therefore, $(\stateinstance_v, p_v) \rightarrow (\stateinstance_v',p_v')$ is a valid discrete transition of $\ha_v$.

To prove that $\fsr$ satisfies Definition~\hyperref[def:fsr_third_cond]{\ref*{def:fsr}.\ref*{def:fsr_third_cond}},
 fix a solution $\xi(\stateinstance,p,\cdot)$ of $\ha$ and a pair $(\stateinstance_v,p_v) \in \stateset_v \times P_v$, such that $(\stateinstance,p,\stateinstance_v,p_v)\in \fsr$. Then, we will show that $\dur(\xi) = \dur(\xi_v)$ and $(\xi(\stateinstance,p,\dur(\xi)),p,$ $\xi_v(\stateinstance_v,p_v,\dur(\xi)), p_v) \in \fsr$.
Since $\stateinstance$ and $\stateinstance_v$ are related under $\fsr$, then $\stateinstance_v = \gamma_{p}(\stateinstance)$. Moreover, using Theorem~\ref{thm:sol_transform_input_nonlinear}, $\forall\ t \in \dur(\xi)$, $\xi(\gamma_{p}(x), \rho_{p}(p), t) = \gamma_p(\xi(\stateinstance,p,t))$. But, $\rv(p) = p_v$ and using Definition
\hyperref[item:def_virtual_dynamics]{\ref*{def:hybridautomata_virtual}.\ref*{item:def_virtual_dynamics}},
 $\xi(\gamma_{p}(x), \rho_{p}(p), \cdot) =$ $\xi_{v}(\gamma_{p}(x), p_v,\cdot)$, which is a solution of $\ha_v$ starting from $\gamma_p(x)= x_v$. In addition, from the assumption that the guards are optional, we can choose $\xi_v$ that does not transition before $\dur(\xi)$. 
Therefore, $\forall\ t \in \dur(\xi)$, $(\xi(\stateinstance,p,t),p,$ $\xi_v(\stateinstance_v,p_v,t), p_v) \in \fsr$. 
\end{proof}

\begin{definition}
Given a hybrid automaton $\ha$ and a virtual map $\Phi$, we denote the corresponding virtual automaton by $\ha_{v,\phi}$ and the resulting FSR of Theorem~\ref{thm:fsr_concrete_virtual}, by $\mathcal{R}_\phi$.
\end{definition}


The following corollary is also an adoption of Theorem 4.2 of \cite{TIOAmon} into our hybrid automaton framework.

\begin{corollary}[Theorem 4.2 in \cite{TIOAmon}]
	\label{cor:composition}
Let $A, B$ and $C$ be three hybrid automata and $\Phi_{AB}$ and $\Phi_{BC}$ be two virtual maps such that $B = A_{v,\Phi_{AB}}$ and $C = B_{v,\Phi_{BC}}$ with corresponding FSRs $\mathcal{R}_{AB}$ and $\mathcal{R}_{BC}$. Then, $C = A_{v,\Phi_{AC}}$ is the virtual automaton of $A$ with FSR $\mathcal{R}_{AB} \circ \mathcal{R}_{BC}$ and virtual map $\Phi_{AC} = \Phi_{AB} \circ \Phi_{BC}$,  where $\circ$ is the composition operator. 
\end{corollary}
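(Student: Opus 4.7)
The plan is to exhibit the composed virtual map $\Phi_{AC}$ explicitly and then verify, component by component, that applying Definition~\ref{def:hybridautomata_virtual} to $A$ with $\Phi_{AC}$ reproduces $C$. Writing $\Phi_{AB} = \{(\gamma_p^{AB},\rho_p^{AB})\}_{p \in P_A}$ and $\Phi_{BC} = \{(\gamma_q^{BC},\rho_q^{BC})\}_{q \in P_B}$, and using that $P_B = \rv_{AB}(P_A)$, I would set $q := \rv_{AB}(p)$ for each $p \in P_A$ and define
\[
\gamma_p^{AC} := \gamma_q^{BC} \circ \gamma_p^{AB}, \qquad \rho_p^{AC} := \rho_q^{BC} \circ \rho_p^{AB},
\]
then take $\Phi_{AC} := \{(\gamma_p^{AC}, \rho_p^{AC})\}_{p \in P_A}$. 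The first check is that each pair $(\gamma_p^{AC}, \rho_p^{AC})$ satisfies the equivariance condition (\ref{eq:equivariance_condition}) for $f_A$. This follows from two applications of equivariance together with the chain rule: differentiating the composition $\gamma_q^{BC}\circ\gamma_p^{AB}$ and using that $f_B = f_A$ on $P_B$ (Definition~\hyperref[item:def_virtual_dynamics]{\ref*{def:hybridautomata_virtual}.\ref*{item:def_virtual_dynamics}}) yields the required commutation with $f_A$.

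Next, I would match each component of $A_{v,\Phi_{AC}}$ to that of $C$. The state space is trivially $\stateset$. For modes, $\rv_{AC}(P_A) = \rv_{BC}(\rv_{AB}(P_A)) = \rv_{BC}(P_B) = P_C$. For initial conditions, $\gamma_{p_{\mathit{init},A}}^{AC}(\initset_A) = \gamma_{p_{\mathit{init},B}}^{BC}\bigl(\gamma_{p_{\mathit{init},A}}^{AB}(\initset_A)\bigr) = \gamma_{p_{\mathit{init},B}}^{BC}(\initset_B) = \initset_C$, and the analogous identity holds for the initial mode. For edges, $\rv_{AC}(E_A) = \rv_{BC}(\rv_{AB}(E_A)) = \rv_{BC}(E_B) = E_C$. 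For guards and resets, the key bookkeeping lemma is the preimage identity
\[
\rv_{AC}^{-1}(e_C) \;=\; \bigcup_{e_B \in \rv_{BC}^{-1}(e_C)} \rv_{AB}^{-1}(e_B),
\]
after which unfolding Definition~\hyperref[item:def_virtual_guard]{\ref*{def:hybridautomata_virtual}.\ref*{item:def_virtual_guard}} for $C$ (a union over $e_B \in \rv_{BC}^{-1}(e_C)$ of $\gamma_{e_B.\src}^{BC}\bigl(\guard_B(e_B)\bigr)$) and expanding $\guard_B(e_B)$ one more time flattens into $\bigcup_{e_A \in \rv_{AC}^{-1}(e_C)} \gamma_{e_A.\src}^{AC}\bigl(\guard_A(e_A)\bigr)$, which is precisely the guard of $e_C$ in $A_{v,\Phi_{AC}}$. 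The reset case is analogous, taking care that the inner $\gamma_{e_A.\src}^{AB,-1}$ cancels with part of the outer $\gamma^{AB}$, leaving only $\gamma_{e_A.\src}^{AC,-1}$ and $\gamma_{e_A.\dest}^{AC}$. The dynamics match because $f_v = f$ throughout.

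Finally, for the forward simulation relation, I would invoke the general compositionality of FSRs (Theorem~4.2 of \cite{TIOAmon}) to conclude that $\mathcal{R}_{AB} \circ \mathcal{R}_{BC}$ is an FSR from $A$ to $C$, and then observe that this relation coincides with the $\mathcal{R}_{\Phi_{AC}}$ supplied by Theorem~\ref{thm:fsr_concrete_virtual}: unpacking the definition, $(x,p)\,(\mathcal{R}_{AB}\circ\mathcal{R}_{BC})\,(z,r)$ iff there exists $(y,q)$ with $y = \gamma_p^{AB}(x)$, $q = \rv_{AB}(p)$, $z = \gamma_q^{BC}(y)$, $r = \rv_{BC}(q)$, which collapses to $z = \gamma_p^{AC}(x)$ and $r = \rv_{AC}(p)$, exactly $(x,p)\,\mathcal{R}_{\Phi_{AC}}\,(z,r)$.

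The main obstacle is the guard and reset bookkeeping in the second step: two nested unions over symmetry-transformed sets must be shown to telescope into a single union indexed by concrete edges of $A$, and one must track carefully which $\gamma$'s and $\gamma^{-1}$'s cancel when the reset of $C$ is rewritten via the reset of $B$. Once the preimage identity above is established and one substitutes systematically, the verification is routine; the equivariance check and the FSR composition are conceptually simpler by comparison.
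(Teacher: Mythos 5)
Your proof is correct, and it actually does considerably more than the paper: the paper gives no proof of this corollary at all — it simply invokes Theorem 4.2 of \cite{TIOAmon} for the fact that the relational composition of two FSRs is again an FSR and asserts the rest. Your explicit construction $\gamma_p^{AC}=\gamma^{BC}_{\rv_{AB}(p)}\circ\gamma_p^{AB}$, $\rho_p^{AC}=\rho^{BC}_{\rv_{AB}(p)}\circ\rho_p^{AB}$, the component-wise check that $A_{v,\Phi_{AC}}$ reproduces $C$ (modes, initial data, edges, and the telescoping of the guard/reset unions via the preimage identity $\rv_{AC}^{-1}(e_C)=\bigcup_{e_B\in\rv_{BC}^{-1}(e_C)}\rv_{AB}^{-1}(e_B)$), and the identification of $\mathcal{R}_{AB}\circ\mathcal{R}_{BC}$ with the relation $\mathcal{R}_{\Phi_{AC}}$ supplied by Theorem~\ref{thm:fsr_concrete_virtual} is precisely the content the paper leaves implicit, so your route is the fully worked-out version of the paper's one-line citation. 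Two small remarks. First, in the reset bookkeeping nothing really "cancels": what you use is $\bigl(\gamma^{BC}_{e_B.\src}\circ\gamma^{AB}_{e_A.\src}\bigr)^{-1}=\gamma^{AB,-1}_{e_A.\src}\circ\gamma^{BC,-1}_{e_B.\src}$ together with $e_B.\src=\rv_{AB}(e_A.\src)$ and $e_B.\dest=\rv_{AB}(e_A.\dest)$, which is exactly the regrouping your computation performs. Second, your chain-rule equivariance check applies the pair $(\gamma^{BC}_q,\rho^{BC}_q)$ at the intermediate parameter $\rho_p^{AB}(p')$ for an arbitrary $p'\in P_A$, and this value need not lie in $P_B$; strictly, equation~(\ref{eq:equivariance_condition}) for $\Phi_{BC}$ only quantifies over modes of $B$, so you implicitly need the symmetry pairs to be equivariant on all of $\reals^m$ (as they are in every example in the paper, and as the signature $\rho_p:\reals^m\rightarrow\reals^m$ in Definition~\ref{def:virtual_map} suggests). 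This is a looseness inherited from the paper's definitions rather than a flaw in your argument, but it is worth stating as a hypothesis if you want the composed equivariance claim to be airtight; note that the FSR part of your proof does not need it, since transforming trajectories of mode $p$ only uses equivariance at $p$ itself.
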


Corollary~\ref{cor:composition} shows that we can apply symmetries in sequence to get hierarchical levels of  abstractions of $\ha$.

%
It is worth noting that there may not be a forward simulation relation from $\A_v$  to $\A$. The guard and reset of an edge $\edgeinstance_v$ of $\A_v$  are the union of all the transformed versions of the guards and resets of the edges of $\A$ that get mapped to $e_v$. Hence, some discrete transitions in $\ha_v$ may not have corresponding ones in $\ha$. For example, consider two edges $\edgeinstance_1 = (p_{11}, p_{12})$ and $\edgeinstance_2 = (p_{21}, p_{22})$ of $\ha$ with $\rv(\edgeinstance_1) = \rv(\edgeinstance_2) = \edgeinstance_v = (p_{v1}, p_{v2})$, an edge of $\ha_v$.  Then, a transition over $\edgeinstance_v$ would be allowed in $\ha_v$ with reseted state being $\gamma_{p_{22}}(\reset(\stateinstance_v, \edgeinstance_2))$ if $\stateinstance_v \in \gamma_{p_{11}}(\guard(\edgeinstance_1))$. Such a transition may not have a correspondent one in $\ha$, since it resembles a transition from $p_{11}$ to $p_{22}$.
Thus, some executions of $\A_v$ may not have corresponding executions in $\A$.

\section{Different virtual maps lead to different abstraction}
\label{sec:virtual_automaton_edges}

In this section, we show that the same scenario can result in different abstractions when different symmetries are applied. This multitude of modeling approaches would serve different purposes for the abstraction user. 
%

We follow the same sequence of presentation as that of  Examples~\ref{sec:single_linear_example}, \ref{sec:transformation_example}, and \ref{sec:virtual_example}: first, we show a new hybrid automaton modeling of scenario in Figure~\ref{fig:problem_description} in Example~\ref{sec:single_robot_example_edges}, we then show a corresponding symmetry map in Example~\ref{sec:transformation_example_edges}, and finally, construct a virtual map and the corresponding virtual automaton in Example~\ref{sec:virtual_example_roads}.

\begin{example}[Modeling the scenario in Figure~\ref{fig:problem_description} with roads as modes]
	\label{sec:single_robot_example_edges}
	Consider the same scenario described in Figure~\ref{fig:problem_description}. In this example, instead of defining the modes of the hybrid automaton to be the waypoints, suppose we define the modes to be the roads. 
	In each mode, the robot would follow the destination waypoint of the corresponding road. We annotate the components of automaton $W$ of Example~\ref{sec:single_linear_example} with a subscript $W$ and define
	the resulting automaton of this example as follows: $R =  \langle \stateset, P, \initset, \initmode,  \edgeset, \guard, \reset, f \rangle$
	is shown in Figure~\ref{fig:edges_modes_original_statemachine}, where
	\begin{enumerate}[label=(\alph*)]
		\item $\stateset = \stateset_\wpha \subseteq \reals^3$, same as that of Example~\ref{sec:single_linear_example}, and $P = \{p_{i} = r_{i}\ |\ i\in [5] \}$, the set of roads in Figure~\ref{fig:problem_description}, 
		\item $\initset = \initsetwp$ and $\initmode = p_{ 0}$,
		\item $\edgeset = \{e_{0} = (p_{0}, p_{1}), e_{1} = (p_{1}, p_{2}), e_2 = (p_{2}, p_{3}), e_3 = (p_{3}, p_{4}), e_4 = (p_{4}, p_{1}) \}$,  
		\item  
		\begin{align*}\guard(e_i) =
		\begin{cases} 
		\ball(w_{i}, \epsilon_0) \times \reals, \text{ if } i=0, \\
		\ball(w_{i}, \epsilon_1) \times \reals, \text{ if } i=\{1,2,3,4\}. 
		\end{cases} 
		\end{align*} 
		\label{eg_item:def_2_guard}
		\item $\forall \stateinstance \in \stateset, \edgeinstance \in \edgeset$,  \[\reset(\stateinstance,\edgeinstance) = \{\stateinstance\},\] is the identity map, \label{eg_item:def_2_reset} and
		\item $\forall \stateinstance \in \stateset, \forall p \in P$, 
		\begin{align}
		f(x,p) = f_\wpha(x,p.\dest).  \label{eg_item:def_2_robot_dynamics}
		\end{align}
	\end{enumerate} 
	\begin{figure}[t!]
		\centering
		\includegraphics[width=0.5\textwidth]{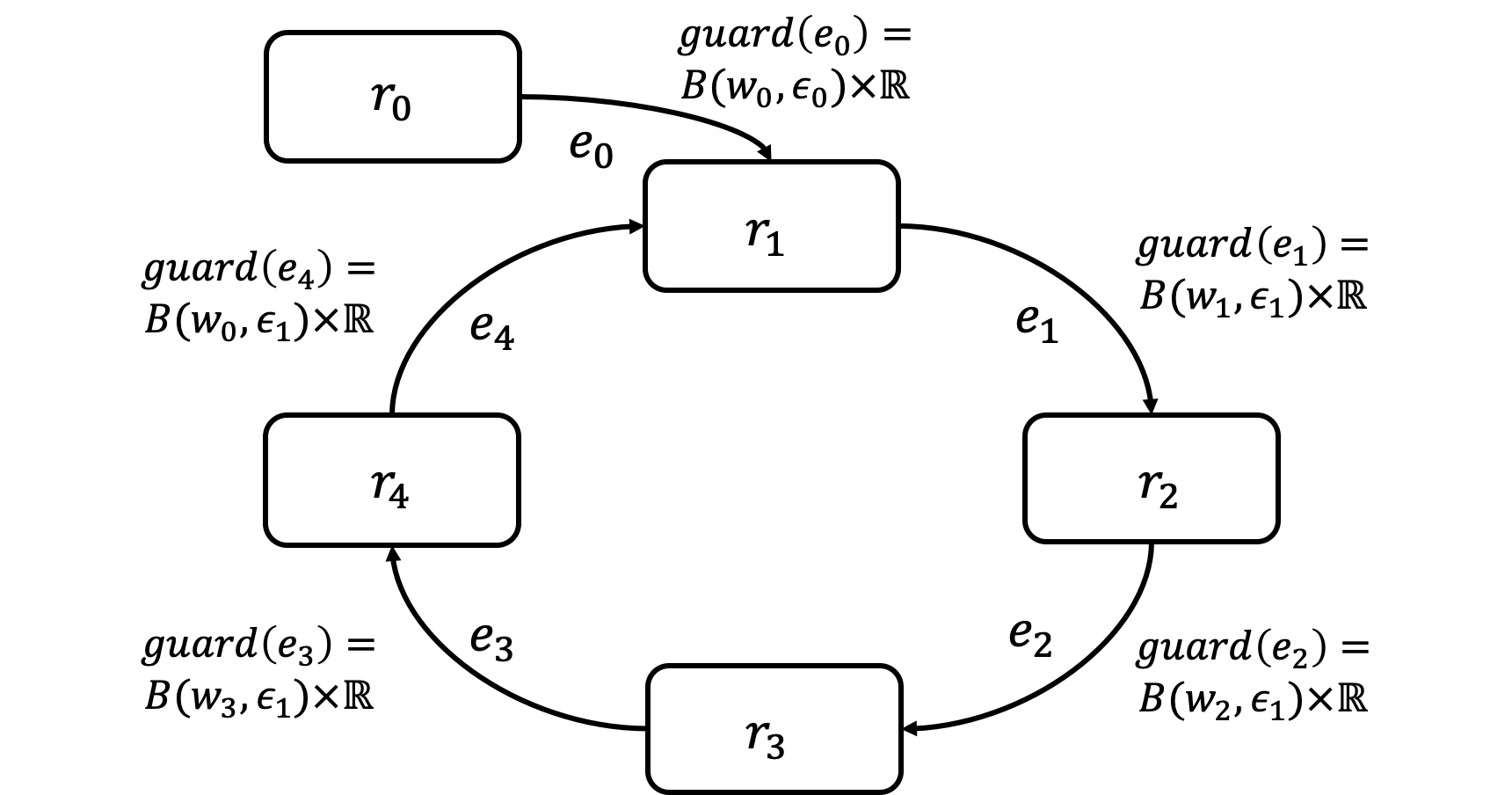} 
		\caption{\scriptsize The state machine representing the discrete transitions of the hybrid automaton $R$ describing the scenario in Figure~\ref{fig:problem_description} with the roads being the modes. The resets are omitted since they are just the identity map for all the modes. \label{fig:edges_modes_original_statemachine}}
	\end{figure}
	
\end{example}

\begin{example}[Robot coordinate transformation symmetry]
	\label{sec:transformation_example_edges}
	We consider the new concrete model introduced in Example~\ref{sec:single_robot_example_edges} of the scenario described in Figure~\ref{fig:problem_description}. Fix a vector $p^* \in \reals^4$, where the first two coordinates $\mathit{p^*.\src}$ define the start point and the last two coordinates $p^*.\dest$ specifying the end point in the plane. Such a $p^*$ is similar to the roads in Figure~\ref{fig:problem_description}. We define $\gamma_\coortrans: \stateset \rightarrow \stateset$ and $\rho_\coortrans: P \rightarrow P$ to be the maps that transform the coordinate system of the plane where the robot and roads reside. These maps transform it so that  $p^*$ will be collinear with the $x[0]$-axis and $p^*.\dest$ be the origin of the system. 
	Formally, for every $\stateinstance \in \stateset$ and $p\in P$, 
	\begin{align}
	\gamma_\coortrans(x) &= [{\bf R}_\theta (x[0:1] - p^*.\dest), x[2] - \theta], \\ %
	\rho_\coortrans(p ) &=  [{\bf R}_\theta (p.\src - p^*.\dest), {\bf R}_\theta (p.\dest - p^*.\dest)], 
	\end{align}
	where $\theta = \arctan_2(p^*.\dest[1] - p^*.\src[1], p^*.\dest[0] -  p^*.\src[0])$ and
	\begin{align}
	{\bf R}_\theta =
	\left[
	\begin{matrix}
	\cos(\theta) &\sin(\theta) \\
	-\sin(\theta) &\cos(\theta)
	\end{matrix}
	\right]
	\end{align} 
	is the rotation matrix with angle $\theta$.
	Then, we can check with simple algebra, that for all $\stateinstance  \in \stateset$ and $p \in P$, $\frac{\partial \gamma_\coortrans}{\partial x} f(\stateinstance,p) = f (\gamma_\coortrans(\stateinstance ), \rho_\coortrans (p ))$. 
	
	
	\begin{figure}[t!]
		\centering
		\includegraphics[width=0.5\textwidth]{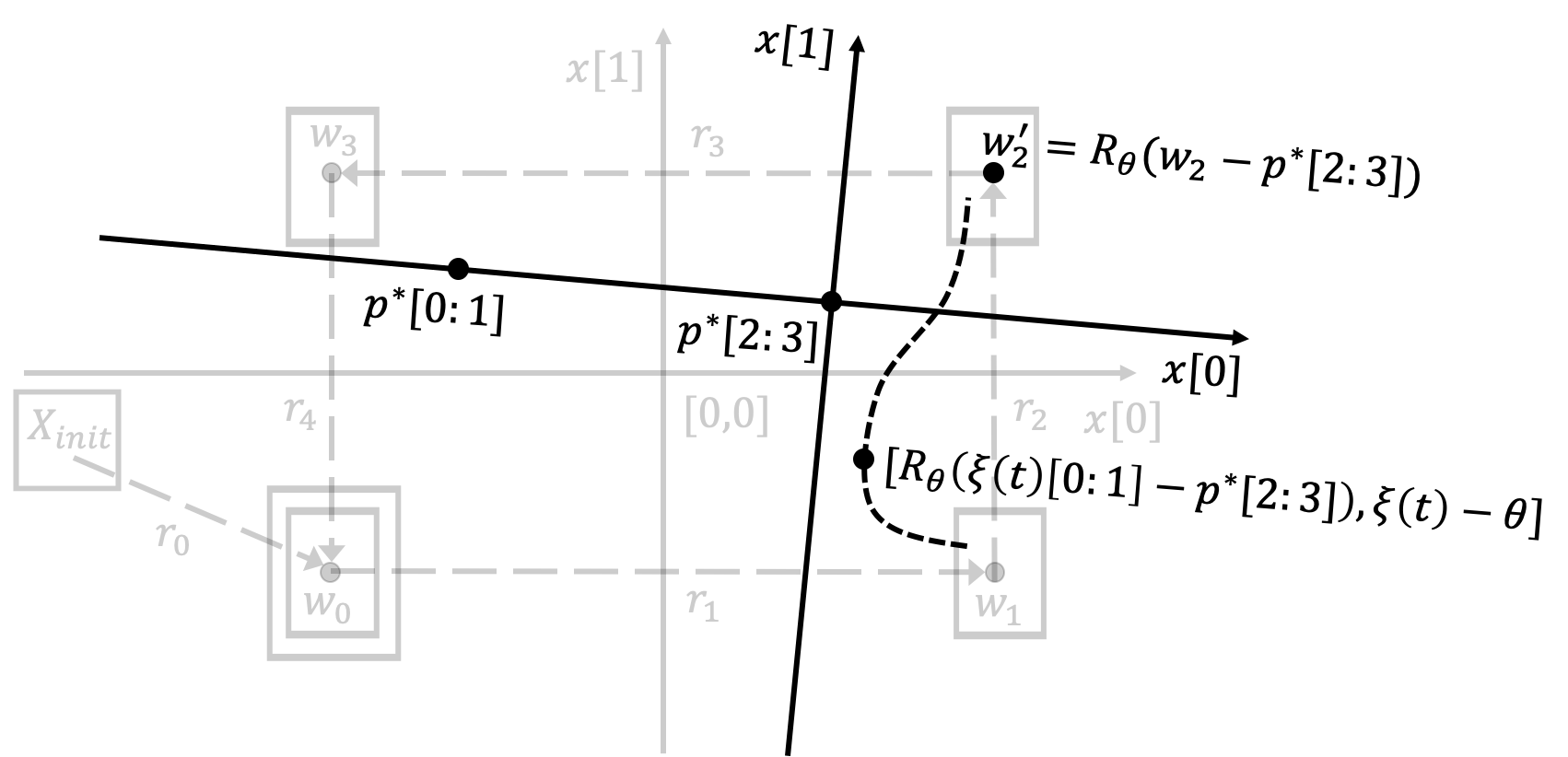} 
		\caption{\scriptsize Changing the axes of the coordinate system so that: the segment connecting $p^*.\src = p^*[0:1]$ to $p^*.\dest = p^{*}[2:3]$ is the new $\stateinstance[0]$-axis and $p^{*}.\dest$ is the new origin. Such a transformation does not affect the intrinsic behavior of the robot, but only transforms the states in its trajectories to conform with the new coordinate system. Such a coordinate transformation is a valid symmetry. \label{fig:edges_modes_transformation}}
	\end{figure}
\end{example}

\begin{example}[Robot virtual system with a coordinate transformation virtual map]
	\label{sec:virtual_example_roads}
	Consider the scenario described in Example~\ref{sec:single_linear_example} and Figure~\ref{fig:problem_description} and its hybrid automaton modeling in Example~\ref{sec:single_robot_example_edges}. To construct a virtual system, we use a virtual map $\Phi_{\roha}$ based on the transformations in Example~\ref{sec:transformation_example_edges}:
	for any $p \in P$, we choose the transformations $\gamma_{p }$ and $\rho_{p}$ to be $\gamma_{\coortrans}$ and $\rho_\coortrans$ of Example~\ref{sec:transformation_example_edges} with $p^*$ being $p$. The resulting virtual automaton would be:
	$R_{v} = \langle \stateset_{ v}, P_{ v}, \initsetv, \initmodev,\edgeset_{v}, \guard_{v}, \reset_{ v}, f_{v} \rangle$, where
	
	\begin{enumerate}[label=(\alph*)]
		\item $\stateset_{v} = \stateset  = \mathbb{R}^3$ and 
		$P_{ v} = \{ p_{ v, 0} = [-\sqrt{5}, 0, 0, 0], p_{v, 1} = [-3, 0,0,0], p_{v, 2} =[-5,0,0,0] \}$, \label{item:def_eg_2_virtual_mode_set}
		\item $\initsetv = \gamma_{\initmode}(\initset)$ and  $\initmodev = \rv (\initmode) = p_{v, 0}$, 
		\item $\edgeset_{v} = \{\edgeinstance_{{ v, 0}} = [p_{v, 0},p_{ v, 1}], \edgeinstance_{{v, 1}} = [p_{ v, 1},p_{ v, 2}], \edgeinstance_{{ v, 2}} = [p_{ v, 2},p_{v, 1}] \} $, \label{item:def_eg_2_virtual_edgeset}
		%
		%
		\item 
		\begin{align*}
		\guard_{v}(e_{v,i})= 
		\begin{cases}
		{\bf R}_{\arctan_2(-1,2)} \ball(p_{v,i}.\dest, \epsilon_0) \times \reals, \text{ if } i = 0,\\
		{\bf R}_{0} \ball(p_{v,i}.\dest, \epsilon_1) \times \reals, \text{ if } i = 1,\\ 
		{\bf R}_{\pi/2} \ball(p_{v,i}.\dest,\epsilon_1) \times \reals, \text{ if } i = 2,
		\end{cases} 
		\end{align*} 
		where $p_{v,i} = e_{v,i}.\src, \forall i \in [3]$,
		\label{item:def_eg_2_virtual_guard}
		\item $\forall \stateinstance_{v} \in \stateset_{v}$, 
		\begin{align*}
		\reset_{v}(\stateinstance_{v}, \edgeinstance_{v,i}) =
		\begin{cases}
		\{\gamma_{ p_{1}}(\gamma_{p_{0}}^{-1}(x_{v})) \}, \text{ if } i = 0, \\
		\{\gamma_{p_{2}}(\gamma_{p_{1}}^{-1}(x_{v})),  \gamma_{p_{4}}(\gamma_{p_{3}}^{-1}(x_{v}))\}, \text{ if } i = 1,\\
		\{\gamma_{p_{3}}(\gamma_{p_{2}}^{-1}(x_{v})),  \gamma_{p_{4}}(\gamma_{p_{1}}^{-1}(x_{v}))\}, \text{ if } i = 2,
		\end{cases}
		\end{align*}
		\label{item:def_eg_2_virtual_reset}
		\item 
		$\forall \stateinstance_v \in \stateset_v, \forall p_{v} \in P_{v},  f_{v}(\stateinstance_{v}, p_v) =  f(\stateinstance_{v}, p_{v}.\dest)$. \label{item:def_eg_2_virtual_dynamics}
	\end{enumerate} 
The resets of the guards of the three edges in $\edgeset_{v}$ constitute the set of all possible reseted states. They are shown as the rectangles on the $x[0]$-axis, but not at the origin, in Figure~\ref{fig:edge_modes_virtual}.
	
The new virtual automaton $R_{v}$ has three modes and three edges versus the single mode and single edge of  $W_{v}$. However, $R_{v}$ has guards and reseted guards of smaller volume. To see that, check Figure~\ref{fig:waypoint_modes_virtual} and compare it with Figure~\ref{fig:edge_modes_virtual}. In Figure~\ref{fig:waypoint_modes_virtual}, the reset of the guard of the only edge consists of four rectangles, from which the trajectory of the mode $p_v$ can start. That in addition to the initial set $\initset$. On the other hand, in Figure~\ref{fig:edge_modes_virtual}, $R_v$ has three modes and three edges. Yet, the guards and reseted guards are overlapping. This suggests that the reach set of $R_{v}$ has a smaller volume than that of $W_{v}$. That in turn means that the reach set computation would be generally easier and faster for $R_{v}$ than that for $W_{v}$, $W$, and $R$.


	
\begin{figure}[t!]
	\centering
	\begin{subfigure}[t]{0.5\textwidth}
		\centering
		\includegraphics[width=\textwidth]{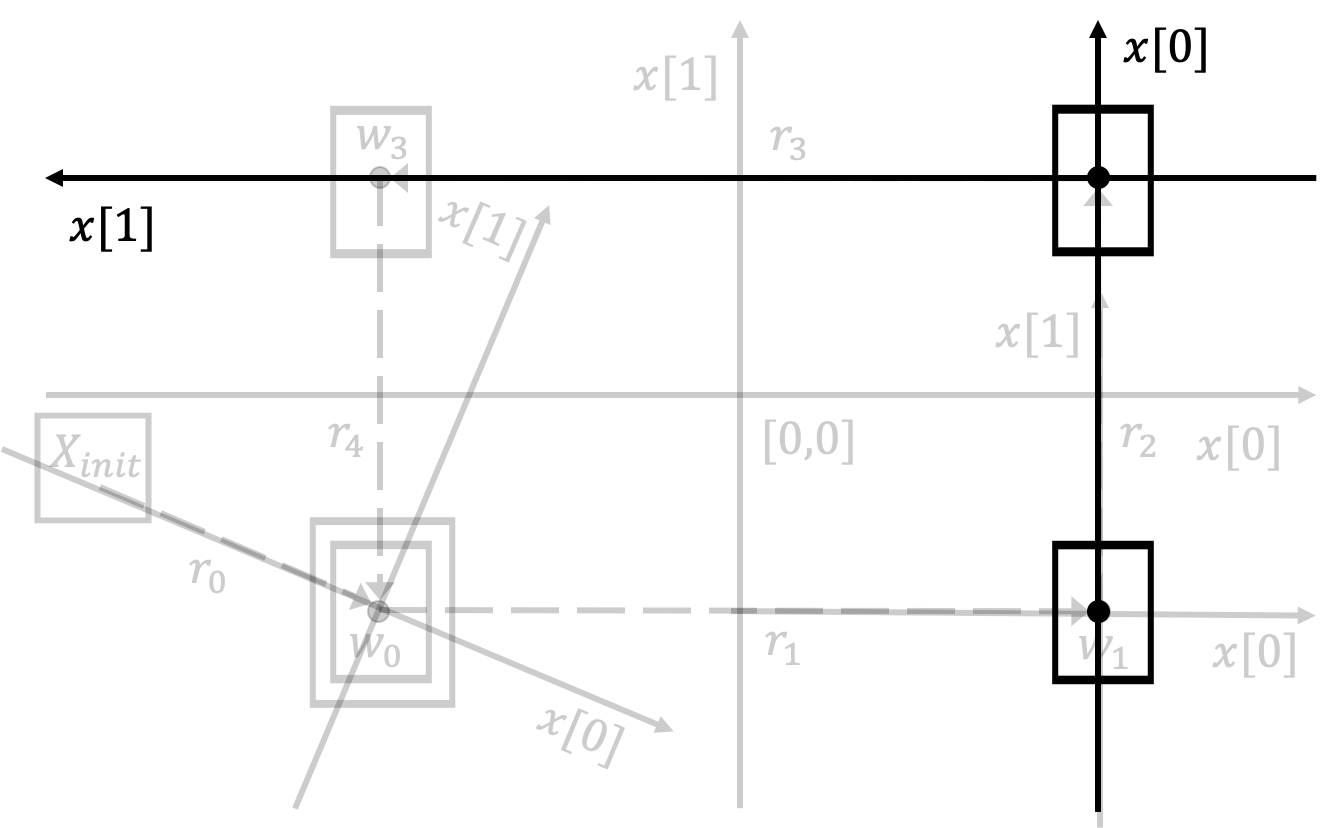} 	
		\caption{ \label{fig:edge_modes_virtual_transformation}}
		\vspace{\floatsep}
	\end{subfigure}
	\begin{subfigure}[t]{0.5\textwidth}
		\centering
		\includegraphics[width=\textwidth]{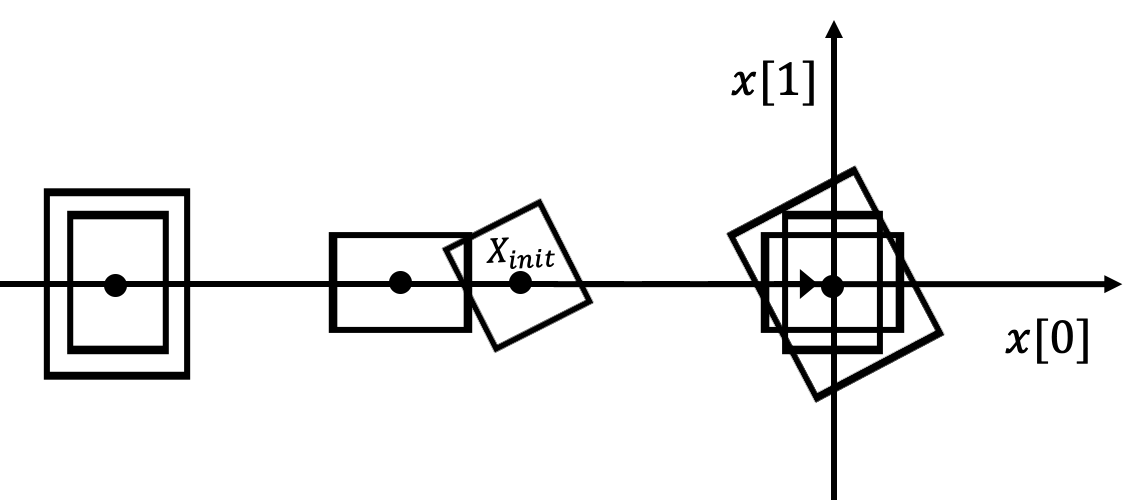}
		\caption{\label{fig:edge_modes_virtual}}
		\vspace{\floatsep}
	\end{subfigure}
	\begin{subfigure}[t]{0.5\textwidth}
		\centering
		\includegraphics[width=\textwidth]{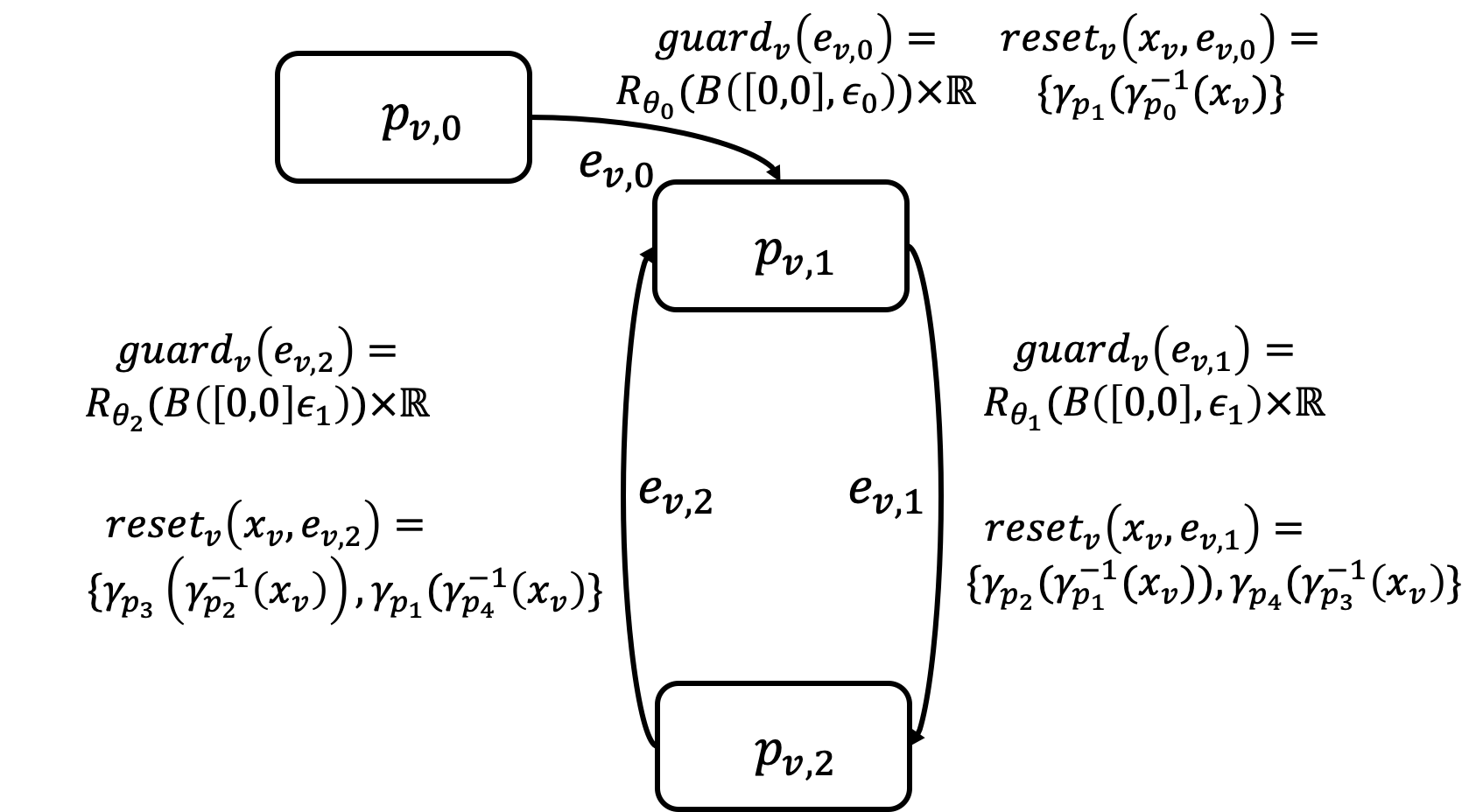}
		\caption{\label{fig:edge_modes_virtual_statemachine}}
		\vspace{\floatsep}
	\end{subfigure}
	\caption{\scriptsize (\ref{fig:edge_modes_virtual_transformation}) the symmetry transformation that changes the coordinate system of the plane of Figure~\ref{fig:problem_description} so that $r_2$ is the new $x_[0]-$axis and wayppoint $w_2$ is the new origin. (\ref{fig:edge_modes_virtual}) shows the virtual initial set, guards and reseted guards of the virtual automaton of Figure~\ref{fig:problem_description} after choosing the virtual map to be the set of coordinate transformations to the roads. 
		(\ref{fig:edge_modes_virtual_statemachine}) shows the resulting hybrid automaton $R_{v}$, where $\theta_0 = \arctan_2(r_0.\dest[1] - r_0.\src[1], r_0.\dest[0] - r_0.\src[0])$, $\theta_1 = \arctan_2(r_1.\dest[1] - r_1.\src[1], r_1.\dest[0] - r_1.\src[0])$, $\theta_2 = \arctan_2(r_2.\dest[1] - r_2.\src[1], r_2.\dest[0] - r_2.\src[0])$. \label{fig:edge_modes_virtual_automaton}}
\end{figure}
\end{example}

\section{Faster safety verification using Symmetry-based abstraction}
\label{sec:abstraction_application}
In this section, we show an example application of the abstraction in accelerating safety verification of hybrid automata. 

The key factors that motivate this section are:
\begin{inparaenum}[(a)]
\item in general, the computation of the reachset of the virtual automaton $\ha_v$ reaches a fixed point faster than that of the concrete one $\ha$,
\item one can obtain the reachset of the concrete automaton $\Reach_\ha$ by transforming the reachset of the virtual one $\Reach_{\ha_v}$, and
\item in general, transforming reachsets is computationally cheaper than computing them from scratch.
\end{inparaenum}

Before delving into the details, in this section, we add the assumption that the modes of $\ha$ have time bounds $\timebound: P \rightarrow \nnreals$. Thus, guards may be ignored only up to a point\footnote{This is a simplification for analysis but would not hurt generalizability as the same mode can be visited several times. The results of the paper extend naturally to models with  urgent transitions.}.  
However, once the total time in mode $p$  reaches $\timebound(p)$, the trajectory must stop. If the time bound is reached and some of the guards are satisfied, then $\ha$ transitions over any of the corresponding edges nondeterministically. Otherwise, the execution stops.
Accordingly, we adjust the definition of $\ha_v$ in Definition~\ref{def:hybridautomata_virtual} to include time bounds $\timebound_v: P_v \rightarrow \nnreals$ on the virtual modes. Specifically, for any $p \in P_v$, 
\begin{align}\timebound_v(p_v) = \max\limits_{\substack{p \in \rv^{-1}(p_v)}} \timebound(p).
\label{item:def_virtual_timebound}
\end{align}
It is easy to see that the FSR in Theorem~\ref{thm:fsr_concrete_virtual} is still valid.

\subsection{Safety verification problem: definition, existing solutions, and their challenges}
\label{sec:safety_verification_problem}
The {\em bounded safety verification}  problem is to check if any state reachable by $\ha$ within fixed number of transitions is unsafe.
That is, 
given maximum number of transitions $J$ 
and an unsafe set $U \subseteq \stateset$, the problem is to check whether: 
$\reachset_{\ha}(J) \cap U = \emptyset$.
In the {\em unbounded} version of the problem, the number of transitions
may be infinite, 
and we replace the bounded reachset with the unbounded one $\Rtube_\ha$.


%

For any 
 path $\path = \{p_i\}_{i \in [J]}$, existing reachability analysis tools compute the reachsets of the modes sequentially. Roughly, for each index $i \in [J]$ in the path, they compute the reachset, or an over-approximation, of the $i^{\mathit{th}}$ mode reachset $\Reach_{\ha,\path,i}$,
 intersect $\Reach_{\ha,\path,i}$ with $\guard(\path[i],\path[i+1])$, and then apply $\reset$ to the result of the intersection to get the initial set of states for the next mode $\path[i+1]$ in the path.
 Consequently, the reachset of the path would be:
\begin{align}
\label{eq:path_reachset}
	\Reach_{\ha,\path} = \cup_{i \in [J]} \Reach_{\ha,\path,i}.
\end{align} 
We call the modes reachsets $\Reach_{\ha,\path,i}$ {\em reachset segments}.
Moreover, the reachset of the hybrid automaton $\Reach_\ha$ would be:
\begin{align}
\label{eq:automaton_reachset}
 \Reach_{\ha}(J) = \cup_{\path \in \pathset_\ha(J)} \Reach_{\ha,\path},
\end{align} 
where $\pathset_\ha(J)$ is the set of all paths of length $J$ of $\ha$.

As $J$ grows, sequential computation of the per-mode reachsets becomes infeasible. Some existing theorems for unbounded reachset computation, such as Theorem 4.4 in \cite{FanQMV:CAV2017}, assume that the unbounded-time reachset is a bounded set, so that a fixed point can be checked.
However, the reachset $\Reach_\ha$ may become unbounded, which motivates the search for new approaches for unbounded safety verification. That is what we tackle in this section.

\subsection{Relation between concrete and virtual automata reachset segments}
\label{sec:relating_vir_real_reachsets}

The ability of symmetry maps to transform solutions to other solutions of the system extends to transforming reachsets to other reachsets of the system. The following theorem, restated from \cite{Sibai:TACAS2020}, formalizes this for parameterized dynamical systems. 

\begin{theorem}[Theorem 2 in \cite{Sibai:TACAS2020}]
	\label{thm:tube_trans_input}
	If (\ref{sys:input}) is $\Gamma$-equivariant, then for any $\gamma \in \Gamma$ and its corresponding $\rho$, any initial set $K \subseteq \stateset$, mode $p\in P$, and $T\geq 0$,
	\[
	\reachset(\gamma(K), \rho(p), T) = \gamma (\reachset(K,p, T)),
	\]
	where $\Reach(K,p,T)$ is the reachset of (\ref{sys:input}) starting from  $K$, having mode $p $, and a time bound $T$.
\end{theorem}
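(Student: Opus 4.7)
My plan is to prove set equality by establishing two inclusions, using Theorem~\ref{thm:sol_transform_input_nonlinear} as the main tool. The key observation is that the reachset is defined pointwise in terms of trajectories, and Theorem~\ref{thm:sol_transform_input_nonlinear} already tells us exactly how symmetry transforms individual trajectories; the set-level result should therefore follow by just quantifying correctly over initial states and times.

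For the forward inclusion $\gamma(\Reach(K,p,T)) \subseteq \Reach(\gamma(K), \rho(p), T)$, I will take an arbitrary $y \in \gamma(\Reach(K,p,T))$, write it as $y = \gamma(x)$ with $x \in \Reach(K,p,T)$, and unfold the reachset definition to get $x_0 \in K$ and $t \in [0,T]$ such that $x = \xi(x_0,p,t)$. Applying Theorem~\ref{thm:sol_transform_input_nonlinear} gives $y = \gamma(\xi(x_0,p,t)) = \xi(\gamma(x_0), \rho(p), t)$. Since $\gamma(x_0) \in \gamma(K)$ and $t \in [0,T]$, this places $y$ in $\Reach(\gamma(K), \rho(p), T)$.

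For the reverse inclusion $\Reach(\gamma(K), \rho(p), T) \subseteq \gamma(\Reach(K,p,T))$, I will use the invertibility of $\gamma$ guaranteed by Definition~\ref{def:symmetry}. Given $y \in \Reach(\gamma(K), \rho(p), T)$, I have $y_0 \in \gamma(K)$ and $t \in [0,T]$ with $y = \xi(y_0, \rho(p), t)$. Writing $y_0 = \gamma(x_0)$ for some $x_0 \in K$ and applying Theorem~\ref{thm:sol_transform_input_nonlinear} again yields $y = \xi(\gamma(x_0), \rho(p), t) = \gamma(\xi(x_0, p, t))$, exhibiting $y$ as the image under $\gamma$ of a point in $\Reach(K,p,T)$.

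I do not expect any real obstacle here: the hard work is already done by Theorem~\ref{thm:sol_transform_input_nonlinear}, which lifts equivariance of $f$ to equivariance at the trajectory level. The only small care needed is to remember that $\gamma$ is invertible (ensured by being a symmetry per Definition~\ref{def:symmetry}) so that every element of $\gamma(K)$ genuinely comes from a preimage in $K$; without this, only the forward inclusion would hold. A one-line remark could also note that the same proof immediately generalizes to reachsets over arbitrary time intervals $[t_1,t_2]$, which is the form actually invoked in the abstraction construction.
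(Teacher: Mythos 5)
Your proof is correct and takes exactly the route the paper intends: Theorem~\ref{thm:tube_trans_input} is imported from \cite{Sibai:TACAS2020} without a separate proof in this paper, and the argument is precisely the pointwise, two-inclusion application of Theorem~\ref{thm:sol_transform_input_nonlinear} that you give. One small correction: invertibility of $\gamma$ is not actually what the reverse inclusion needs, since every $y_0 \in \gamma(K)$ has a preimage in $K$ by the very definition of the image set; what makes that direction work is that Theorem~\ref{thm:sol_transform_input_nonlinear} (together with uniqueness of solutions under the Lipschitz assumption) identifies the trajectory $\xi(\gamma(x_0),\rho(p),\cdot)$ with $\gamma(\xi(x_0,p,\cdot))$.
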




\subsection{Relation between concrete and virtual automata reachsets}
\label{sec:real_from_virtual_reachset}

Consider any algorithm or tool that computes the reachset of a hybrid automaton as in equations~(\ref{eq:path_reachset}) and (\ref{eq:automaton_reachset}). Lets call it $\computereachset$ and call its output $\globalR$. We use $\computereachset$ to compute the reachset of the virtual automaton $\ha_v$.
We annotate $\computereachset$ to keep track of the union of the initial sets and the union of the reachsets being computed corresponding to each unique mode $p_v \in P_v$, in a data structure and call it $\permodedict$. For every $p_v \in P_v$, it stores the initial set in the first argument, which we denote by $\permodedict[p_v].\modeinitset$, and the reachset in the second argument, which we denote by $\permodedict[p_v].\modereachset$. After each computation of a new reachset, or an over-approximation thereof,  $\Reach_{\ha_v,\path_v,i}$ of a mode in a path, it gets added to the entry corresponding to $\path_v[i]$ in $\permodedict$. Then, $\permodedict$ is used to check if a fixed point of the automaton reachset $\globalR$,
  has been reached.  This checks if continuing running of $\computereachset$ would not change $\globalR$, and then stopping the run accordingly. 

Our check for the fixed point is implemented in the function $\checkFixedPoint$. It takes as input $\permodedict$ and a hybrid automaton for which we are computing the reachset, here $\ha_v$, and returns $\mathtt{True}$ or $\mathtt{False}$.

The function $\checkFixedPoint(\permodedict)$ returns $\mathtt{True}$ if and only if: $\initsetv \subseteq \permodedict[\initmodev].K$ and $\forall\ p_v \in P_v,$
\begin{align}
\label{eq:fixed_point_condition}
&\cup_{p_v' \in \inmodes (\mathit{p_v})} \reset(\permodedict[p_v'].R \cap  \guard(p_v',p_v), (p_v',p_v)) \nonumber \\
 &\hspace{0.5in}\subseteq \permodedict[p_v].K,
\end{align}
where $\inmodes(\mathit{p_v})$ $= \{p_v' \in P_v\ |\ (p_v',p_v) \in \edgeset_v\}$, and returns $\mathtt{False}$, otherwise.

\begin{theorem}
	\label{thm:fixed_point_condition}
	If $\checkFixedPoint(\permodedict)$ returned $\mathtt{True}$,
	then for all $\path_v \in \pathset_{\ha_v}$ and $i \in [\path_v.\mathit{len}]$,
	\begin{align}
	\label{eq:fixed_point_puzzle_pieces} 
	\Reach_{\ha_v,\path_v,i} \subseteq \permodedict[\path_v[i]].\modereachset,  \text{ and} 
	\end{align}
	 $\reachset_{\ha_v} \subseteq \globalR$.
\end{theorem}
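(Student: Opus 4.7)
The plan is to proceed by strong induction on the position $i$ along an arbitrary path $\path_v \in \pathset_{\ha_v}$, establishing the segment containment (\ref{eq:fixed_point_puzzle_pieces}), and then deducing the reachset containment by taking unions as in equations (\ref{eq:path_reachset}) and (\ref{eq:automaton_reachset}). Throughout, I will rely on the monotonicity of the one-mode reachset operator used by $\computereachset$: if $K \subseteq K'$ then $\Reach(K, p_v, \timebound_v(p_v)) \subseteq \Reach(K', p_v, \timebound_v(p_v))$, which follows directly from uniqueness of trajectories under Lipschitz $f$. I will also rely on the invariant maintained by $\computereachset$ that $\permodedict[p_v].\modereachset$ is (an over-approximation of) the reachset of mode $p_v$ starting from $\permodedict[p_v].\modeinitset$, i.e.\ $\Reach(\permodedict[p_v].\modeinitset,\ p_v,\ \timebound_v(p_v)) \subseteq \permodedict[p_v].\modereachset$.

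For the base case $i=0$, fix a path $\path_v$ with $\path_v[0] = \initmodev$. The initial set for this segment is $\initsetv$, and the first hypothesis $\initsetv \subseteq \permodedict[\initmodev].\modeinitset$ of $\checkFixedPoint$ together with monotonicity gives
\begin{equation*}
\Reach_{\ha_v,\path_v,0} = \Reach(\initsetv, \initmodev, \timebound_v(\initmodev)) \subseteq \Reach(\permodedict[\initmodev].\modeinitset, \initmodev, \timebound_v(\initmodev)) \subseteq \permodedict[\initmodev].\modereachset,
\end{equation*}
which is the desired inclusion for $i=0$.

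For the inductive step, assume $\Reach_{\ha_v,\path_v,i} \subseteq \permodedict[\path_v[i]].\modereachset$. The initial set of the $(i{+}1)$-th segment, by the per-path construction recalled before equation~(\ref{eq:path_reachset}), is $\reset\big(\Reach_{\ha_v,\path_v,i} \cap \guard_v(\path_v[i],\path_v[i{+}1]),\ (\path_v[i],\path_v[i{+}1])\big)$. Using the inductive hypothesis, this set is contained in $\reset(\permodedict[\path_v[i]].\modereachset \cap \guard_v(\path_v[i],\path_v[i{+}1]), (\path_v[i],\path_v[i{+}1]))$, which, since $\path_v[i] \in \inmodes(\path_v[i{+}1])$, is one of the terms unioned on the left-hand side of (\ref{eq:fixed_point_condition}) for $p_v = \path_v[i{+}1]$. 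Hence, by the fixed point hypothesis, it is contained in $\permodedict[\path_v[i{+}1]].\modeinitset$. Applying monotonicity of $\Reach$ and the invariant of $\computereachset$ as in the base case closes the inductive step.

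The second claim $\reachset_{\ha_v} \subseteq \globalR$ then follows by summing over paths: using (\ref{eq:path_reachset}) and (\ref{eq:automaton_reachset}), $\reachset_{\ha_v} = \bigcup_{\path_v \in \pathset_{\ha_v}} \bigcup_{i} \Reach_{\ha_v,\path_v,i} \subseteq \bigcup_{p_v \in P_v} \permodedict[p_v].\modereachset \subseteq \globalR$, where the last inclusion holds because $\permodedict$ accumulates exactly the per-mode reachset segments produced by $\computereachset$. The main subtlety will be arguing carefully that the per-path reachset construction used implicitly in (\ref{eq:path_reachset}) agrees with the operational semantics of $\computereachset$ (so that the fixed point condition, which speaks about unions over incoming virtual edges rather than about specific paths, indeed dominates the path-indexed inductive construction); this amounts to observing that every concrete transition along a path of $\ha_v$ uses an edge $(\path_v[i],\path_v[i{+}1]) \in \edgeset_v$ and hence contributes to the $p_v = \path_v[i{+}1]$ union in (\ref{eq:fixed_point_condition}).
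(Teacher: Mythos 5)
Your proposal is correct and follows essentially the same argument as the paper: induction over the indices of an arbitrary path of $\ha_v$, using $\initsetv \subseteq \permodedict[\initmodev].\modeinitset$ for the base case, the fixed point condition~(\ref{eq:fixed_point_condition}) together with the inductive hypothesis to contain each segment's initial set in $\permodedict[\path_v[i+1]].\modeinitset$, and soundness of $\computereachset$ (which you phrase more explicitly as monotonicity plus the stored-reachset invariant) to close each step, followed by a union over paths for $\Reach_{\ha_v} \subseteq \globalR$. No substantive difference from the paper's proof.
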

\begin{proof}(Sketch)
Recall that we assume that $\computereachset$ is a sound algorithm for computing reachsets of any hybrid automaton. Thus, for every $p_v \in P_v$, $\permodedict[p_v].\modereachset$ is an over approximation of the reachset of (\ref{eq:robot_dynamics}) starting from $\permodedict[p_v].\modeinitset$ and running for time bound $\timebound_v[p_v]$.
 Now, we fix any path $\path_v. \in \pathset_{\ha_v}$.
	The proof is by induction over the indices of $\path_v$. 
	
	Base condition: for $i=0$, $\Reach_{\ha_v,\path_v,i} \subseteq \permodedict[i].\modereachset$, by the soundness of $\computereachset$ and the assumption that $\initsetv \subseteq \permodedict[\initmodev].\modeinitset$ in equation~(\ref{eq:fixed_point_puzzle_pieces}).
	
	Hypothesis: fix an index $i\geq 0$. Assume that the theorem is satisfied for all previous indices $i' \leq i$.
	
	Induction: we want to prove that $\Reach_{\ha_v,\path_v,i+1} \subseteq \permodedict[\path_v[i+1]].\modereachset$.
	
	We know from equation~(\ref{eq:path_reachset}) and the discussion there, that the initial set of the $(i+1)^{\mathit{th}}$ mode in $\path_v$ is $\reset(\Reach_{\ha_v,\path_v,i} \cap \guard(\path_v[i],\path_v[i+1]),$ $(\path_v[i],\path_v[i+1]))$. But, we know from the induction assumption that $\Reach_{\ha_v,\path_v,i} \subseteq$ $\permodedict[\path_v[i]]$. Moreover, we know from equation~(\ref{eq:fixed_point_condition}) that  $\reset(\permodedict[\path_v[i]] \cap \guard(\path_v[i],\path_v[i+1]),$ $(\path_v[i],\path_v[i+1])) \subseteq $ $\permodedict[\path_v[i+1]].\modeinitset$. Hence, $\reset(\Reach_{\ha_v,\path_v,i} \cap \guard(\path_v[i],\path_v[i+1]),$ $(\path_v[i],\path_v[i+1])) \subseteq$  $\permodedict[\path_v[i+1]].\modeinitset$.
	Using the soundness of $\computereachset$ again results in $\Reach_{\ha_v,\path_v,i+1} \subseteq \permodedict[\path_v[i+1].\modereachset$. Hence, (\ref{eq:fixed_point_puzzle_pieces}) is satisfied.

	Since the path chosen is arbitrary and the automaton reachset is the union of the path reachsets per equation~(\ref{eq:automaton_reachset}), then $\Reach_{\ha_v} \subseteq \globalR$.
\end{proof}

\begin{corollary}
	\label{cor:vir_dict_to_real_reachset}
	Let $\permodedict$ be the result  after
	 reaching a fixed point in computing $\Reach_{\ha_v}$. Then, for any $J \in \mathbb{N}$ and path $\path \in \pathset_\ha(J)$ of the concrete automaton $\ha$, 
	\begin{align}
	\label{eq:vir_dict_to_real_reachset}
	\Reach_{\ha,\path} \subseteq \cup_{i \in [J]} \gamma_{\path[i]}^{-1}(\permodedict[\path_v[i]]),
	\end{align} 
	where $\path_{v}[i] = \rv(\path[i])$.
\end{corollary}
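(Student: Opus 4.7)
The plan is to reduce the statement to a straightforward combination of three earlier results: the FSR from Theorem~\ref{thm:fsr_concrete_virtual}, the reachset transformation property in Theorem~\ref{thm:tube_trans_input}, and the fixed-point containment established in Theorem~\ref{thm:fixed_point_condition}. Fix an arbitrary path $\path = \{p_i\}_{i \in [J]}$ of $\ha$ and define the lifted path $\path_v = \{\rv(p_i)\}_{i \in [J]}$. The first step is to verify that $\path_v \in \pathset_{\ha_v}(J)$: since each $(p_i, p_{i+1}) \in \edgeset$, Definition~\ref{def:hybridautomata_virtual}\ref{item:def_virtual_edgeset} immediately gives $(\rv(p_i), \rv(p_{i+1})) \in \edgeset_v$, so $\path_v$ is a legal path of the virtual automaton.

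Next, I would show the per-segment containment $\Reach_{\ha, \path, i} \subseteq \gamma_{p_i}^{-1}(\Reach_{\ha_v, \path_v, i})$ for every $i \in [J]$. This is where the FSR of Theorem~\ref{thm:fsr_concrete_virtual} does the heavy lifting: by Theorem~\ref{thm:correspondence}, every execution of $\ha$ following $\path$ lifts to an execution of $\ha_v$ following $\path_v$ with the relation $(x, p) \mathcal{R}_\Phi (x_v, p_v) \iff x_v = \gamma_p(x) \wedge p_v = \rv(p)$. Consequently, for every state $x$ reachable in the $i^{\text{th}}$ concrete segment, the state $\gamma_{p_i}(x)$ is reachable in the $i^{\text{th}}$ virtual segment, i.e.\ $\gamma_{p_i}(\Reach_{\ha, \path, i}) \subseteq \Reach_{\ha_v, \path_v, i}$. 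Applying the invertible map $\gamma_{p_i}^{-1}$ (which exists by Definitions~\ref{def:symmetry} and~\ref{def:virtual_map}) yields the desired inclusion. Alternatively, this inclusion can be derived directly using Theorem~\ref{thm:tube_trans_input} on each continuous segment together with Definition~\ref{def:hybridautomata_virtual}\ref{item:def_virtual_guard}--\ref{item:def_virtual_reset} to track how guards and resets carry across discrete transitions.

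To finish, I invoke Theorem~\ref{thm:fixed_point_condition}: since $\checkFixedPoint(\permodedict)$ returned $\mathtt{True}$, for every virtual path, in particular for the $\path_v$ we constructed, we have $\Reach_{\ha_v, \path_v, i} \subseteq \permodedict[\path_v[i]].\modereachset$ for each $i$. Chaining this with the containment from the previous paragraph gives
\[
\Reach_{\ha, \path, i} \subseteq \gamma_{p_i}^{-1}\bigl(\Reach_{\ha_v, \path_v, i}\bigr) \subseteq \gamma_{p_i}^{-1}\bigl(\permodedict[\path_v[i]].\modereachset\bigr),
\]
and taking the union over $i \in [J]$, together with $\Reach_{\ha, \path} = \cup_{i \in [J]} \Reach_{\ha, \path, i}$ from equation~(\ref{eq:path_reachset}), yields exactly equation~(\ref{eq:vir_dict_to_real_reachset}).

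The main obstacle I anticipate is the careful bookkeeping across discrete transitions in the inductive step when proving $\gamma_{p_i}(\Reach_{\ha, \path, i}) \subseteq \Reach_{\ha_v, \path_v, i}$ without relying on Theorem~\ref{thm:correspondence}, because one must show that the initial set of the $(i{+}1)^{\text{st}}$ virtual segment, namely $\reset_v\bigl(\Reach_{\ha_v, \path_v, i} \cap \guard_v(\path_v[i], \path_v[i{+}1]),\, (\path_v[i], \path_v[i{+}1])\bigr)$, contains $\gamma_{p_{i+1}}\bigl(\reset(\Reach_{\ha, \path, i} \cap \guard(p_i, p_{i+1}), (p_i, p_{i+1}))\bigr)$; this reduces to unpacking the unions in Definition~\ref{def:hybridautomata_virtual}\ref{item:def_virtual_guard}--\ref{item:def_virtual_reset} and noticing that the specific edge $(p_i, p_{i+1})$ contributes its transformed guard and reset as one of the terms in the union. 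Routing the argument through Theorem~\ref{thm:correspondence} sidesteps this bookkeeping entirely, which is why I would favor that route.
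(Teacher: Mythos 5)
Your proposal is correct and takes essentially the same approach as the paper: the paper's own proof is a one-line citation of Theorem~\ref{thm:tube_trans_input}, Theorem~\ref{thm:fixed_point_condition}, and the invertibility of the maps $\gamma_p$, which is exactly the chain you spell out (lift $\path$ through $\rv$ to a virtual path, transform each segment, apply the fixed-point containment, and pull back with $\gamma_{\path[i]}^{-1}$). Your preferred detour through Theorem~\ref{thm:correspondence} is only a cosmetic variant—note that that theorem as stated relates only the last states of the trajectory segments, so to cover entire reachset segments you ultimately rely on the trajectory-level identity of Theorem~\ref{thm:sol_transform_input_nonlinear}/Theorem~\ref{thm:tube_trans_input}, which is the route the paper cites.
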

\begin{proof}
	It follows from Theorem~\ref{thm:tube_trans_input}, Theorem~\ref{thm:fixed_point_condition}, and the invertiblity assumption on $\gamma_p$, $\forall p \in P$.
\end{proof}

Hence, we can compute an over-approximation of the reachset of the concrete automaton $\ha$ by running $\computereachset$ to compute $\Reach_{\ha_v}$ and get its $\permodedict$, iterate over all the paths of $\ha$, transform back the corresponding reachset from $\permodedict$ at each mode visited to get the reachset segment.
However, we do not need to iterate over them sequentially as they appear in the path, since for any $i \in [J]$, the $i^\mathit{th}$ entry of the sequence on the right-hand-side of (\ref{eq:vir_dict_to_real_reachset}) only depends on $\path[i]$. Therefore, we can get the $i^{\mathit{th}}$ reachset segment of the path by just transforming the corresponding $\permodedict$ entry. The transformation is done using the symmetry in $\Phi$ corresponding to the $i^{\mathit{th}}$ mode. Thus, there is no need to compute the whole reachset before reaching the $i^{\mathit{th}}$ mode, to get its reachset.
This will be the key result that helps getting unbounded safety results in the next section.

\subsection{Unbounded safety verification of hybrid automata}
\label{sec:unbounded-verification}
In this section, we cultivate all the theorems presented to verify the safety of hybrid automata with symmetric continuous dynamics.
Our approach is summarized in Algorithm~\ref{code:unbounded}, which we name $\unboundedverif$. We use the symbol $\rightsquigarrow$ to denote that there exists a path from the source mode to the destination one in $\ha$.

 \begin{algorithm}
	\small
	\caption{$\unboundedverif$}
	\label{code:unbounded}
	\begin{algorithmic}[1]
		\State \textbf{input:} $\ha, \Phi, U, J$
		\State $\ha_v	\gets {\sf constructVirtualModel} (\ha, \Phi)$
		\State $\permodedict_v \gets \computereachset(\ha_v, J)$
		\If{not $\exists\ p \in P, \initmode \rightsquigarrow p,\text{ and }  \gamma_{p}^{-1}(\permodedict[\rv(p)].\modereachset) \cap U \neq \emptyset, $}
		\State \textbf{return: } $\safe$
		\Else
		\textbf{ return: } $\mathit{unknown}$
		\EndIf
	\end{algorithmic}
\end{algorithm}

The following theorem shows the soundness of $\unboundedverif$. It follows from Theorem~\ref{thm:fixed_point_condition} and Corollary~\ref{cor:vir_dict_to_real_reachset}. 
  
\begin{theorem}[main theorem]
\label{thm:unboundedcodecorrectness}
Given any hybrid automaton $\ha$, virtual map $\Phi$, an unsafe set $U \subseteq \stateset$, and a $J \in \mathbb{N} \cup \{\infty\}$,
if $\unboundedverif$ returned $\safe$, then $\Reach_{\ha}(J) \cap U = \emptyset$.
\end{theorem}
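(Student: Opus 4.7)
The plan is to argue by contrapositive: assuming $\Reach_{\ha}(J) \cap U \neq \emptyset$, I will exhibit a reachable mode $p \in P$ for which $\gamma_p^{-1}(\permodedict[\rv(p)].\modereachset) \cap U \neq \emptyset$, which contradicts the \textsf{if} test in $\unboundedverif$ succeeding and returning $\safe$. The work is essentially to glue together the three ingredients already proved earlier in the paper: the fixed-point soundness (Theorem~\ref{thm:fixed_point_condition}), the concrete-from-virtual reachset bound (Corollary~\ref{cor:vir_dict_to_real_reachset}), and the executions correspondence induced by the FSR (Theorem~\ref{thm:fsr_concrete_virtual} together with Theorem~\ref{thm:correspondence}).

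First, I would observe that when $\unboundedverif$ reaches the \textsf{if} test, the preceding call $\computereachset(\ha_v,J)$ has populated $\permodedict_v$ and $\checkFixedPoint$ has certified that (\ref{eq:fixed_point_condition}) holds (this is the implicit termination criterion of $\computereachset$ when $J=\infty$, and holds trivially for finite $J$). By Theorem~\ref{thm:fixed_point_condition}, for every virtual path $\path_v \in \pathset_{\ha_v}(J)$ and every index $i$, the segment $\Reach_{\ha_v,\path_v,i}$ is contained in $\permodedict[\path_v[i]].\modereachset$.

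Second, suppose for contradiction that $\Reach_{\ha}(J) \cap U \neq \emptyset$. By the decomposition in (\ref{eq:automaton_reachset}) there exist a concrete path $\path \in \pathset_{\ha}(J)$ and an index $i \in [\path.\mathit{len}]$ such that the segment $\Reach_{\ha,\path,i}$ meets $U$. Applying Corollary~\ref{cor:vir_dict_to_real_reachset} to this path yields $\Reach_{\ha,\path,i} \subseteq \gamma_{\path[i]}^{-1}(\permodedict[\rv(\path[i])].\modereachset)$, so the set $\gamma_{\path[i]}^{-1}(\permodedict[\rv(\path[i])].\modereachset)$ also meets $U$. Since $\path$ starts at $\initmode$, the mode $p := \path[i]$ is reachable from $\initmode$ in $\ha$, i.e.\ $\initmode \rightsquigarrow p$; this witnesses exactly the condition in line 4 of the algorithm, contradicting the assumption that $\unboundedverif$ returned $\safe$.

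The main obstacle I anticipate is bookkeeping the virtual path $\path_v$ associated with the concrete path $\path$. Here I would invoke the FSR from Theorem~\ref{thm:fsr_concrete_virtual} together with Theorem~\ref{thm:correspondence}: any execution of $\ha$ along $\path$ lifts to an execution of $\ha_v$ along $\path_v = \rv \circ \path$, and it is precisely this lifted path whose segments are covered by $\permodedict$ via Theorem~\ref{thm:fixed_point_condition}. The case $J = \infty$ requires the extra remark that Corollary~\ref{cor:vir_dict_to_real_reachset} is uniform in the index $i$, so once a fixed point is reached, the per-mode bound transports to arbitrarily long concrete paths without needing to first build $\Reach_{\ha}$ globally; this is exactly what makes unbounded-time verification feasible.
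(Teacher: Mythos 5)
Your proposal is correct and follows essentially the same route as the paper, which gives no detailed argument beyond stating that the theorem follows from Theorem~\ref{thm:fixed_point_condition} and Corollary~\ref{cor:vir_dict_to_real_reachset}; your contrapositive gluing of those two results, with the observation that any mode on a concrete path from $\initmode$ satisfies $\initmode \rightsquigarrow p$ and thus would trigger the algorithm's check, is exactly the intended reasoning. The only caveat worth noting is that for finite $J$ the fixed point need not actually be certified, so the per-mode containment there comes from the dictionary accumulating all computed segments rather than from $\checkFixedPoint$ returning true, but this does not change the argument.
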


There is still a possibility that $\computereachset$ would not reach a fixed point in the case of infinite $J$, in that case, the result is also unknown as well. In that case, the computed reachset should be refined to reduce its over-approximation error, for example, by partitioning the initial set~\cite{C2E2paper,FanQMV:CAV2017} or having higher order Taylor series approximations~\cite{Flow}. Or, the abstraction should be refined, for example, by utilizing more symmetry properties. 

The significance of the theorem is that, once a fixed point in computing $\Reach_{\ha_v}$ is reached and resulted in $\permodedict$, the safety verification problem gets reduced from computing a sequence of reachset segments and intersections with the guards, to searching over the modes if their transformations of their corresponding segments in $\permodedict$ would intersect the unsafe set. For example, assume that the fixed point has been reached after computing five reachset segments. We would be able to compute the reachset segment, or an over-approximation of it, of the hundredth mode in the path of the concrete system, without computing the ninety nine segments of the modes visited before reaching it, but only by transforming one of the five stored segments.
For infinite $J$, one can use a Satisfiability Modulo Theory (SMT) solver, for example, to search for a symmetry map $\gamma_p$ in $\Phi$ corresponding to a mode $p$ in an infinite path of $\ha$, that makes $\gamma_p^{-1}(\permodedict[\rv(p)].R) \cap U \neq \emptyset$. 
\section{Experimental evaluation}
\label{sec:experiments}

We implemented $\unboundedverif$ in Section~\ref{sec:unbounded-verification} in Python 3, for finite $J$.
Our implementation includes:
\begin{inparaenum}[(1)]
 \item the function ${\sf constructVirtualModel}$, that constructs the virtual automaton $\ha_v$ from a given concrete automaton $\ha$ and virtual map $\Phi$, 
 \item the function $\computereachset(\ha_v, J)$, which computes the reachset of $\ha_v$ using $\ourtacastool$~\cite{Sibai:TACAS2020} for finite $J$ transitions while checking for fixed point as in equation~(\ref{eq:fixed_point_condition}), and
 \item  the data structure $\permodedict$ which caches computed reachset segments as in Section~\ref{sec:unbounded-verification}.
\end{inparaenum} 

We ran several experiments to illustrate the usability and advantage of our method over existing ones.
Moreover, we illustrate the key parameters that affect the effectiveness of our abstraction and the quality of our results. 

\subsection{Implementation details}
\label{sec:implementation_description}

The
 inputs of our implementation are a  {\em scenario} and {\em dynamics} files describing the hybrid automaton $\ha$ and the virtual map $\Phi$. Its outputs are the virtual automaton $\ha_v$ and its reachset segments stored in $\permodedict $, the reachset of $\ha$, and the safety decision. 

The scenario file specifies $\A$ that is: initial set $\initset$ as a hyper-rectangle, unsafe set $U$ as a list of hyperrectangles, set of modes $P$ as a list of tuples, path as a finite sequence of modes $\pathinstance = \{p_i\}_{i=0}^{J-1}$, guards as a list of hyper-rectangles, and time bounds $\timebound$ as a list of floats. 

The dynamics file specifies the dynamic function $f$, which given a state $x$ and mode $p$, returns $f(x, p)$. It has also three other functions that implement the virtual map $\Phi$.  The first two functions, given a mode $p \in P$ and a set of states $S$ that is represented as a Tulip polytope\footnote{\url{https://tulip-control.github.io/polytope/}}~\cite{Tulip}, one returns $\gamma_p(S)$ and the other $\gamma_p^{-1}(S)$.
The third function, given a mode $p \in P$, would return $\rv(p) = \rho_{p}(p)$.

Our implementation has 3 options for  computing $\Reach_\ha$:
\begin{itemize}
\item the standard method without symmetry (\Nosym\  or \NS),
\item the method of~\cite{Sibai:TACAS2020} using symmetry and caching (\Symcache\ or \SC), and
\item $\unboundedverif$ (Algorithm~\ref{code:unbounded}, \Symvir\ or \SV).
\end{itemize}
{\em Reachability analysis:} 
To fix the over-approximation error added by the reachability analysis tool used, our implementation grids both the state space $\stateset$ of $\ha$ and the state space $\stateset_{v}$ of $\ha_v$ into equally-sized cells. 
Then, for \NS, while computing $\Reach_\ha$ as in Section~\ref{sec:safety_verification_problem}, for any initial set $\modeinitset$ for which a reachset segment has to be computed, it checks which cells of the grid over $\stateset$ are occupied by $\modeinitset$. For \SC, it would transform $\modeinitset$ using $\Phi$ to get the initial set $\modeinitset_v$ in $\stateset_v$. After that, it check which cells of the grid over $\stateset_v$ are occupied by $\modeinitset_v$. For \SV, while computing $\Reach_{\ha_v}$, for any initial set $\modeinitset_v$ in $\stateset_v$ for which a reachset segment has to be computed, it checks which cells of the grid over $\stateset_v$ are occupied by $\modeinitset_v$.
Then, for \NS, it computes the reachsets for each of these cells, and unions the results. 
For \SC\ and \SV, it checks first which cells have their reachsets computed before and cached. For those cells, the reachsets are retrieved from the cache instead of computed. For cells with no cached reachsets, they get computed and cached. 
After that, our implementation unions the cells reachsets and transform the result back to $\stateset$ using $\Phi$.
Doing this would fix the over-approximation error added by the reachability analysis tool, since that depends on the size of the initial set it is asked to compute the reachset for. 
Thus, fixing the grid size would enable quantifying the over-approximation error added because of using symmetry after fixing the error added by the reachability tool. Moreover, this would allow for a fairer comparison of computation time between the methods.


For \SV, after reaching a fixed point in the computation of $\Reach_{\ha_v}$, the resulting $\permodedict $ is used to get the reachset segments of $\ha$, without the griding-based method described.


 Any of the existing reachability analysis tools can be used, such as DryVR~\cite{FanQM18} and Flow*~\cite{Flow}. Our implementation has both of these tools as options for the user. However, for comparison purposes in our experiments, it simulates the continuous dynamics from the center state of a given cell using an ODE solver, and then places hyper-rectangles equal to the size of the cell at each state in the simulation. The union of these rectangles is considered the reachset. 
 This is computationally cheaper than any of the existing reachability analysis tools. This will ensure that the computation time improvement of \SC\ and \SV\ over \NS\ is not due to using a computationally expensive reachability tool. Such a tool will make the advantage \SC\ and \SV\ even bigger, as they retrieve and transform some of the reachsets versus computing them.
%


Finally, to check if a scenario is safe, our implementation intersects the computed reachset and the unsafe set.

\subsection{Scenarios and metrics }
\label{sec:scenarios}


We consider several scenarios and virtual maps in our experiments, with the following features:
\begin{enumerate}
\item  {\em dynamic} function of the agent in the scenario:
\begin{inparaenum}[(a)]
	\item  {\em robot}, that of equation~(\ref{eg_item:def_2_robot_dynamics}) of Example~\ref{sec:single_robot_example_edges}, and 
	\item  {\em linear}, the linear stable dynamics of the form $\stateinstance = \mathit{diag}([-3,-3,-1]) (\stateinstance - p)$, where $\stateinstance$ and $p \in \reals^3$,
\end{inparaenum}
\item {\em path} followed by the agent as a sequence of modes: 
\begin{inparaenum}[(a)]
\item {\em rectangle} (rect.), that of Figure~\ref{fig:problem_description} followed for 4 full turns for a total of 16 roads,
\item {\em Koch snowflake} (Ko.), a truncated Koch snowflake\footnote{The actual Koch snowflake is a fractal. Here we truncate the construction after a finite number of iterations to get a snowflake shape with finite edges. } path with 16 roads (see Figure~\ref{fig:linear-snowflake-t-r-sym2-real}),
\item {\em random} (rand.), a random path of 14 roads (see Figure~\ref{fig:linear-random-t-r-sym2-real}), and 
\item finally, {\em S-shaped}, an $S$-shaped path with 16 roads (see Figure~\ref{fig:linear-S-t-r-sym2-real}),
\end{inparaenum}
\item virtual map $\Phi$ used to construct $\ha_v$: 
\begin{inparaenum}[(a)]
\item {\em T}, translation map that translates the coordinates of both waypoints of a road so that the end point is the origin, and
\item {\em TR}, combines {\em T} with rotation of axes so that the road is the new $x[1]$-axis. That is in contrast with Example~\ref{sec:transformation_example_edges}, where we chose the road to be the $x[0]$-axis, and
\end{inparaenum}
\item {\em sym} $\in \{$\NS, \SC, \SV$\}$, method used to computed $\Reach_\ha$.
\end{enumerate}

For each of the scenarios, we collected several statistics:
\begin{enumerate}
\item {\em computed} (\#{\em co}),  {\em retrieved} (\#{\em re}), {\em copied} (\#{\em cp}), {\em total} (\#{\em tot.}): are the numbers of cells reachsets that has been computed from scratch, has been retrieved from the cache, number of reachset segments that has been transformed from $\permodedict $, after a fixed point has been reached, and the sum of all three numbers, respectively,
\item \#$m / e$, are the numbers of modes and edges of $\ha_v$,
\item {\em time}, in minutes, is the total time needed to compute the reachset of $\ha$, and
\item over-approximation {\em error} (\%) is the error added to the reachset of $\ha$, due to using symmetry:
\[
\mathit{error} = \mathit{avg}_{i\in [\mathit{path}.\mathit{len}]}\frac{\mathit{Vol}(\modeinitset_{j,i}) - \mathit{Vol}(\modeinitset_{\text{\NS},i})}{\mathit{Vol}(\modeinitset_{\text{\NS},i})} \times 100,
\]
where $\mathit{Vol}(\cdot)$ returns the volume of the given hyperrectangle and $\modeinitset_{j,i}$ denotes the $i^{\mathit{th}}$ mode initial set of the path using method $j \in \{$\NS, \SC, \SV$\}$.

\end{enumerate}

\begin{figure*}[t!]
\begin{subfigure}[t]{0.25\textwidth}
	\centering
	\includegraphics[width=\textwidth]{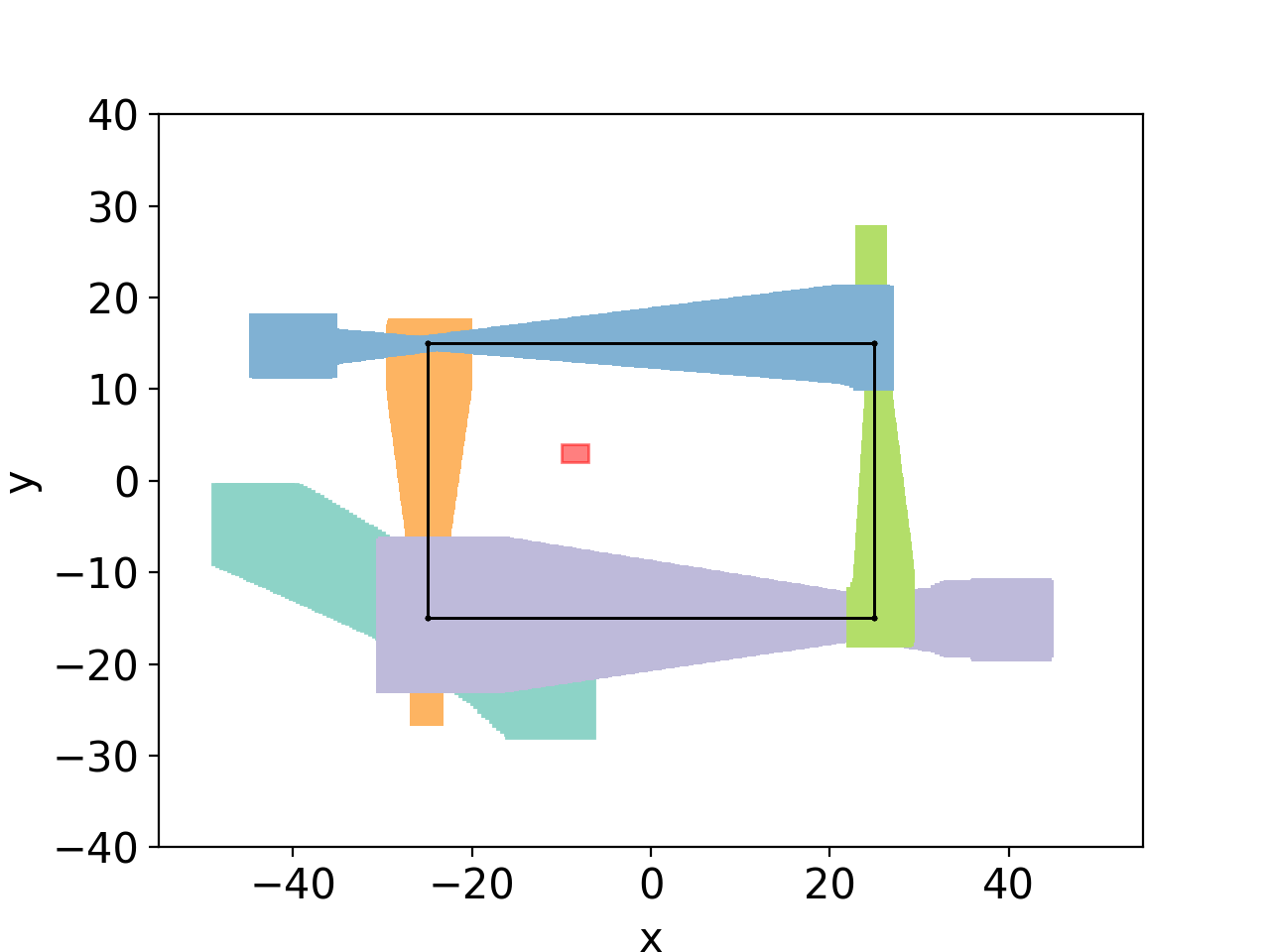}
	\caption{\label{fig:robot-rect-t-sym2-real}}
	\vspace{\floatsep}
\end{subfigure}
\hspace{-0.1in}
\begin{subfigure}[t]{0.25\textwidth}
	\centering
	\includegraphics[width=\textwidth]{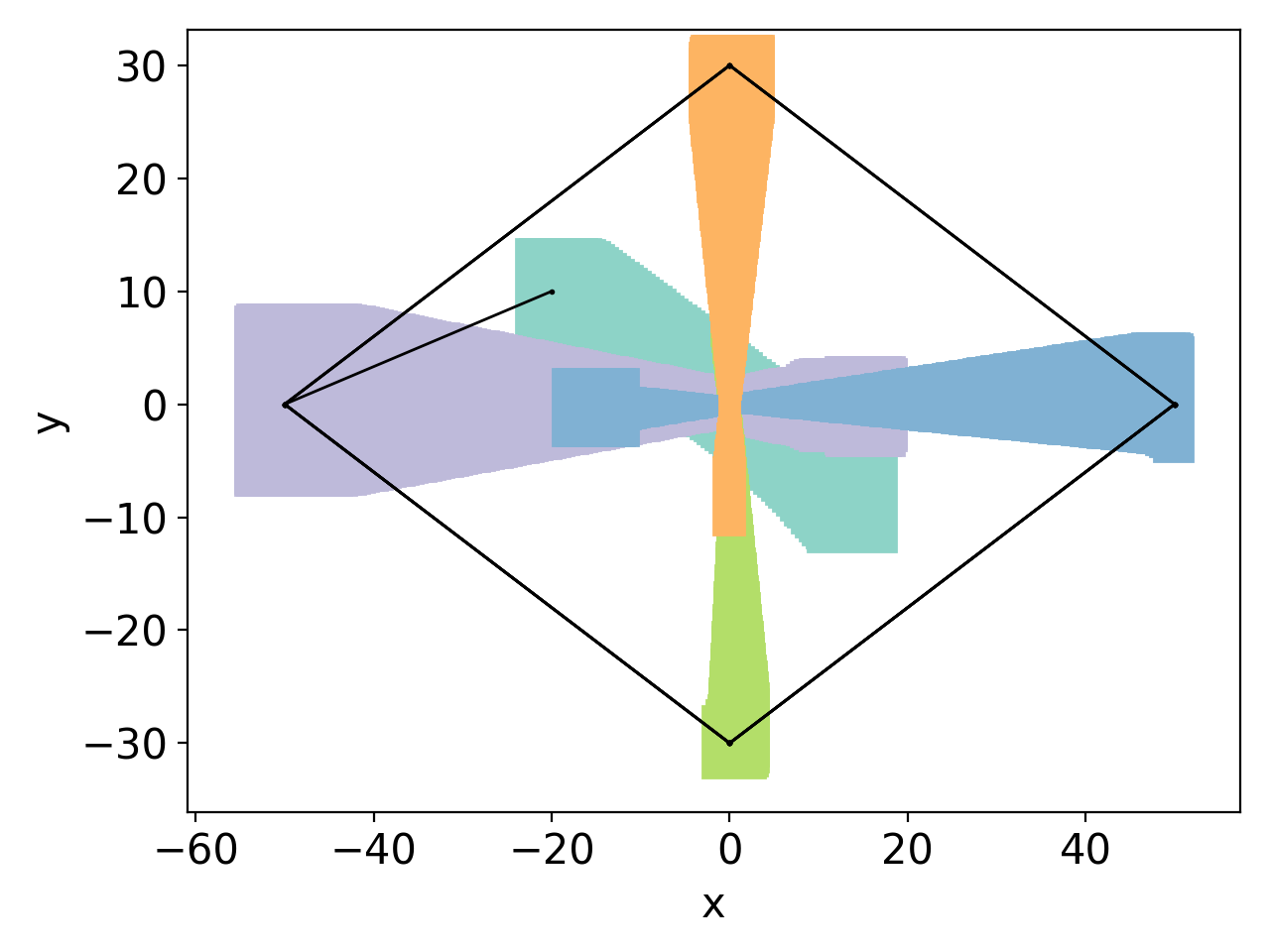}
	\caption{\label{fig:robot-rect-t-sym2-virtual}}
	\vspace{\floatsep}
\end{subfigure}
\hspace{-0.1in}
\begin{subfigure}[t]{0.25\textwidth}
	\centering
	\includegraphics[width=\textwidth]{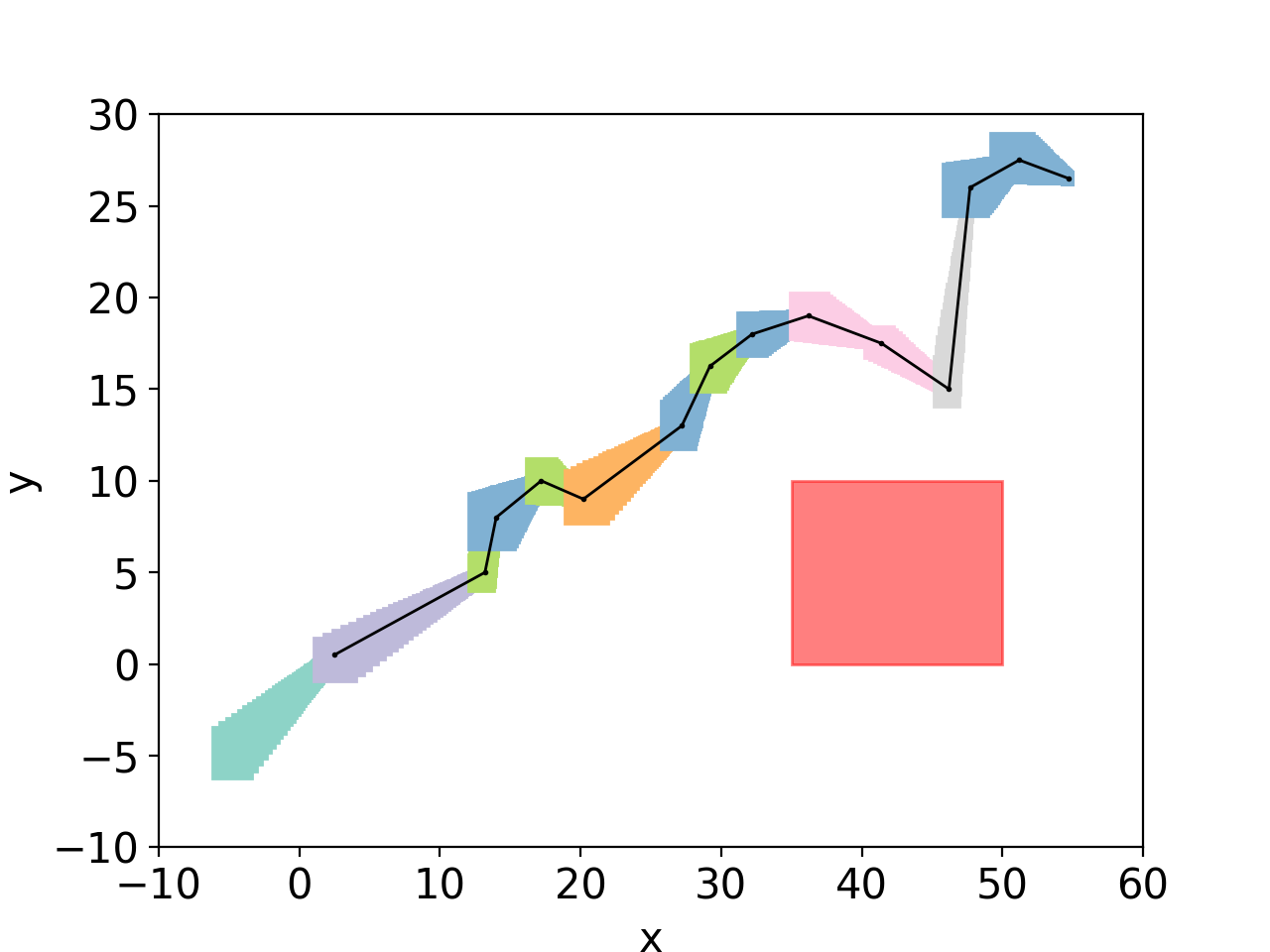}
	\caption{\label{fig:linear-random-t-r-sym2-real}}
	\vspace{\floatsep}
\end{subfigure}
\hspace{-0.1in}
\begin{subfigure}[t]{0.25\textwidth}
	\centering
	\includegraphics[width=\textwidth]{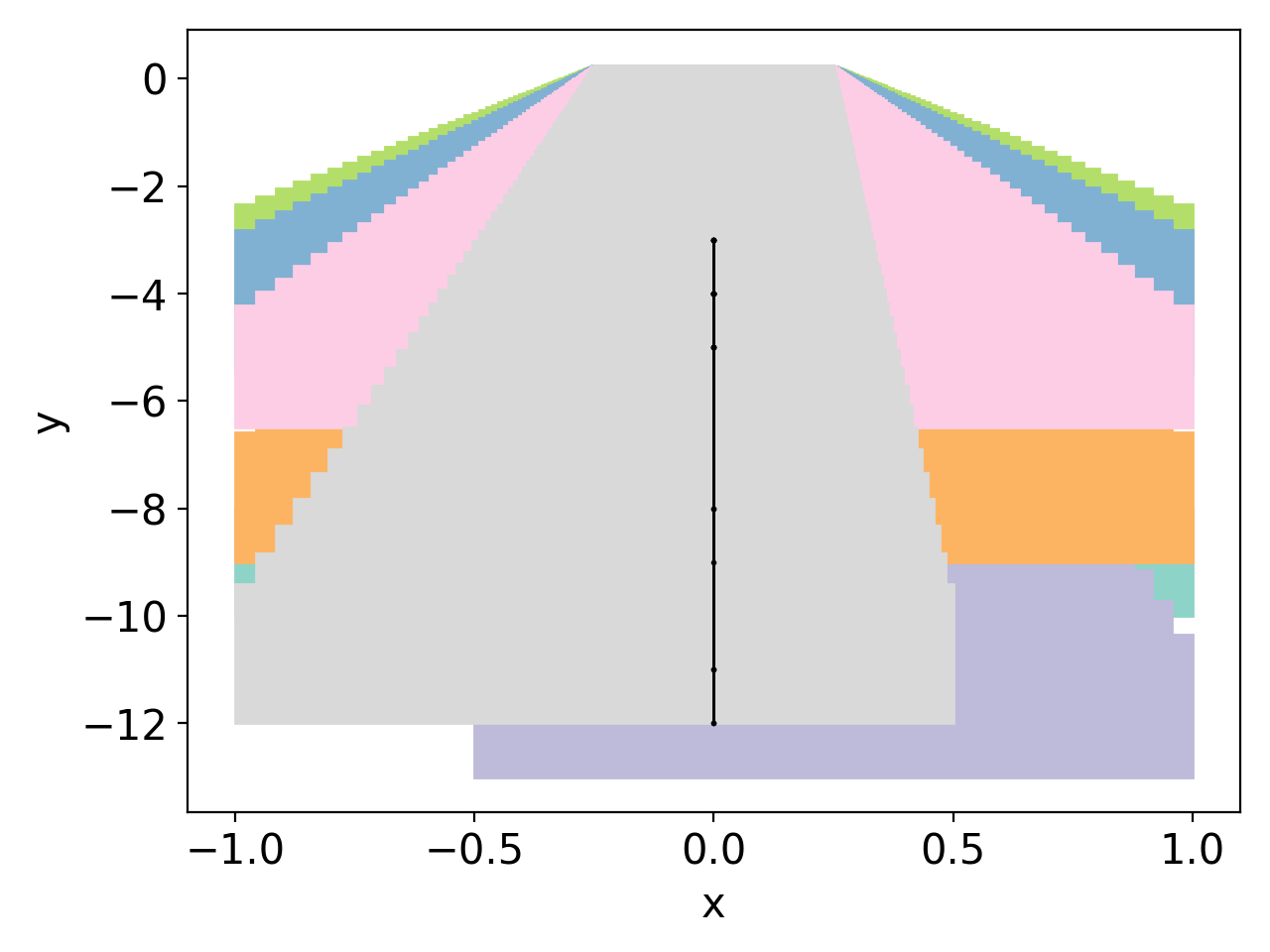}
	\caption{\label{fig:linear-random-t-r-sym2-virtual}}
	\vspace{\floatsep}
\end{subfigure}
\hspace{-0.1in}
	\begin{subfigure}[t]{0.25\textwidth}
		\centering
		\includegraphics[width=\textwidth]{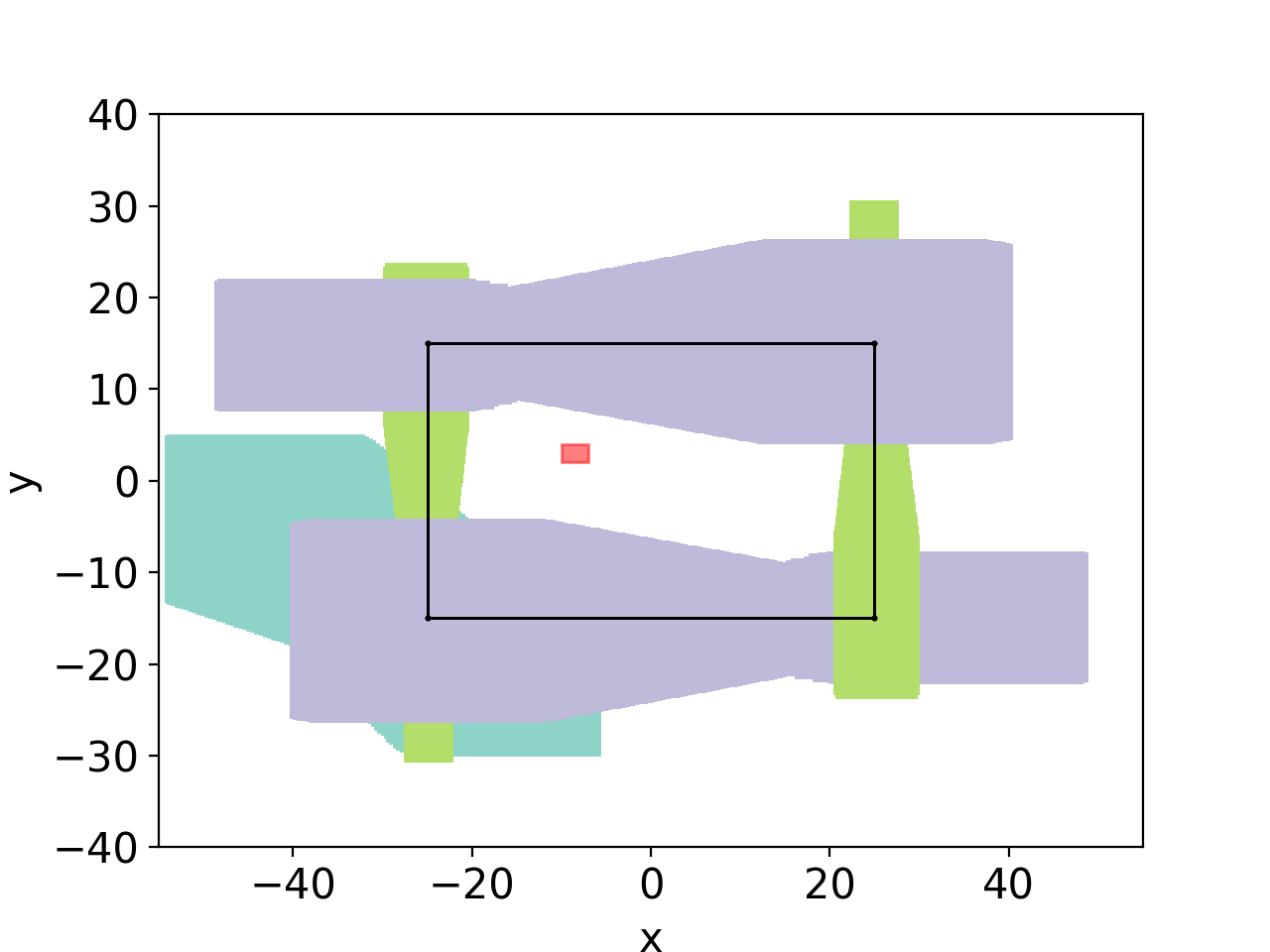}
		\caption{\label{fig:robot-rect-t-r-sym2-real}}
		\vspace{\floatsep}
	\end{subfigure}
\hspace{-0.1in}
	\begin{subfigure}[t]{0.25\textwidth}
		\centering
		\includegraphics[width=\textwidth]{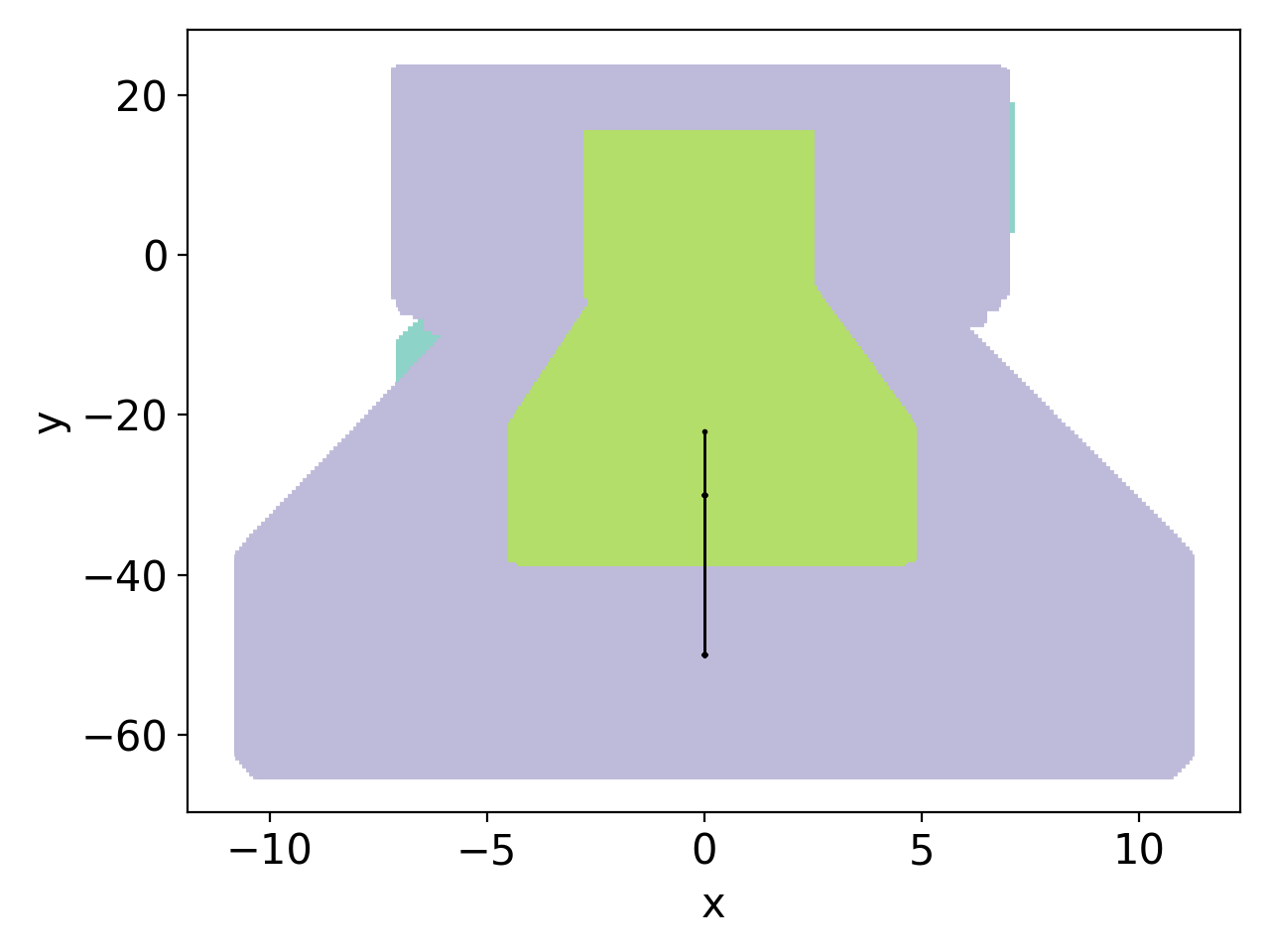}
		\caption{\label{fig:robot-rect-t-r-sym2-virtual}}
		\vspace{\floatsep}
	\end{subfigure}
\hspace{-0.1in}
\begin{subfigure}[t]{0.25\textwidth}
	\centering
	\includegraphics[width=\textwidth]{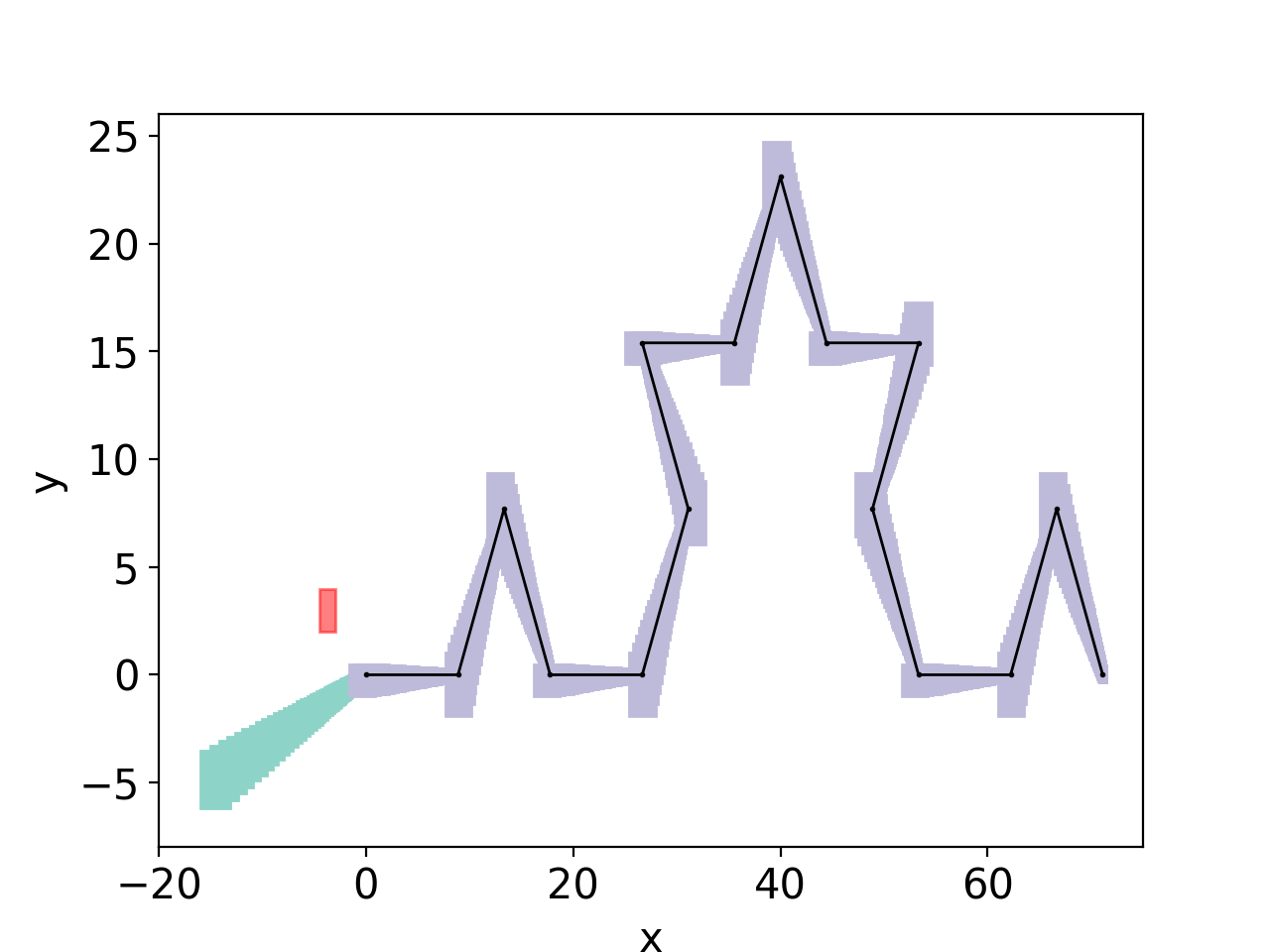}
	\caption{\label{fig:linear-snowflake-t-r-sym2-real}}
	\vspace{\floatsep}
\end{subfigure}
\hspace{-0.1in}
\begin{subfigure}[t]{0.25\textwidth}
	\centering
	\includegraphics[width=\textwidth]{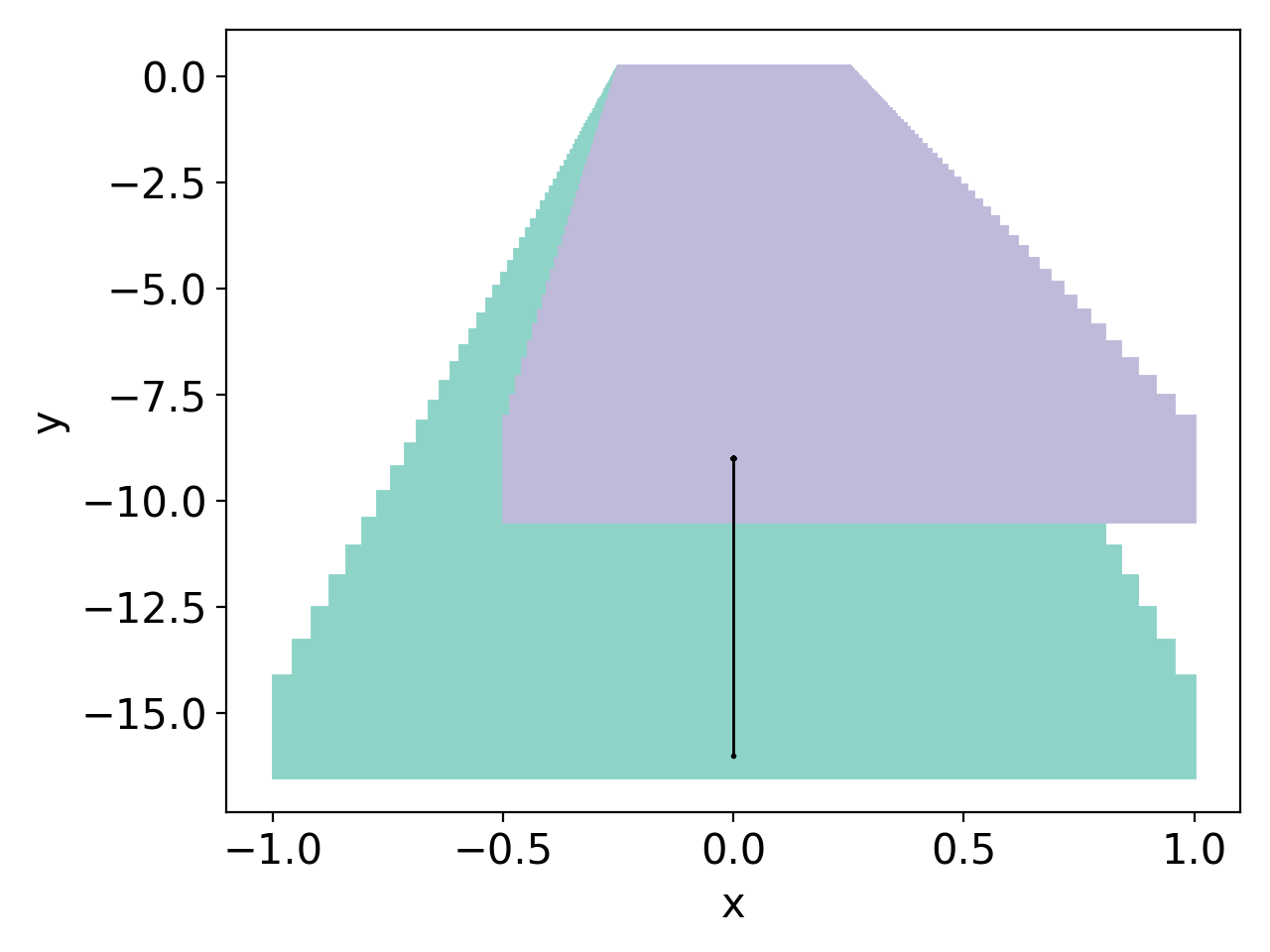}
	\caption{\label{fig:linear-snowflake-t-r-sym2-virtual}}
	\vspace{\floatsep}
\end{subfigure}
\hspace{-0.1in}
	\begin{subfigure}[t]{0.25\textwidth}
		\centering
	\includegraphics[width=\textwidth]{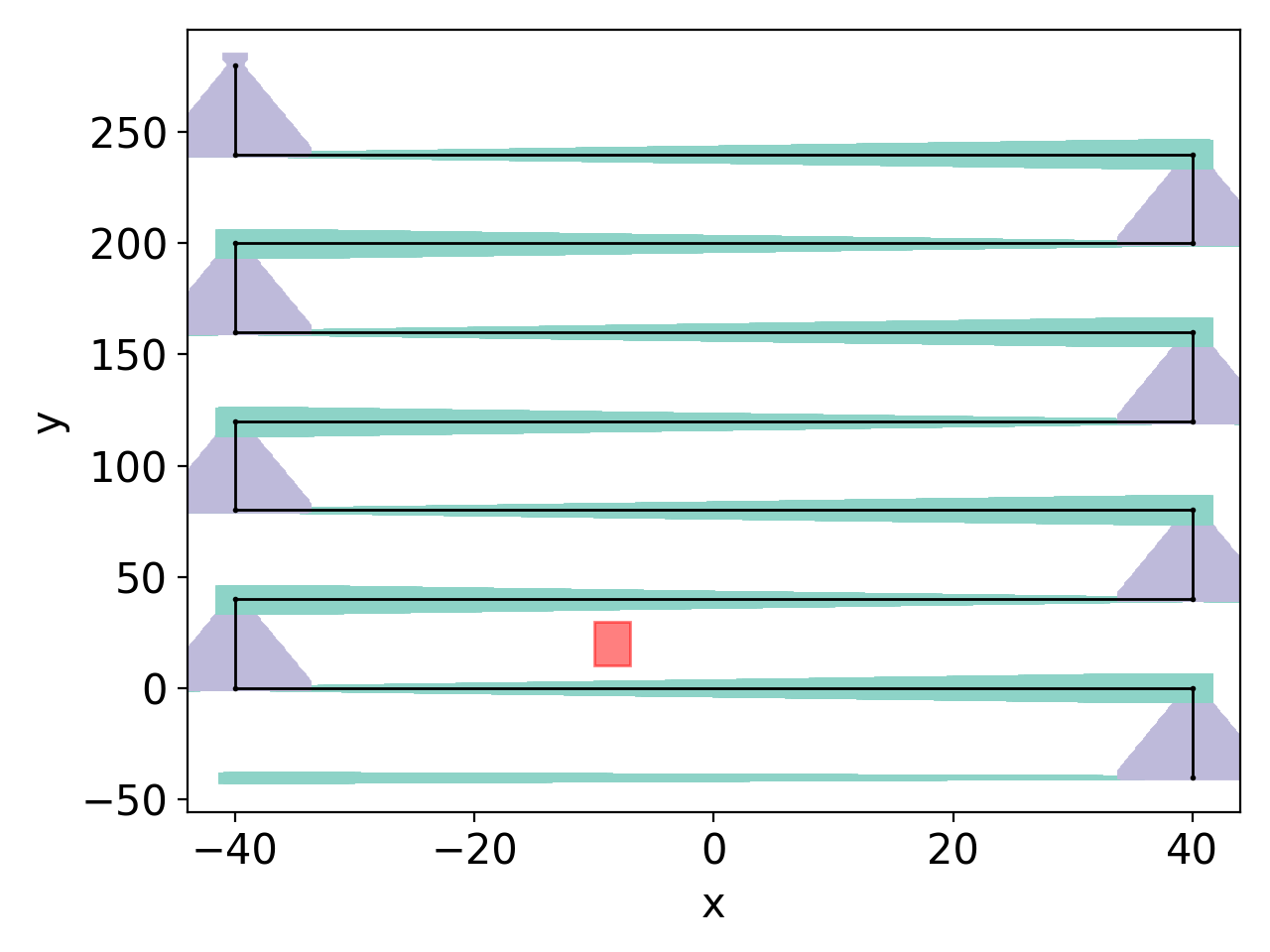}
	\caption{ \label{fig:robot-S-t-r-sym2-real}}
	\vspace{\floatsep}
\end{subfigure}
\hspace{-0.1in}
\begin{subfigure}[t]{0.25\textwidth}
	\centering
	\includegraphics[width=\textwidth]{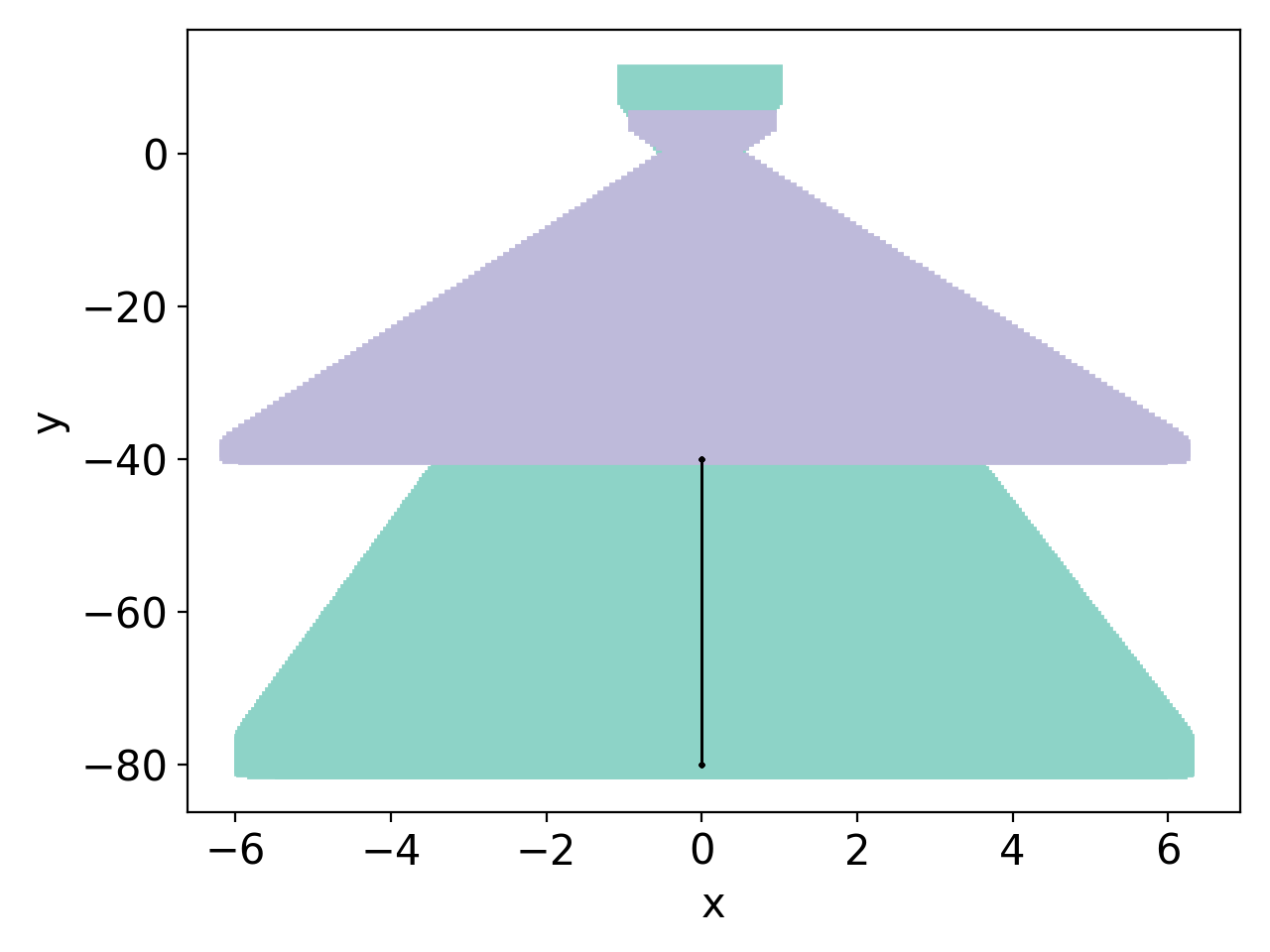}
	\caption{\label{fig:robot-S-t-r-sym2-virtual}}
	\vspace{\floatsep}
\end{subfigure}
\hspace{-0.1in}
\begin{subfigure}[t]{0.25\textwidth}
	\centering
	\includegraphics[width=\textwidth]{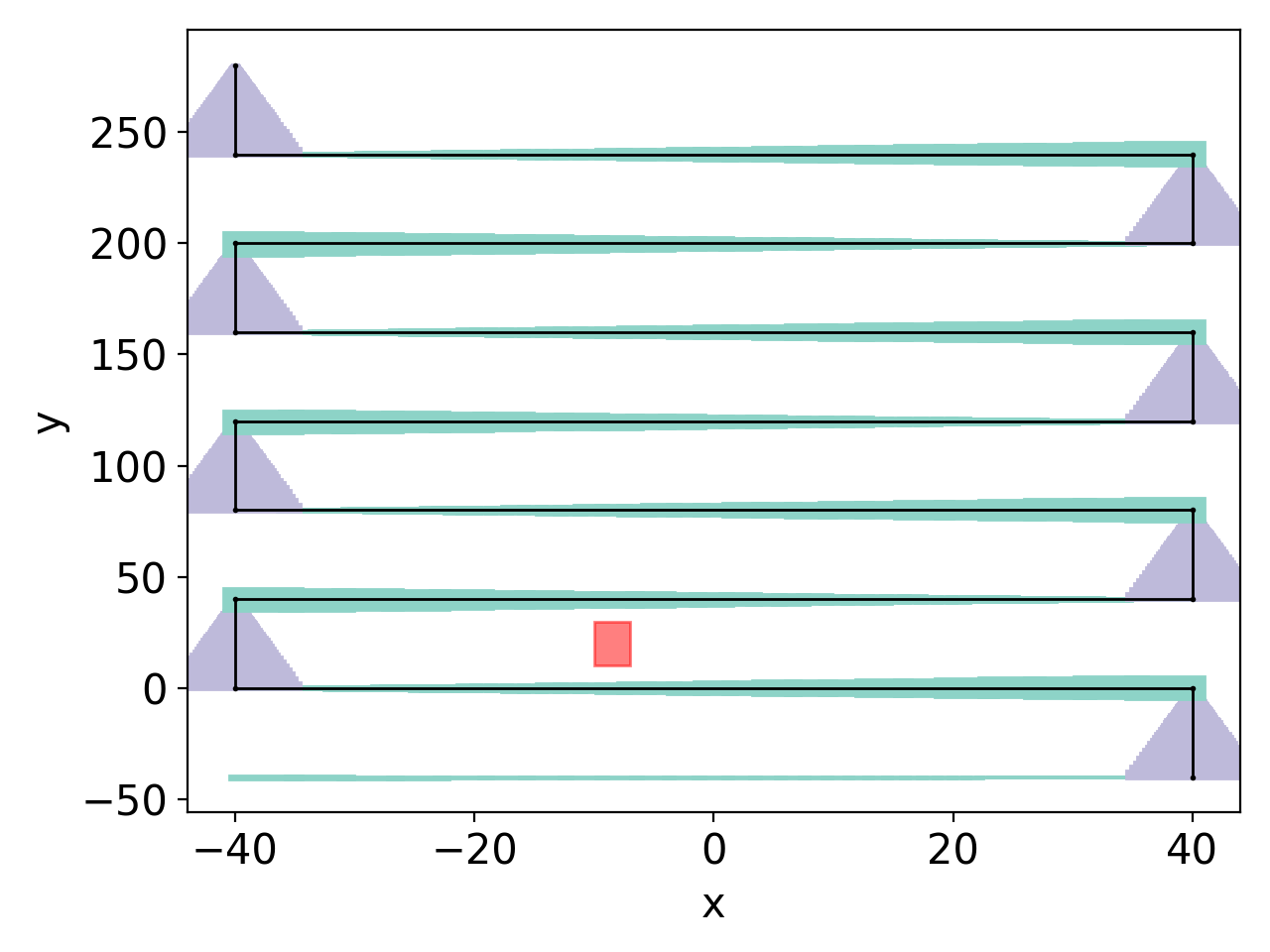}
	\caption{ \label{fig:linear-S-t-r-sym2-real}}
	\vspace{\floatsep}
\end{subfigure}
\hspace{-0.1in}
\begin{subfigure}[t]{0.25\textwidth}
	\centering
	\includegraphics[width=\textwidth]{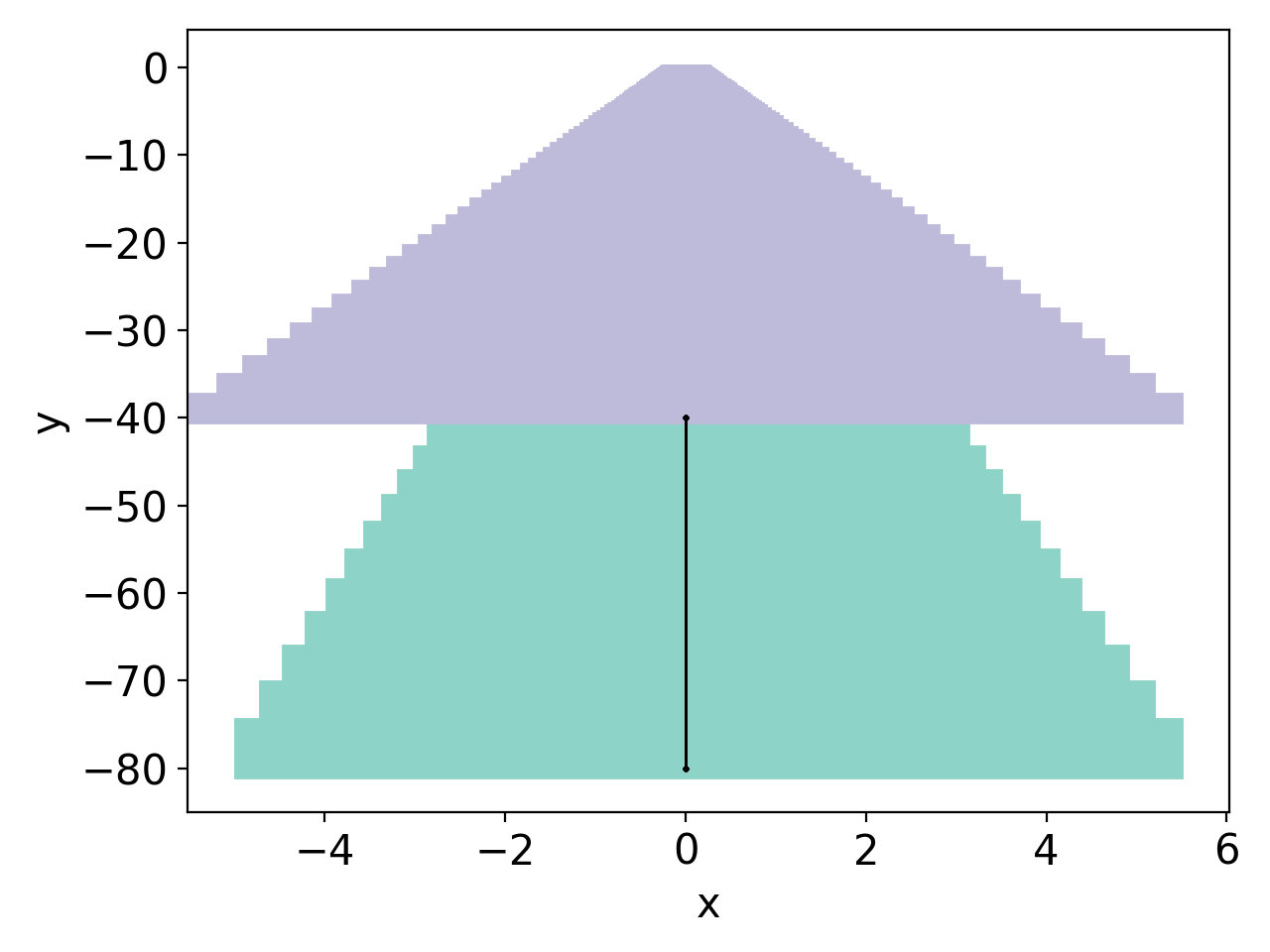}
	\caption{\label{fig:linear-S-t-r-sym2-virtual}}
	\vspace{\floatsep}
\end{subfigure}
\hspace{-0.1in}

	\caption{\scriptsize 
		Reachsets, projected to the position-in-the-plane part of the state, for vehicles visiting sequences of waypoints.
		Each horizontal pair depicts $\Reach_\ha$ (left) and $\Reach_{\ha_v}$ (right) of a scenario. {\em sym} is $\SV$ in all scenarios.
		{\em dynamic} is {\em robot} (left) and {\em linear} (right).
		$\Phi$ is $T$ in the $1^{\expst}$ row (Figures~\ref{fig:robot-rect-t-sym2-real} through \ref{fig:linear-random-t-r-sym2-virtual}), and  $\TR$ in the rest of the figures (Figures~\ref{fig:robot-rect-t-r-sym2-real} through \ref{fig:linear-S-t-r-sym2-virtual}). 
		Figure \ref{fig:robot-rect-t-sym2-virtual} has larger $\#m/e$ and smaller reachset segments than Figure~\ref{fig:robot-rect-t-r-sym2-virtual}, while modeling the same scenario, where {\em path} is {\em rectangle}, as shown in Figures~ \ref{fig:robot-rect-t-sym2-real} and~\ref{fig:robot-rect-t-r-sym2-real}. 
		That shows that smaller $\#m/e$ means larger reachset segments. Figure~\ref{fig:linear-random-t-r-sym2-virtual}, corresponding to {\em path} being {\em random} in Figure~\ref{fig:linear-random-t-r-sym2-real}, has larger $\#m/e$ that \ref{fig:linear-snowflake-t-r-sym2-virtual}, corresponding to {\em path} being  {\em snowflake} in Figure~\ref{fig:linear-snowflake-t-r-sym2-real}, respectively. That shows that when $\Phi$ can relate the trajectories of more modes, it results in smaller $\#m/e$.
		Figure \ref{fig:robot-S-t-r-sym2-virtual} with {\em dynamic} being {\em robot} has larger reachset segments than \ref{fig:linear-S-t-r-sym2-virtual} with {\em dynamic} being {\em linear}, while corresponding to the same $S$-shaped path. That shows that more complex dynamics results in larger reachset segments.
 }
\end{figure*}

\subsection{Results analysis and discussion}
We will discuss several observations using the results of some experiments. To check the same observations for different scenarios, we refer the reader to Table~\ref{table:experimental_results} in the Appendix.

\subsubsection*{\SV\ vs. \NS\ and \SC}
The results of running our implementation on a scenario using all combinations of the three methods with the two virtual maps,
 are shown in Table~\ref{table:robot_S_example}. 
When using the \SV, the numbers of computed reachsets \#{\em co} are 1749 and 1513, for $\Phi$ being $T$ and $\TR$, respectively. These are significantly smaller than those when using the other methods in the first three rows.
The reason is that 11 out of the 16 segments were transformed from $\permodedict $, when $\Phi$ was $T$, and 13 out of 16 when $\Phi$ was $\TR$. Hence, only five segments needed to be computed using the griding method, before a fixed point was reached when $\Phi$ was $T$, and three when $\Phi$ was $\TR$. This resulted in significant decrease in the total number of reachsets requested (\#{\em tot.}). Consequently, the computation time is around 89\% and 56\% less than that of \NS\ and \SC, respectively, when $\Phi$ was $T$, and by 89\% and 83\%, when $\Phi$ was $\TR$. We can see that \SC, that uses symmetry and caching, maintained advantage in computation time over \NS, the standard one, as in \cite{Sibai:TACAS2020}.

When using \SV, the over-approximation errors were 23\% and 140.8\%, for $\Phi$ being $T$ and $\TR$, respectively. These are larger than those of \SC, which were 0\% and 49.4\%, respectively.
This is because of the conservativeness of the abstraction as we discussed at the end of Section~\ref{sec:fsr}. 


\begin{table}[h]
	\vspace{0.1in}
	\caption{ \scriptsize Results showing the advantage of using \SV\ over \NS\ and \SC\ to compute $\Reach_\ha$ for the scenario where {\em dynamic} and {\em path} are {\em robot} and {\em S-shaped}, respectively. \#$m / e$ is  3/4 and  2/2 for the $4^{\mathit{th}}$ and $5^{\mathit{th}}$ rows, respectively.
		\label{table:robot_S_example} }
	\centering
	\begin{tabular}{|l |l| l | l| l | l |l|l|}
		\hline 
		$\Phi$ & sym 
		& \#co & \#re &  \#cp & \#tot. & time & error \\
		\hline 
		 - &\NS &18591  &- &-&18591 &1.3 &- \\
		\hline 
		 T &\SC &6831  &11760 & - &18591  &0.8 & 0\\
		\hline 
		 TR &\SC &2421 &3864 &- &6285& 0.32 & 49.3 \\
		\hline
		T &\SV &1749  &84 &11 &1844 & 0.14 &23.4 \\
		\hline 
		TR&\SV &1513&0&13&1526 &0.14 &140.8 \\
		\hline 
	\end{tabular}
\end{table}

\subsubsection*{Smaller vs. larger virtual automata}


In general, the fewer the modes and edges \#$m/e$ of $\ha_v$,  the less is the number of reachset segments have to be computed before reaching a fixed point. This can be seen, for example, by comparing the $4^{\expth}$ and $5^{\expth}$ rows of Table~\ref{table:robot_S_example}. In the $4^{\expth}$ row, where $\Phi$ was $T$, \#$m/e$ were $3/4$. In the $5^{\expth}$ row, where $\Phi$ was $\TR$, \#$m/e$ were $2/2$. 
Out of the 16 reachset segments had to be computed for the 16 roads in the $S$-shaped path, in the $4^{\expth}$ row, 11 segments were transformed from $\permodedict $, while in the $5^{\expth}$ row, 13 segments. That means that the fixed point was reached earlier when \#$m/e$ of $\ha_v$ were smaller.

On the other hand, smaller \#$m/e$ means larger equivalent sets of modes and edges in $\ha$. Hence, more modes of $\ha$ would be mapped to the same mode of $\ha_v$. That means each reachset segment in $\permodedict $ should cover more cases. That leads to larger reachset segments and more conservativeness of the abstraction. This can be seen by comparing the over-approximation error in the $4^{\expth}$ and $5^{\expth}$ rows. In the $5^{\expth}$ row, it had $140.8$\% error versus $23.4$\% of the $4^{\expth}$ row.  It can also be seen for the rectangle path with robot dynamics scenario by checking  Figures~\ref{fig:robot-rect-t-sym2-real} and ~\ref{fig:robot-rect-t-r-sym2-real}.
Figure~\ref{fig:robot-rect-t-r-sym2-real} has smaller \#$m/e$, but larger reachset than Figures~\ref{fig:robot-rect-t-sym2-real}. 

Additionally, the larger \#$m/e$, the more expensive the fixed point check of equation~(\ref{eq:fixed_point_condition}) is. This can be seen by again comparing the $4^{\expth}$ and $5^{\expth}$ rows. The $4^{\expth}$ row had larger \#{\em co} and \#{\em tot.} than the $5^{\expth}$ row, and the latter reached fixed point earlier. Both of these reasons should have lead the $5^{\expth}$ row to have faster computation time if there was no fixed-point check after each segment computation. However, they were the same.

Larger reachsets are also more expensive to compute since, as mentioned earlier, they are computed as a union of reachsets with smaller initial sets of fixed size, the grid cells size. Thus, more cells reachsets are computed because of larger reachsets causing larger per-mode initial sets.
This overhead would lead to larger computation time for some of the scenarios with smaller $\ha_v$ than those with larger ones, despite reaching the fixed point earlier. For example, the computation time for the $3^{\exprd}$ row was larger than that of the $2^{\expnd}$ row in Table~\ref{table:robot_example}, although the former had smaller $\ha_v$ and reached fixed point earlier.
\begin{table}[h]
	\vspace{0.1in}
	\caption{ \scriptsize Results showing that smaller $\# m/e$ might cause more computations and larger reachsets than one with larger $\# m/e$. Scenario has {\em dynamic} and {\em path} being {\em robot} and {\em rectangle}, respectively. \NS\ is used in the $1^{\expst}$ row  and \SV\ in the $2^{\expnd}$ and $3^{\exprd}$ rows.
		\label{table:robot_example} }
	\centering
	\begin{tabular}{|l |l| l | l| l | l |l|l|}
	\hline 
	$\Phi$ & \#$m/e$ 
	& \#co & \#re &  \#cp & \#tot. & time & error \\
	\hline
	- & - &8001 &-& -& 8001 &0.6 &  - \\
	\hline 
	T & 5/5 &9306 &0 &11 &9317 &0.84 & 326\\
	\hline 
	TR&3/3 &30456 &273 &13&30742 &3.42 & 2142\\
	\hline 
\end{tabular}
\end{table}


\subsubsection*{Effective vs. less effective symmetries}
The more effective $\Phi$ is in grouping different modes of $\ha$, the smaller $\ha_v$ is. In other words, 
the coarser the partition of $P$ that $\Phi$ creates, the smaller are $\#m/e$. This can be seen in our experiments by comparing $\#m/e$ for $\Phi$ being $T$ or $\TR$ for the scenarios in Tables~\ref{table:robot_S_example} and \ref{table:robot_example}. For example, in Table~\ref{table:robot_example}, \#$m/e = 5/5$ when $\Phi$ is $T$, and \#$m/e = 3/3$, when $\Phi$ being $\TR$. This reflects that $\TR$ is able to group more modes than $T$. This is expected since $\TR$ relates translated and rotated trajectories of the robot, not just translated ones.
Another point of view can be seen by examining the scenarios considered in Table~\ref{table:structured_vs_unstructured}. Although the total number of segments for the Koch-snowflake path is 16, larger than that of the random path which is 14, $\#m/e$ of the latter were 7/11, much larger than those of the former 2/2. This can also be seen by comparing Figures~\ref{fig:linear-snowflake-t-r-sym2-virtual} and and~\ref{fig:linear-random-t-r-sym2-virtual}. That means that $\TR$ was not able to group the different modes of the random path, due to the different lengths of the roads. On the other hand, it was able to group all the modes of the snowflake path in two modes, since all the roads, except the first, are translated and rotated versions of each other.

\begin{table}[h]
	\vspace{0.1in}
	\caption{\scriptsize Results showing that $\Phi = \TR$ results in smaller $\ha_v$, i.e. smaller $\#m/e$, than $\Phi = T$. Dynamics are linear and {\em sym} is \SV.
		\label{table:structured_vs_unstructured} }
	\centering
	\begin{tabular}{|l|l | l | l| l | l |l|l|}
		\hline 
		path & $\Phi$ & \#$m/e$ & 
		\#co & \#re &  \#cp & \#tot. & time  \\
		\hline
		rand. & T & 12/13 & 321 &93& 0& 414 &0.29  \\
		\hline
		rand. & TR & 7/11 & 228.6 &224.4& 2& 455 &0.33  \\
		\hline 
		Ko.& T & 6/8 & 392 &208& 4 &604  &0.39  \\
		\hline 
		Ko.& TR & 2/2 & 180 &0& 2 &194  &0.17  \\
		\hline 
	\end{tabular}
\end{table}

\subsubsection*{Simple vs. complex continuous dynamics}
The more complex the continuous dynamics are, the more the cells reachsets have to be computed. This is shown by comparing \#{\em tot.} of the linear and robot dynamics in Table~\ref{table:linear_vs_nonlinear}. They have the same sizes of initial sets, same $\#m/e$, and they reach the fixed point after the same number of reachset segments. Yet, the robot had \#{\em tot.} of 1844 versus 654 for the linear dynamics. Also, this can be seen by comparing the sizes of the reachsets of Figures~ \ref{fig:robot-S-t-r-sym2-virtual} and \ref{fig:linear-S-t-r-sym2-virtual}. The green reachset segment of the robot has an initial set width from -6 to 6, while that of the linear model ranges from around -5 to 5. That is because the robot is less stable than the linear one.

\begin{table}[h]
	\vspace{0.1in}
	\caption{\scriptsize Results showing that stable linear dynamics results in fewer reachsets need to be computed, than the more complex robot dynamics. {\em path} is {\em S-shaped}.
		\label{table:linear_vs_nonlinear} }
	\centering
	\begin{tabular}{|l | l| l | l| l | l |l|l|}
		\hline 
		dyn. & $\Phi$ &  
		 \# co & \# re &  \# cp & \# tot. & time & error \\
		\hline
		robot & T & 1749 &84 & 11& 1844 &0.14 &  23.4 \\
		\hline
		robot & TR &1513 &0 & 13& 1526 &0.14 &  140.8 \\
		\hline 
		linear & T &603 &40& 11&654  &0.46 &  7.3 \\
		\hline 
		linear & TR &601 &36& 13 &650  &0.45 &  103.8 \\
		\hline 
	\end{tabular}
\end{table}

\section{Conclusion}
\label{sec:conclusion}
We presented the first symmetry-based abstractions of hybrid automata. Our abstractions create automata with fewer number of modes and edges than the concrete ones by representing sets of modes with single ones. Symmetry maps transform trajectories of a concrete mode to trajectories of its corresponding abstract one, and vice versa.
We showed a forward simulation relation that proves the soundness of our abstraction.
Moreover, we showed how these abstractions would accelerate reachset computation and enable unbounded-time safety verification. The fewer number of modes of the abstract automaton makes its reachset computation reach a fixed point earlier than that of the concrete one. 
Such a fixed point would result on a per-mode reachsets that can be transformed to construct the reachset of any concrete mode. 
We implemented our approach in Python 3 and showed the advantage of our approach over existing methods and the different parameters that affect our abstraction quality, in a sequence of reachability analysis experiments.

 \bibliographystyle{IEEEtran}
\bibliography{sayan1,hussein}

\begin{thebibliography}{10}
\providecommand{\url}[1]{#1}
\csname url@samestyle\endcsname
\providecommand{\newblock}{\relax}
\providecommand{\bibinfo}[2]{#2}
\providecommand{\BIBentrySTDinterwordspacing}{\spaceskip=0pt\relax}
\providecommand{\BIBentryALTinterwordstretchfactor}{4}
\providecommand{\BIBentryALTinterwordspacing}{\spaceskip=\fontdimen2\font plus
\BIBentryALTinterwordstretchfactor\fontdimen3\font minus
  \fontdimen4\font\relax}
\providecommand{\BIBforeignlanguage}[2]{{%
\expandafter\ifx\csname l@#1\endcsname\relax
\typeout{** WARNING: IEEEtran.bst: No hyphenation pattern has been}%
\typeout{** loaded for the language `#1'. Using the pattern for}%
\typeout{** the default language instead.}%
\else
\language=\csname l@#1\endcsname
\fi
#2}}
\providecommand{\BIBdecl}{\relax}
\BIBdecl

\bibitem{teel2012}
\BIBentryALTinterwordspacing
R.~Goebel, R.~G. Sanfelice, and A.~R. Teel, \emph{Hybrid Dynamical Systems:
  Modeling, Stability, and Robustness}.\hskip 1em plus 0.5em minus 0.4em\relax
  Princeton University Press, 2012. [Online]. Available:
  \url{http://www.jstor.org/stable/j.ctt7s02z}
\BIBentrySTDinterwordspacing

\bibitem{vdSS2000}
A.~van~der Schaft and H.~Schumacher, \emph{An Introduction to Hybrid Dynamical
  Systems}.\hskip 1em plus 0.5em minus 0.4em\relax London: Springer, 2000.

\bibitem{AlurBook}
R.~Alur, \emph{Principles of Cyber-Physical Systems}.\hskip 1em plus 0.5em
  minus 0.4em\relax The MIT Press, 2015.

\bibitem{HKPV98}
T.~A. Henzinger, P.~W. Kopke, A.~Puri, and P.~Varaiya, ``What's decidable about
  hybrid automata?'' \emph{Journal of Computer and System Sciences}, vol.~57,
  pp. 94--124, 1998.

\bibitem{alurdiscrete}
R.~Alur, T.~Henzinger, G.~Lafferriere, and G.~Pappas, ``Discrete abstractions
  of hybrid systems,'' in \emph{Proceedings of the IEEE}, 2000.

\bibitem{alur95algorithmic}
R.~Alur, C.~Courcoubetis, N.~Halbwachs, T.~A. Henzinger, P.-H. Ho, X.~Nicollin,
  A.~Olivero, J.~Sifakis, and S.~Yovine, ``The algorithmic analysis of hybrid
  systems,'' \emph{Theoretical Computer Science}, vol. 138, no.~1, pp. 3--34,
  1995.

\bibitem{VPVD:HSCC2008}
V.~Vladimerou, P.~Prabhakar, M.~Viswanathan, and G.~E. Dullerud, ``{STORMED}
  hybrid systems,'' in \emph{Automata, Languages and Programming, ICALP 2008},
  ser. LNCS, vol. 5126.\hskip 1em plus 0.5em minus 0.4em\relax Springer, 2008,
  pp. 136--147.

\bibitem{lafferriere2000minimal}
G.~Lafferriere, G.~J. Pappas, and S.~Sastry, ``O-minimal hybrid systems,''
  \emph{Mathematics of control, signals and systems}, vol.~13, no.~1, pp.
  1--21, 2000.

\bibitem{GIRARD2012947}
\BIBentryALTinterwordspacing
A.~Girard, ``Controller synthesis for safety and reachability via approximate
  bisimulation,'' \emph{Automatica}, vol.~48, no.~5, pp. 947 -- 953, 2012.
  [Online]. Available:
  \url{http://www.sciencedirect.com/science/article/pii/S000510981200088X}
\BIBentrySTDinterwordspacing

\bibitem{TabuadaPL02}
P.~Tabuada, G.~J. Pappas, and P.~U. Lima, ``Composing abstractions of hybrid
  systems.'' in \emph{HSCC 2002}, ser. LNCS, C.~Tomlin and M.~R. Greenstreet,
  Eds., vol. 2289.\hskip 1em plus 0.5em minus 0.4em\relax Springer, 2002, pp.
  436--450.

\bibitem{tabuadaabstraction}
P.~Tabuada and G.~J. Pappas, ``Hybrid abstractions that preserve timed
  languages,'' in \emph{Proceedings of the 4th International Workshop on Hybrid
  Systems: Computation and Control}, ser. HSCC ’01.\hskip 1em plus 0.5em
  minus 0.4em\relax Berlin, Heidelberg: Springer-Verlag, 2001, p. 501–514.

\bibitem{FehnkerCJK05}
A.~Fehnker, E.~M. Clarke, S.~K. Jha, and B.~H. Krogh, ``Refining abstractions
  of hybrid systems using counterexample fragments,'' in \emph{HSCC`2005}, ser.
  Lecture Notes in Computer Science, M.~Morari and L.~Thiele, Eds., vol.
  3414.\hskip 1em plus 0.5em minus 0.4em\relax Springer, 2005, pp. 242--257.

\bibitem{alur2006counterexample}
R.~Alur, T.~Dang, and F.~Ivan{\v{c}}i{\'c}, ``{Counterexample-guided predicate
  abstraction of hybrid systems},'' \emph{Theoretical Computer Science}, vol.
  354, no.~2, pp. 250--271, 2006.

\bibitem{HARE:RoohiP017}
\BIBentryALTinterwordspacing
N.~Roohi, P.~Prabhakar, and M.~Viswanathan, ``{HARE:} {A} hybrid abstraction
  refinement engine for verifying non-linear hybrid automata,'' in \emph{Tools
  and Algorithms for the Construction and Analysis of Systems - 23rd
  International Conference, {TACAS} 2017, Held as Part of the European Joint
  Conferences on Theory and Practice of Software, {ETAPS} 2017, Uppsala,
  Sweden, April 22-29, 2017, Proceedings, Part {I}}, 2017, pp. 573--588.
  [Online]. Available: \url{https://doi.org/10.1007/978-3-662-54577-5\_33}
\BIBentrySTDinterwordspacing

\bibitem{HARE:PrabhakarDM015}
\BIBentryALTinterwordspacing
P.~Prabhakar, P.~S. Duggirala, S.~Mitra, and M.~Viswanathan, ``Hybrid
  automata-based {CEGAR} for rectangular hybrid systems,'' \emph{Formal Methods
  in System Design}, vol.~46, no.~2, pp. 105--134, 2015. [Online]. Available:
  \url{https://doi.org/10.1007/s10703-015-0225-4}
\BIBentrySTDinterwordspacing

\bibitem{sriram_abstraction}
\BIBentryALTinterwordspacing
S.~Sankaranarayanan, ``Change-of-bases abstractions for non-linear hybrid
  systems,'' \emph{Nonlinear Analysis: Hybrid Systems}, vol.~19, pp. 107 --
  133, 2016. [Online]. Available:
  \url{http://www.sciencedirect.com/science/article/pii/S1751570X15000540}
\BIBentrySTDinterwordspacing

\bibitem{HsuMMS18}
\BIBentryALTinterwordspacing
K.~Hsu, R.~Majumdar, K.~Mallik, and A.-K. Schmuck, ``Lazy abstraction-based
  control for safety specifications,'' in \emph{57th {IEEE} Conference on
  Decision and Control, {CDC} 2018, Miami, FL, USA, December 17-19, 2018},
  2018, pp. 4902--4907. [Online]. Available:
  \url{https://doi.org/10.1109/CDC.2018.8619659}
\BIBentrySTDinterwordspacing

\bibitem{thaoOderHybridization}
T.~Dang, O.~Maler, and R.~Testylier, ``Accurate hybridization of nonlinear
  systems,'' 01 2010, pp. 11--20.

\bibitem{symmetryandstabilityfeedback}
P.~Mehta, G.~Hagen, and A.~Banaszuk, ``Symmetry and symmetry-breaking for a
  wave equation with feedback,'' \emph{SIAM J. Applied Dynamical Systems},
  vol.~6, pp. 549--575, 01 2007.

\bibitem{bonnabel2008symmetry}
S.~Bonnabel, P.~Martin, and P.~Rouchon, ``Symmetry-preserving observers,''
  \emph{IEEE Transactions on Automatic Control}, vol.~53, no.~11, pp.
  2514--2526, 2008.

\bibitem{controlledsymmetries_passivewalking}
M.~W. {Spong} and F.~{Bullo}, ``Controlled symmetries and passive walking,''
  \emph{IEEE Transactions on Automatic Control}, vol.~50, no.~7, pp.
  1025--1031, July 2005.

\bibitem{Gerard2006NeuronalNA}
L.~G'erard and J.-J.~E. Slotine, ``Neuronal networks and controlled symmetries,
  a generic framework,'' 2006.

\bibitem{ConservationLawsFromNoethers}
J.~Hanc, S.~Tuleja, and M.~Hancova, ``Symmetries and conservation laws:
  Consequences of noether's theorem,'' \emph{American Journal of Physics - AMER
  J PHYS}, vol.~72, pp. 428--435, 04 2004.

\bibitem{Noether1918}
\BIBentryALTinterwordspacing
E.~Noether, ``\BIBforeignlanguage{ger}{Invarianten beliebiger
  differentialausdrücke},'' \emph{\BIBforeignlanguage{ger}{Nachrichten von der
  Gesellschaft der Wissenschaften zu Göttingen, Mathematisch-Physikalische
  Klasse}}, vol. 1918, pp. 37--44, 1918. [Online]. Available:
  \url{http://eudml.org/doc/59011}
\BIBentrySTDinterwordspacing

\bibitem{Spaceex}
G.~Frehse, C.~L. Guernic, A.~Donz{\'e}, S.~Cotton, R.~Ray, O.~Lebeltel,
  R.~Ripado, A.~Girard, T.~Dang, and O.~Maler, ``{SpaceEx}: Scalable
  verification of hybrid systems,'' in \emph{Computer Aided Verification}, ser.
  Lecture Notes in Computer Science, G.~Gopalakrishnan and S.~Qadeer, Eds.,
  vol. 6806.\hskip 1em plus 0.5em minus 0.4em\relax Springer, 2011, pp.
  379--395.

\bibitem{bak2017hylaa}
S.~Bak and P.~S. Duggirala, ``Hylaa: A tool for computing simulation-equivalent
  reachability for linear systems,'' in \emph{Proceedings of the 20th
  International Conference on Hybrid Systems: Computation and Control}.\hskip
  1em plus 0.5em minus 0.4em\relax ACM, 2017, pp. 173--178.

\bibitem{verifbillion}
\BIBentryALTinterwordspacing
S.~Bak, H.-D. Tran, and T.~T. Johnson, ``Numerical verification of affine
  systems with up to a billion dimensions,'' in \emph{Proceedings of the 22nd
  ACM International Conference on Hybrid Systems: Computation and Control},
  ser. HSCC ’19.\hskip 1em plus 0.5em minus 0.4em\relax New York, NY, USA:
  Association for Computing Machinery, 2019, p. 23–32. [Online]. Available:
  \url{https://doi.org/10.1145/3302504.3311792}
\BIBentrySTDinterwordspacing

\bibitem{NimaHSCC}
N.~Roohi, Y.~Wang, M.~West, G.~Dullerud, and M.~Viswanathan, ``Statistical
  verification of {T}oyota powertrain control verification benchmark,'' in
  \emph{Proceedings of the 20th international conference on Hybrid systems:
  computation and control}.\hskip 1em plus 0.5em minus 0.4em\relax ACM, 2017.

\bibitem{Powertrain:JinDKUB14}
\BIBentryALTinterwordspacing
X.~Jin, J.~V. Deshmukh, J.~Kapinski, K.~Ueda, and K.~R. Butts, ``Powertrain
  control verification benchmark,'' in \emph{17th International Conference on
  Hybrid Systems: Computation and Control (part of {CPS} Week), HSCC'14,
  Berlin, Germany, April 15-17, 2014}, 2014, pp. 253--262. [Online]. Available:
  \url{https://doi.org/10.1145/2562059.2562140}
\BIBentrySTDinterwordspacing

\bibitem{FanQM18}
\BIBentryALTinterwordspacing
C.~Fan, B.~Qi, and S.~Mitra, ``Data-driven formal reasoning and their
  applications in safety analysis of vehicle autonomy features,'' \emph{{IEEE}
  Design {\&} Test}, vol.~35, no.~3, pp. 31--38, 2018. [Online]. Available:
  \url{https://doi.org/10.1109/MDAT.2018.2799804}
\BIBentrySTDinterwordspacing

\bibitem{DuggiralaFM015}
\BIBentryALTinterwordspacing
P.~S. Duggirala, C.~Fan, S.~Mitra, and M.~Viswanathan, ``Meeting a powertrain
  verification challenge,'' in \emph{Computer Aided Verification - 27th
  International Conference, {CAV} 2015, San Francisco, CA, USA, July 18-24,
  2015, Proceedings, Part {I}}, 2015, pp. 536--543. [Online]. Available:
  \url{https://doi.org/10.1007/978-3-319-21690-4\_37}
\BIBentrySTDinterwordspacing

\bibitem{KushnerBCFMS19}
\BIBentryALTinterwordspacing
T.~Kushner, B.~W. Bequette, F.~Cameron, G.~P. Forlenza, D.~M. Maahs, and
  S.~Sankaranarayanan, ``Models, devices, properties, and verification of
  artificial pancreas systems,'' in \emph{Automated Reasoning for Systems
  Biology and Medicine}, 2019, pp. 93--131. [Online]. Available:
  \url{https://doi.org/10.1007/978-3-030-17297-8\_4}
\BIBentrySTDinterwordspacing

\bibitem{Sankaranarayanan17-Model-based}
\BIBentryALTinterwordspacing
S.~Sankaranarayanan, S.~A. Kumar, F.~Cameron, B.~W. Bequette, G.~E. Fainekos,
  and D.~M. Maahs, ``Model-based falsification of an artificial pancreas
  control system,'' \emph{{SIGBED} Review}, vol.~14, no.~2, pp. 24--33, 2017.
  [Online]. Available: \url{https://doi.org/10.1145/3076125.3076128}
\BIBentrySTDinterwordspacing

\bibitem{GrosuCAV}
R.~Grosu, G.~Batt, F.~H. Fenton, J.~Glimm, C.~Guernic, S.~A. Smolka, and
  E.~Bartocci, ``From cardiac cells to genetic regulatory networks,'' in
  \emph{Computer Aided Verification}, ser. Lecture Notes in Computer Science,
  G.~Gopalakrishnan and S.~Qadeer, Eds.\hskip 1em plus 0.5em minus 0.4em\relax
  Springer Berlin Heidelberg, 2011, vol. 6806, pp. 396--411.

\bibitem{CAS13}
X.~Chen, E.~Ábrahám, and S.~Sankaranarayanan, ``Flow*: An analyzer for
  non-linear hybrid systems,'' in \emph{Computer Aided Verification}, ser.
  Lecture Notes in Computer Science, N.~Sharygina and H.~Veith, Eds.\hskip 1em
  plus 0.5em minus 0.4em\relax Springer Berlin Heidelberg, 2013, vol. 8044, pp.
  258--263.

\bibitem{Althoff2015a}
M.~Althoff, ``An introduction to cora 2015,'' in \emph{Proc. of the Workshop on
  Applied Verification for Continuous and Hybrid Systems}, 2015.

\bibitem{C2E2paper}
P.~S. Duggirala, S.~Mitra, M.~Viswanathan, and M.~Potok, ``C2e2: A verification
  tool for stateflow models,'' in \emph{Tools and Algorithms for the
  Construction and Analysis of Systems}, C.~Baier and C.~Tinelli, Eds.\hskip
  1em plus 0.5em minus 0.4em\relax Berlin, Heidelberg: Springer Berlin
  Heidelberg, 2015, pp. 68--82.

\bibitem{FanKJM16:Emsoft}
C.~Fan, J.~Kapinski, X.~Jin, and S.~Mitra, ``Locally optimal reach set
  over-approximation for nonlinear systems,'' in \emph{{EMSOFT}}.\hskip 1em
  plus 0.5em minus 0.4em\relax {ACM}, 2016, pp. 6:1--6:10.

\bibitem{FanM15:ATVA}
C.~Fan and S.~Mitra, ``Bounded verification with on-the-fly discrepancy
  computation,'' in \emph{{ATVA}}, ser. Lecture Notes in Computer Science, vol.
  9364.\hskip 1em plus 0.5em minus 0.4em\relax Springer, 2015, pp. 446--463.

\bibitem{Chen2015ReachabilityAO}
X.~Chen, ``Reachability analysis of non-linear hybrid systems using taylor
  models,'' 2015.

\bibitem{Sibai:ATVA2019}
H.~Sibai, N.~Mokhlesi, and S.~Mitra, ``Using symmetry transformations in
  equivariant dynamical systems for their safety verification,'' in
  \emph{Automated Technology for Verification and Analysis}, 2019, pp. 1--17.

\bibitem{Sibai:TACAS2020}
H.~Sibai, N.~Mokhlesi, C.~Fan, and S.~Mitra, ``Multi-agent safety verification
  using symmetry transformations,'' in \emph{Tools and Algorithms for the
  Construction and Analysis of Systems}, A.~Biere and D.~Parker, Eds.\hskip 1em
  plus 0.5em minus 0.4em\relax Cham: Springer International Publishing, 2020,
  pp. 173--190.

\bibitem{maidens2018exploiting}
J.~Maidens and M.~Arcak, ``Exploiting symmetry for discrete-time reachability
  computations,'' 2018.

\bibitem{ACHH93}
R.~Alur, C.~C. T.~A. Henzinger, and P.~H. Ho., ``Hybrid automata: an
  algorithmic approach to the specification and verification of hybrid
  systems,'' in \emph{Hybrid Systems}, ser. LNCS, R.~L. Grossman, A.~Nerode,
  A.~P. Ravn, and H.~Rischel, Eds., vol. 736.\hskip 1em plus 0.5em minus
  0.4em\relax Springer-Verlag, 1993, pp. 209--229.

\bibitem{TIOAmon}
D.~K. Kaynar, N.~Lynch, R.~Segala, and F.~Vaandrager, \emph{The Theory of Timed
  {I/O} Automata}, ser. Synthesis Lectures on Computer Science.\hskip 1em plus
  0.5em minus 0.4em\relax Morgan Claypool, November 2005, also available as
  Technical Report MIT-LCS-TR-917, MIT.

\bibitem{Mitra07PhD}
\BIBentryALTinterwordspacing
S.~Mitra, ``A verification framework for hybrid systems,'' Ph.D. dissertation,
  Massachusetts Institute of Technology, Cambridge, MA 02139, September 2007.
  [Online]. Available:
  \url{http://people.csail.mit.edu/mitras/research/thesis.pdf}
\BIBentrySTDinterwordspacing

\bibitem{BLAZIC20111001}
\BIBentryALTinterwordspacing
S.~Blazic, ``A novel trajectory-tracking control law for wheeled mobile
  robots,'' \emph{Robotics and Autonomous Systems}, vol.~59, no.~11, pp. 1001
  -- 1007, 2011. [Online]. Available:
  \url{http://www.sciencedirect.com/science/article/pii/S0921889011001023}
\BIBentrySTDinterwordspacing

\bibitem{russo2011symmetries}
G.~Russo and J.-J.~E. Slotine, ``Symmetries, stability, and control in
  nonlinear systems and networks,'' \emph{Physical Review E}, vol.~84, no.~4,
  p. 041929, 2011.

\bibitem{GJP:IFAC2006}
A.~Girard, A.~A. Julius, and G.~J. Pappas, ``Approximate simulation relations
  for hybrid systems,'' in \emph{IFAC Analysis and Design of Hybrid Systems},
  Alghero, Italy, June 2006.

\bibitem{FanQMV:CAV2017}
C.~Fan, B.~Qi, S.~Mitra, and M.~Viswanathan, ``Data-driven verification and
  compositional reasoning for automotive systems,'' in \emph{Computer Aided
  Verification}.\hskip 1em plus 0.5em minus 0.4em\relax Springer International
  Publishing, 2017, pp. 441--461.

\bibitem{Flow}
X.~Chen, E.~Ábrahám, and S.~Sankaranarayanan, ``Flow*: An analyzer for
  non-linear hybrid systems,'' in \emph{Computer Aided Verification}, ser.
  Lecture Notes in Computer Science, N.~Sharygina and H.~Veith, Eds.\hskip 1em
  plus 0.5em minus 0.4em\relax Springer Berlin Heidelberg, 2013, vol. 8044, pp.
  258--263.

\bibitem{Tulip}
\BIBentryALTinterwordspacing
T.~Wongpiromsarn, U.~Topcu, N.~Ozay, H.~Xu, and R.~M. Murray, ``Tulip: A
  software toolbox for receding horizon temporal logic planning,'' in
  \emph{Proceedings of the 14th International Conference on Hybrid Systems:
  Computation and Control}, ser. HSCC ’11.\hskip 1em plus 0.5em minus
  0.4em\relax New York, NY, USA: Association for Computing Machinery, 2011, p.
  313–314. [Online]. Available: \url{https://doi.org/10.1145/1967701.1967747}
\BIBentrySTDinterwordspacing

\end{thebibliography}

\appendices

\begin{table*}
	\centering
	\small
	\caption{\small Experimental results. 
		\label{table:experimental_results}}
	\begin{tabular}{|l | l | l | l | l | l | l| l| l | l |l|l|l|}
		\hline 
		\multicolumn{5}{|l|}{Scenarios}    & \multicolumn{8}{|l|}{Results}  \\
		\hline
		id & model & path & virtual map & sym & \#co & \#ret. &  \#cp  & \#tot. & $\#m/e$ & \# pre-f & time & error \\
		\hline
		1 & robot & rectangle &  - & \NS &8001 &-& -& 8001 &- &- &0.6 &  - \\
		\hline 
		2 & robot & rectangle & trans& \SC &3945 &4056& - &8001 &- &- & 0.42&  0 \\
		\hline 
		3 & robot & rectangle & trans.+rot.& \SC &4206 &2008& - &6214 &- &- &0.43 &  -30 \\
		\hline 
		4 & robot & rectangle & trans.&\SV &9306 &0& 11 &9317 &5/5 &5/16 &0.84 & 326  \\
		\hline 
		5 & robot & rectangle & trans.+rot. &\SV &30456& 273 &13 &30742 &3/3 &3/16 &3.42 & 2142\\
		\hline 
		6 & linear & snowflake & - &\NS &835&-&-&835 &- &- &0.52 & -  \\
		\hline 
		7 & linear & snowflake & trans.&\SC &475&388& - &863 &- &- &0.52 &  3.7 \\
		\hline 
		8 & linear & snowflake & trans.+rot.&\SC &234&732& - &966 &- &- &0.5 & 93.5  \\
		\hline 
		9 & linear & snowflake & trans.&\SV &392&208 &4& 604 &6/8 &12/16 &0.39 & 3.2  \\
		\hline 
		10 & linear & snowflake & trans.+rot.&\SV &180&0&14&194 &2/2 &2/16 &0.17 & 137.4 \\
		\hline  
		11 & linear & random & - &\NS &351& - & - &351 &- &- &0.23 & - \\
		\hline 
		12 & linear & random & trans.&\SC &321& 93 &- &414 &- &- &0.37 &19  \\
		\hline 
		13 & linear & random & trans.+rot.&\SC &186.3&260.7& - &447  &- &- &0.28 & 118.7 \\
		\hline 
		14 & linear & random & trans.&\SV &321 &93&0&414 &12/13 &14/14 &0.29 & 19 \\
		\hline 
		15 & linear & random & trans.+rot.&\SV &228.6&224.4&2&455 &7/11 &12/14 &0.33 & 165.23  \\
		\hline
		16 & linear & S-shaped & - &\NS &5963&-&-&5963 &- &- &3.85 & -\\
		\hline
		17 & linear & S-shaped & trans.&\SC &2419&3544&-&5963 &- &- &3.14 & 0 \\
		\hline
		18 & linear & S-shaped & trans.+rot.&\SC &2167&3808& -&5975  &- &- &2.97 & 0.6 \\
		\hline
		19 & linear & S-shaped & trans.&\SV &603&40&11&654 &3/4 &5/16 &0.46 &7.3 \\
		\hline
		20 & linear & S-shaped & trans.+rot.&\SV &601&36&13&650  &2/2 &3/16 &0.45 &103.8 \\
		\hline 
		21 & robot & S-shaped & - &\NS &18591  &- &-&18591&-&- &1.3 &- \\
		\hline 
		22 & robot & S-shaped & trans.&\SC &6831  &11760 & - &18591 & -&- &0.8 & 0\\
		\hline 
		23 & robot & S-shaped & trans.+rot.&\SC &2421 &3864 &- &6285& -&- &0.32 & 49.3 \\
		\hline
		24 & robot & S-shaped & trans.&\SV &1749  &84 &11 &1844 & 3/4& 5/16 &0.14 &23.4 \\
		\hline 
		25 & robot & S-shaped & trans.+rot.&\SV &1513&0&13&1526 &2/2 &3/16 &0.14 &140.8 \\
		\hline
	\end{tabular}
\end{table*}

\section*{Acknowledgment}
The authors are supported by a research grant from The Boeing Company and a research grant from NSF (CPS 1739966). We would like to thank John L. Olson and Arthur S. Younger from The Boeing Company for valuable technical discussions.

\end{document}